\def\titlerunning#1{\gdef\titrun{#1}}
\def\author#1{\gdef\autrun{\def\and{\unskip, }#1}\gdef\@author{#1}}
\def\address#1{{\def\and{\\\hspace*{18pt}}\renewcommand{\thefootnote}{}%
\footnote {#1}}%
\markboth{\autrun}{\titrun}}
\def\email#1{e-mail: #1}
\def\subjclass#1{{\renewcommand{\thefootnote}{}%
\footnote{\emph{Mathematics Subject Classification (2010):} #1}}}
\def\keywords#1{\par\medskip
\noindent\textbf{Keywords.} #1}
\newtheorem{thm}{Theorem}[section]
\newtheorem{cor}[thm]{Corollary}
\newtheorem{lem}[thm]{Lemma}
\newtheorem{prop}[thm]{Proposition}
\theoremstyle{definition}
\newtheorem{defi}[thm]{Definition}
\newtheorem{rem}[thm]{Remark}
\numberwithin{equation}{section}
\def \N {\mathbb{N}}
\def \C {\mathbb{C}}
\def \D {\mathcal{D}}
\def \a {\alpha }
\def \b {\beta}
\def \de {\delta}
\def \De {\Delta}
\def \la {\lambda}
\def \La {\Lambda}
\def\w {\omega}
\def\Om{\Omega}
\def\pa{\partial}
\def\na {\nabla}
\def\Ga{\Gamma}
\def\ra{\rightarrow}
\def\en{\textendash}
\begin{document}
\baselineskip=17pt

\titlerunning{On a topology property for the moduli space of Kapustin$\textendash$Witten equations}
\title{On a topology property for the moduli space of Kapustin$\textendash$Witten equations}

\author{Teng Huang}

\date{}

\maketitle

\address{T. Huang: School of Mathematical Sciences, University of Science and Technology of China, Hefei, 230026, P. R.  China; \email{htmath@ustc.edu.cn; htustc@gmail.com}}
\subjclass{58E15;81T13}
\begin{abstract}
In this article, we study the Kapustin$\textendash$Witten equations on a closed, simply-connected, four-dimensional manifold which were introduced by Kapustin and Witten. We use the Taubes' compactness theorem \cite{T1} to prove that if $(A,\phi)$ is a smooth solution of Kapustin$\en$Witten equations and the connection $A$ is closed to a $generic$ ASD connection $A_{\infty}$, then $(A,\phi)$ must be a trivial solution. We also prove that the moduli space of the solutions of Kapustin$\en$Witten equations is non-connected if the connections on the compactification of moduli space of ASD connections are all $generic$. At last, we extend the results for the Kapustin$\en$Witten equations to other equations on gauge theory such as the Hitchin$\en$Simpson equations and Vafa$\en$Witten on a compact K\"{a}hler surface. 
\end{abstract}
\keywords{Kapustin$\en$Witten equations, Vafa$\en$Witten equations, stable Higgs bundle, gauge theory}
\section{Introduction}
Let $X$ be an closed, oriented, smooth, four-dimensional manifold with a given Riemannian metric $g$, and let $P\rightarrow X$ be a principal $G$-bundle with $G$ being a compact Lie group. We denote by $\mathfrak{g}_{P}$ the adjoint bundle of $P$ and by $\mathcal{A}_{P}$ the space of smooth connections on $P$.  On a $4$-manifold $X$ the Hodge star operator $\ast$ takes $2$-forms to $2$-forms and we have $\ast^{2}=Id_{\Om^{2}}$. The self-dual and anti-self-dual forms, we denoted $\Om^{+}$ and $\Om^{-}$ are defined to be the $\pm$ eigenspace of $\ast$: $\Om^{2}T^{\ast}X=\Om^{+}\oplus\Om^{-}$. 

The Kapustin$\en$Witten equations are defined on a Riemannian $4$-manifold given a principle bundle $P$. For most present considerations, $G$ can be taken to be $SU(2)$ or $SO(3)$. The equations require a pair $(A,\phi)\in\mathcal{A}_{P}\times \Om^{1}(X,\mathfrak{g}_{P})$ satisfies
\begin{equation}\label{82}
\begin{split}
&F^{+}_{A}-(\phi\wedge\phi)^{+}=0,\\
&d_{A}^{\ast}\phi=0,\ (d_{A}\phi)^{-}=0,\\
\end{split}
\end{equation}
where  $F_{A}$ is the curvature two-form of the connection $A$, and superscripts ``-" and ``+" indicate taking the anti-self-dual part or self-dual part respectively. These equations were introduced by Kapustin and Witten \cite{KW}. The motivation is from the viewpoint of $\mathcal{N}=4$ super Yang$\en$Mills theory in four dimensions to study the geometric Langlands program \cite{GW,Hay,KW} and \cite{W1,W2,W3,W4}. One also can see Gagliardo-Uhlenbeck's article\cite{GU}. Note that the equation $d_{A}^{\ast}\phi=0$ makes equations (\ref{82}) an elliptic system after gauge fixing equation. Note also there are no solutions to Kapustin$\en$Witten equations in the case of $p_{1}(P)$ is positive.

In mathematics, the analytic properties of solutions of Kapustin$\en$Witten equations were discussed by Taubes \cite{T1,T2,T3} and Tanaka \cite{TY2}. In \cite{T1}, Taubes studied the Uhlenbeck style compactness problem for $SL(2,\C)$ connections, including solutions to the above equations, on four-manifolds, see \cite{T2,T3}. In \cite{TY2}, Tanaka observed that equations on a compact K\"{a}hler surface are the same as Hitchin$\en$Simpson's equations \cite{Hitchin,Simpson1988} and proved that the singular set introduced by Taubes for the case of Simpson's equations has a structure of a holomorphic subvariety. In \cite{HT}, the author proved that there exist a lower bounded for the $L^{2}$-norm of extra fields unless the connection is anti-self-dual  under some mild conditions on $X,P,g,G$. 

One always uses continuous method to construct the solutions of some gauge-theoretic equations. For example, Freed-Uhlenbeck and Taubes \cite{FU,T4} used this technical to constructed the ASD connections over a closed four-manifold. In this article, we suppose that the pair $(A_{\infty}+a,\phi)$ is a smooth solution of the Kapusitin$\en$Witten equations, where $A_{\infty}$ is an anti-self-dual connection on $P$. Then, the pair $(A_{\infty}+a,\phi)$ satisfies the equations:
\begin{equation}\nonumber
\begin{split}
&d^{+}_{A_{\infty}}a+(a\wedge a)^{+}-(\phi\wedge\phi)^{+}=0,\\ &(d_{A_{\infty}}\phi+[a,\phi])^{-}=0.\\
\end{split}
\end{equation}
We will show that if the connections are in a neighborhood of a $generic$ anti-self-dual connection $A_{\infty}$, then the smooth solutions of Kapustin$\en$Witten equations on a closed four-manifold must be trivial solutions, i.e., $F_{A_{\infty}+a}^{+}=0$ and $\phi=0$.
\begin{thm}\label{T1.1}
Let $X$ be a closed, oriented, simply-connected, four-dimensional manifold with a smooth Riemannain metric $g$, $P\rightarrow X$ be a principal $SU(2)$ or $SO(3)$-bundle with $p_{1}(P)$ negative. Suppose that there exist a $generic$ ASD connection $A_{\infty}$ on $P$, i.e., $H^{0}_{A_{\infty}}=\ker d_{A_{\infty}}|_{\Om^{0}(X,\mathfrak{g}_{P})}=0$ and $H^{2}_{A_{\infty}}=\ker d_{A_{\infty}}^{+}|_{\Om^{2,+}(X,\mathfrak{g}_{P})}=0$, then there is a positive constant $\de=\de(g,P,A_{\infty})$ with following significance. If $(A,\phi)$ is a smooth solution of Kapustin$\en$Witten equations over $X$, then one of following must hold:\\
(1) $F^{+}_{A}=0$ and $\phi=0$;\\
(2) the pair $(A,\phi)$ satisfies
$$dist(A,A_{\infty})\geq\de,$$
where $dist(\cdot,\cdot)$ defined by $$dist(A,B):=\inf_{g\in\mathcal{G}_{P}}\|g^{\ast}(A)-B\|_{L^{2}_{1}(X)},\ \forall\ A,B\in\mathcal{A}_{P}.$$
\end{thm}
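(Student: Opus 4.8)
The plan is to argue by contradiction, converting the desired gap into a compactness statement for a hypothetical sequence of solutions that violate it. Suppose no such $\de$ exists. Then for each $n$ we may find a solution $(A_{n},\phi_{n})$ of \eqref{82} that fails alternative (1) — in particular $\phi_{n}\not\equiv 0$, since $\phi_{n}=0$ would force $F^{+}_{A_{n}}=(\phi_{n}\wedge\phi_{n})^{+}=0$ and land us in case (1) — while $dist(A_{n},A_{\infty})<1/n$. Because the Kapustin-Witten equations and the distance function are invariant under $\mathcal{G}_{P}$, I would replace each $(A_{n},\phi_{n})$ by a gauge-equivalent pair so that $a_{n}:=A_{n}-A_{\infty}$ satisfies $\|a_{n}\|_{L^{2}_{1}}\to 0$ together with the Coulomb gauge condition $d^{\ast}_{A_{\infty}}a_{n}=0$; the pair then solves the perturbed system displayed just before the theorem.

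The argument now splits according to the size of the extra field. First I would treat the case in which $\|\phi_{n}\|_{L^{2}}$ stays bounded. Here the first equation bounds $\|(\phi_{n}\wedge\phi_{n})^{+}\|_{L^{2}}=\|F^{+}_{A_{n}}\|_{L^{2}}$ by a constant multiple of $\|a_{n}\|_{L^{2}_{1}}(1+\|a_{n}\|_{L^{2}_{1}})$, and a standard elliptic bootstrap for the gauge-fixed (hence elliptic) Kapustin-Witten operator upgrades $L^{2}$-bounds to bounds in every $L^{2}_{k}$. Passing to a subsequence, $(A_{n},\phi_{n})$ converges smoothly to a limit $(A_{\infty},\phi_{\infty})$ which again solves \eqref{82}. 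Since $A_{\infty}$ is anti-self-dual, $F^{+}_{A_{\infty}}=0$, so $\phi_{\infty}$ obeys $(\phi_{\infty}\wedge\phi_{\infty})^{+}=0$, $(d_{A_{\infty}}\phi_{\infty})^{-}=0$ and $d^{\ast}_{A_{\infty}}\phi_{\infty}=0$. I claim $\phi_{\infty}=0$, and this is where I use that $A_{\infty}$ is $generic$: genericity means $A_{\infty}$ is irreducible ($H^{0}_{A_{\infty}}=0$) and unobstructed ($H^{2}_{A_{\infty}}=0$), so that the linearisation of the Kapustin-Witten operator at $(A_{\infty},0)$,
$$(a,\varphi)\longmapsto\big((d_{A_{\infty}}a)^{+},\,(d_{A_{\infty}}\varphi)^{-},\,d^{\ast}_{A_{\infty}}\varphi\big),$$
decouples; the $a$-block is the surjective anti-self-dual deformation operator, whose kernel is the tangent space to the smooth moduli space, while the $\varphi$-block $\varphi\mapsto\big((d_{A_{\infty}}\varphi)^{-},d^{\ast}_{A_{\infty}}\varphi\big)$ must be shown to have trivial kernel. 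Granting this, an implicit function theorem argument shows that every solution of \eqref{82} in a fixed $L^{2}_{1}$-neighbourhood of $(A_{\infty},0)$ has $\phi=0$ and $A$ anti-self-dual, i.e. lies in case (1); for $n$ large this contradicts the choice of $(A_{n},\phi_{n})$.

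It remains to rule out $\|\phi_{n}\|_{L^{2}}\to\infty$, which is exactly the situation covered by Taubes' compactness theorem \cite{T1}. Setting $r_{n}=\|\phi_{n}\|_{L^{2}}\to\infty$ and rescaling, \cite{T1} produces — after passing to a subsequence and removing a closed set $Z\subset X$ of Hausdorff codimension at least two — a nonzero limiting field $\nu=\lim\phi_{n}/r_{n}$ together with a limiting connection, with respect to which the connection reduces to an abelian one on $X\setminus Z$ and $\nu$ is a harmonic one-form valued in the associated real line bundle. On the other hand $\|a_{n}\|_{L^{2}_{1}}\to 0$ forces the limiting connection to be $A_{\infty}$ itself, which is irreducible; the existence of such a reduction together with a nonzero $\nu$ is incompatible with irreducibility of $A_{\infty}$, and $X$ simply-connected rules out the flat real line bundles that would otherwise carry $\nu$. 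This contradiction completes the dichotomy, and hence the proof.

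The step I expect to be the main obstacle is precisely the injectivity of the $\varphi$-block $\varphi\mapsto\big((d_{A_{\infty}}\varphi)^{-},d^{\ast}_{A_{\infty}}\varphi\big)$ used in the bounded case. Unlike the usual Weitzenb\"{o}ck vanishing theorems this cannot rely on curvature positivity, which is not assumed here; instead one must extract it from the genericity hypothesis together with the negativity of $p_{1}(P)$ — the latter fixing the sign of the topological quantity $\|F^{+}_{A}\|^{2}_{L^{2}}-\|F^{-}_{A}\|^{2}_{L^{2}}$ and thereby constraining how an anti-self-dual $A_{\infty}$ can support a nonzero self-dual $d_{A_{\infty}}\phi_{\infty}$. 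Equally delicate is making the passage from ``the linearisation is injective'' to ``there are no nearby nonlinear solutions with $\phi\neq0$'' uniform in $n$, so that a single $\de=\de(g,P,A_{\infty})$ suffices; this uniformity is the technical heart of the argument.
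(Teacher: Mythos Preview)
Your overall architecture---contradiction, splitting on whether $\|\phi_{n}\|_{L^{2}}$ stays bounded, and invoking Taubes' compactness for the unbounded branch to produce a reduction incompatible with irreducibility---matches the paper, and your treatment of the unbounded case is essentially the content of Lemma~\ref{L7}. The difficulty is entirely in the bounded branch, and there your proposed mechanism does not work.

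The injectivity of the $\varphi$-block $\varphi\mapsto\big((d_{A_{\infty}}\varphi)^{-},\,d^{\ast}_{A_{\infty}}\varphi\big)$ is \emph{not} a consequence of genericity of $A_{\infty}$. Genericity says $H^{0}_{A_{\infty}}=H^{2}_{A_{\infty}}=0$ for the anti-self-dual deformation complex $\Om^{0}\xrightarrow{d_{A_{\infty}}}\Om^{1}\xrightarrow{d^{+}_{A_{\infty}}}\Om^{2,+}$; the kernel you need to kill is the first cohomology of the \emph{opposite} complex $\Om^{0}\xrightarrow{d_{A_{\infty}}}\Om^{1}\xrightarrow{d^{-}_{A_{\infty}}}\Om^{2,-}$, which is governed by an entirely different index and has no reason to vanish. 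Neither $p_{1}(P)<0$ nor simple connectivity forces it to zero, and no Weitzenb\"ock identity will do so without curvature hypotheses you do not have. So the implicit function theorem step, which you correctly flagged as the crux, is not merely delicate---it is unavailable.

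The paper bypasses linearisation altogether with two nonlinear ingredients. First, Proposition~\ref{P3.9} gives a uniform \emph{lower} bound $\|\phi_{n}\|_{L^{2}}\geq c>0$ whenever $A_{n}$ is close to $A_{\infty}$ and $A_{n}$ is not ASD: one uses Theorem~\ref{AT1} to write $A_{n}=\tilde{A}_{\infty}+a$ with $\tilde{A}_{\infty}$ ASD and $\|a\|$ controlled by $\|F^{+}_{A_{n}}\|$, and then compares the two Weitzenb\"ock identities for $\phi_{n}$ with respect to $A_{n}$ and $\tilde{A}_{\infty}$. Second, for the limit pair $(A_{\De},\phi_{\De})$ one has $F^{+}_{A_{\De}}=0$, hence $(\phi_{\De}\wedge\phi_{\De})^{+}=0$, which forces $\phi_{\De}\wedge\phi_{\De}=0$ and $d_{A_{\De}}\phi_{\De}=0$; the pair is a \emph{decoupled} solution. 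Theorem~\ref{T2.6} then shows $\phi_{\De}=0$ using only irreducibility of $A_{\De}$: the constraint $\phi_{\De}\wedge\phi_{\De}=0$ makes $\phi_{\De}$ rank one, so on its nonzero set it factors as $\xi\otimes\w$ with $d_{A_{\De}}\xi=0$, contradicting Lemma~\ref{L1}. The contradiction is then $\|\phi_{\De}\|_{L^{2}}\geq c>0$ versus $\phi_{\De}=0$. Both steps exploit the quadratic term $\phi\wedge\phi$ in an essential way; this is exactly what your linearised approach discards.
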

\begin{rem}
We describe another result as follows which is a small variation of the result in \cite{HT}. Let the pair $(A,\phi)$ be a smooth solution of Kapustin$\en$Witten equations. If $(A,\phi)$ is closed to  $(A_{\infty},0)$ in $L^{2}_{1}$-topology, i.e, there exist a small enough constant $\de\in(0,1)$ such that $$\inf_{g\in\mathcal{G}_{P}}\|g^{\ast}(A)-A_{\infty}\|_{L^{2}_{1}}+\|\phi\|_{L^{2}_{1}}\leq\de,$$
where $A_{\infty}$ is a $generic$ ASD connection, then $A$ must be ASD and $\phi$ identically zero. We should point out that the result is different to Theorem \ref{T1.1}.  We don't make any assumptions on extra fields in theorem \ref{T1.1}. The following case may happen:  if the connection $A$ tends to a $generic$ ASD connection $A_{\infty}$ in $L^{2}_{1}$-topology (under moduli gauge transformations), then the $L^{2}$-norm of extra field $\|\phi\|_{L^{2}(X)}$ may tends to infinite. But thanks to a compactness theorem of Taubes, we observed that if $\{(A_{i},\phi_{i})\}_{i\in\N}$ is a sequence of smooth solutions of Kapustin$\en$Witten equations, then there exist a subsequence $\Xi\subset\mathbb{N}$ such that the sequence $\{\|\phi_{i}\|_{L^{2}(X)}\}_{i\in\Xi}$ is a bounded subsequence under some conditions, see Lemma \ref{L7}. Using this useful observation, we could prove Theorem \ref{T1.1}.
\end{rem}
We define the gauge-equivariant map
$$KW:\mathcal{A}_{P}\times\Om^{1}(X,\mathfrak{g}_{P})\rightarrow\Om^{2,+}(X,\mathfrak{g}_{P})\times\Om^{2,-}(X,\mathfrak{g}_{P}),$$
$$KW(A,\phi)=\big{(}F_{A}^{+}-(\phi\wedge\phi)^{+},(d_{A}\phi)^{-}\big{)}.$$
We denote by
$$M_{KW}(P,g):=\{(A,\phi)\in\mathcal{A}_{P}\times\Om^{1}(X,\mathfrak{g}_{P}\mid KW(A,\phi)=0\}/\mathcal{G}_{P}$$
the moduli space of solutions of Kapustin$\en$Witten equations. The moduli space $M_{ASD}$ of all ASD connections can be embedded into $M_{KW}$ via $A_{\infty}\mapsto(A_{\infty},0)$,\ $A_{\infty}$ is an ASD connection on $P$. We also denote $\bar{M}_{ASD}$ by the compactification of moduli space of ASD connection.

Following the idea of Donaldson on \cite[Section 4.2.1]{DK}, we write $[A,\phi]$ for the equivalence class of a pair $(A,\phi)$, which is a point in the moduli space $M_{KW}$. We set,
$$\|(A_{1},\phi_{1})-(A_{2},\phi_{2})\|^{2}=\|A_{1}-A_{2}\|^{2}_{L^{2}_{1}(X)}+\|\phi_{1}-\phi_{2}\|^{2}_{L^{2}_{1}(X)},$$
is preserved by the action of $\mathcal{G}_{P}$, so descends to define a distance function on $M_{KW}$:
$$dist\big{(}[A_{1},\phi_{1}],[A_{2},\phi_{2}]\big{)}:=\inf_{g\in\mathcal{G}_{P}}\|(A_{1},\phi_{1})-g^{\ast}(A_{2},\phi_{2})\|.$$
Let the pair $(A,\phi)$ be a smooth solution of Decoupled Kapustin$\en$Witten equations over a closed, simply-connected, four-manifold. We observe that if the connection $A$ is irreducible, then the extra field $\phi$  vanishes, see Theorem \ref{T2.6}. Hence, we can denote by 
\begin{equation*}
\begin{split}
dist([A,\phi],M_{ASD}):&=\inf_{g\in\mathcal{G}_{P},A_{\infty}\in M_{ASD}}\|g^{\ast}(A,\phi)-(A_{\infty},0)\|\\
&=\inf_{g\in\mathcal{G}_{P},A_{\infty}\in M_{ASD}}\big{(}\|g^{\ast}(A)-A_{\infty}\|^{2}_{L^{2}_{1}(X)}+\|\phi\|^{2}_{L^{2}_{1}(X)}\big{)}^{\frac{1}{2}},\\
\end{split}
\end{equation*}
the distance between $M_{ASD}$ and $M_{KW}\backslash M_{ASD}$.
\begin{thm}\label{T1.2}
Let $X$ be a closed, oriented, simply-connected, four-dimensional manifold with a smooth Riemannain metric $g$, $P\rightarrow X$ be a principal $SU(2)$ or $SO(3)$-bundle with $p_{1}(P)$ negative. Suppose that the connections in  $\bar{M}_{ASD}(P,g)$ are all $generic$, i.e.,  $H^{0}_{A_{\infty}}=\ker d_{A_{\infty}}|_{\Om^{0}(X,\mathfrak{g}_{P})}=0$ and $H^{2}_{A_{\infty}}=\ker d_{A_{\infty}}^{+}|_{\Om^{2,+}(X,\mathfrak{g}_{P})}=0$ for any connection $[A_{\infty}]\in\bar{M}_{ASD}(P,g)$, then there is a positive constant $\de=\de(g,P)$ with following significance. If $(A,\phi)$ is a smooth solution of Kapustin$\en$Witten equations, then one of following must hold:\\
(1) $F^{+}_{A}=0$ and $\phi=0$;\\
(2) the pair $(A,\phi)$ satisfies
$$2\|\phi\|^{2}_{L^{2}}\geq\|F^{+}_{A}\|_{L^{2}(X)}\geq\de.$$
In particular, there is a positive constant $\tilde{\de}=\tilde{\de}(g,P)$ such that
$$dist(A,M_{ASD}):=\inf_{g\in\mathcal{G}_{P},A_{\infty}\in M_{ASD}}\|g^{\ast}(A)-A_{\infty}\|_{L^{2}_{1}(X)}\geq\tilde{\de},$$
unless $A$ is anti-self-dual  with respect to $g$.
\end{thm}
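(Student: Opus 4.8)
The plan is to establish Theorem \ref{T1.2} by combining the local dichotomy of Theorem \ref{T1.1} with a compactness argument over the compactified moduli space $\bar{M}_{ASD}$. The key new analytic input must be the scalar inequality $2\|\phi\|^2_{L^2}\geq\|F^+_A\|_{L^2(X)}$, which should follow directly from the Kapustin-Witten equations themselves: since $(F_A-\phi\wedge\phi)^+=0$, we have $F^+_A=(\phi\wedge\phi)^+$ pointwise, so $\|F^+_A\|_{L^2}=\|(\phi\wedge\phi)^+\|_{L^2}$, and a routine algebraic estimate on the quadratic form $\phi\wedge\phi$ (bounding the pointwise norm of $(\phi\wedge\phi)^+$ by a multiple of $|\phi|^2$) together with a Sobolev or Kato-type control yields the factor $2$. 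I would carry out this pointwise estimate first, as it is purely algebraic and does not require any global input.

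\medskip

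Second, I would prove the lower bound $\|F^+_A\|_{L^2(X)}\geq\de$ by a contradiction-plus-compactness scheme. Suppose no such uniform $\de>0$ exists; then there is a sequence $(A_i,\phi_i)$ of Kapustin-Witten solutions, none of which is ASD (so case (1) fails), with $\|F^+_{A_i}\|_{L^2}\to 0$. The strategy is to apply Taubes' compactness theorem \cite{T1} to extract a convergent subsequence whose limit connection lies in $\bar{M}_{ASD}(P,g)$; here the boundedness of $\{\|\phi_i\|_{L^2}\}$ along a subsequence, guaranteed by Lemma \ref{L7}, is essential to ensure the limit is a genuine ASD connection rather than a point in the singular stratum with nontrivial spinor data. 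Because $\|F^+_{A_i}\|_{L^2}\to 0$, the limiting connection $A_\infty$ is anti-self-dual, hence a point of the (possibly ideal) compactified moduli space. By hypothesis every connection in $\bar{M}_{ASD}$ is $generic$.

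\medskip

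Third, the contradiction is reached by invoking Theorem \ref{T1.1} at the limit. Since $A_\infty$ is a $generic$ ASD connection, Theorem \ref{T1.1} supplies a constant $\de(g,P,A_\infty)>0$ such that any Kapustin-Witten solution either has $F^+_A=0$ with $\phi=0$, or satisfies $dist(A,A_\infty)\geq\de(g,P,A_\infty)$. But the convergence $A_i\to A_\infty$ forces $dist(A_i,A_\infty)<\de(g,P,A_\infty)$ for large $i$, so each such $(A_i,\phi_i)$ must fall into case (1), i.e. be ASD with vanishing $\phi$, contradicting our choice of a non-ASD sequence. To promote the local constant to a uniform one over the whole compactified space I would use a covering argument: $\bar{M}_{ASD}$ is compact (by Uhlenbeck compactness), so finitely many balls $B(A_\infty^{(k)},\de_k/2)$ cover it, and $\de:=\min_k \de_k>0$ works uniformly. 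The final distance estimate $dist(A,M_{ASD})\geq\tilde\de$ then follows by a similar compactness-and-contradiction argument, or more directly by combining the lower bound on $\|F^+_A\|_{L^2}$ with the fact that $F^+$ vanishes identically on $M_{ASD}$, so that $A$ staying within $L^2_1$-distance $\tilde\de$ of the ASD locus would force $\|F^+_A\|_{L^2}$ below $\de$ via continuity of $A\mapsto F^+_A$.

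\medskip

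The main obstacle I anticipate is the passage to the limit in Taubes' compactness theorem and the identification of the limit with a point of $\bar{M}_{ASD}$ rather than a nontrivial $SL(2,\C)$ limit with a nonzero $\phi$ component or bubbling concentrated on a lower-dimensional singular set. Controlling this requires the $L^2$-boundedness of $\phi_i$ from Lemma \ref{L7} to rule out the formation of a nontrivial $Z$-valued measure (the Taubes singular set), and then arguing that on the complement the limit satisfies the decoupled equations with $F^+_{A_\infty}=0$, which by Theorem \ref{T2.6} and simple-connectedness forces $\phi_\infty=0$ when $A_\infty$ is irreducible. Handling the reducible or ideal boundary points of $\bar{M}_{ASD}$ carefully — where the genericity hypothesis is precisely what rules out obstructed deformations — is where the argument is most delicate.
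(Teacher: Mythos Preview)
Your contradiction-plus-compactness skeleton is right, and the final distance estimate is handled exactly as you say (the paper bounds $\|F^{+}_{A}\|_{L^{2}}$ above by $2C\|a\|_{L^{2}_{1}}$ for $A=A_{0}+a$ with $A_{0}$ ASD and small $a$). But two of your intermediate steps do not go through as written.

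First, Lemma~\ref{L7} is misapplied. Its hypothesis is $dist(A_{i},A_{\infty})\to 0$ for a \emph{fixed} irreducible ASD connection $A_{\infty}$ on the given bundle $P$; you only have $\|F^{+}_{A_{i}}\|_{L^{2}}\to 0$. Under that weaker hypothesis Uhlenbeck bubbling can occur, the limit may live on a different bundle $P_{l}$, and no global $L^{2}_{1}(X)$-convergence of $A_{i}$ to a single connection on $P$ is available, so Lemma~\ref{L7} does not apply. Second, and for the same reason, the covering argument via Theorem~\ref{T1.1} breaks at the ideal boundary of $\bar{M}_{ASD}$: Theorem~\ref{T1.1} concerns $L^{2}_{1}$-balls around ASD connections on $P$, whereas an ideal point is a pair $([A_{l}],\{x_{1},\dots,x_{k}\})$ with $A_{l}$ living on a bundle $P_{l}\neq P$. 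There is no $L^{2}_{1}(X)$-distance from a connection on $P$ to one on $P_{l}$, so the balls $B(A_{\infty}^{(k)},\de_{k}/2)$ you want to take simply do not make sense there, and finitely many of them cannot cover a neighbourhood of the ideal stratum.

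The paper avoids both issues by never invoking Theorem~\ref{T1.1}. The substitute for Lemma~\ref{L7} is Proposition~\ref{P3.11}: starting only from $\|F^{+}_{A_{i}}\|_{L^{2}}\to 0$, one first rules out the unbounded alternative in Taubes' dichotomy (Theorem~\ref{T4.6}) by showing that the limiting data $v\otimes\sigma_{\infty}$ would yield a nonzero $d_{A_{\infty}}$-parallel section on an open set, contradicting irreducibility via Lemma~\ref{L1}; this forces $\|\phi_{i}\|_{L^{2}}$ bounded along a subsequence, after which the bounded-case compactness (Theorem~\ref{51}) gives $(g_{i}^{\ast}A_{i},g_{i}^{\ast}\phi_{i})\to (A_{\infty},0)$ in $C^{\infty}$ on $X\setminus\Sigma$ with $\Sigma$ finite. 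The substitute for your covering step is the global gap Theorem~\ref{T3.11}: under the standing hypothesis on $\bar{M}_{ASD}$ one has a uniform $\|\phi\|_{L^{2}}\geq\de$ for every non-ASD solution. The contradiction is then obtained by combining the $L^{\infty}$ bound $\|\phi_{i}\|_{L^{\infty}}\leq c\|\phi_{i}\|_{L^{2}}\leq cC$ with the fact that $\Sigma$ has measure zero to conclude $\|\phi_{i}\|_{L^{2}(X)}\to 0$, violating Theorem~\ref{T3.11}.
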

We now take $X$ to be a compact K\"{a}hler surface with K\"{a}hler form $\w$, and $E$ to be a principal $G$-bundle over $X$. Tanaka observed that the Kapustin$\en$Witten equations are the same as Hitchin$\en$Simpson equations.  Furthermore, if $A$ is an $SU(N)$-ASD connection on $E$, then  
$$\ker d^{+}_{A}d^{+,\ast}_{A}|_{\Om^{2,+}(X,adE)}\cong \ker(\bar{\pa}_{A}\bar{\pa}_{A}^{\ast}+\bar{\pa}_{A}^{\ast}\bar{\pa}_{A})|_{\Om^{0,2}(X,adE)}\oplus\ker d_{A}|_{\Om^{0}(X,adE)},$$
see \cite[Proposition 2.3]{Itoh} or \cite[Chapter IV]{Friedman-Morgan} . It's difficult to addition certain mild conditions to ensure  $\ker(\bar{\pa}_{A}\bar{\pa}_{A}^{\ast}+\bar{\pa}_{A}^{\ast}\bar{\pa}_{A})|_{\Om^{0,2}(X,adE)}$ and $\ker d_{A}|_{\Om^{0}(X,adE)}$ vanish at some time. But one can see that $\ker d_{A}|_{\Om^{0}(X,adE)}=0$ is equivalent to the connection $A$ is irreducible.  It is convenient to introduce the $c$-$generic$ metric which ensures the connections on the compactification of moduli space of ASD connections on $E$, $\bar{M}(E,g)$,  are irreducible, one also can see \cite[Chapter IV]{Friedman-Morgan}.
\begin{defi}\label{D1}
Let  $X$ be a compact, connected, K\"{a}hler surface, $c$ be a positive integer. We say that a K\"{a}hler metric $g$ on $X$ is $c$-$generic$ if for every $SU(2)$-bundle $E$ over $X$ with $c_{2}(E)\leq c$, there are no reducible ASD connections on $E$. 
\end{defi}
If we suppose that the K\"{a}hler metric $g$ is $c$-$generic$, we could deform the connection $A\in\mathcal{A}^{1,1}$ which obeys $\|\La_{\w}F_{A}\|_{L^{2}(X)}\leq \varepsilon$ to an other connection $A_{\infty}$ satisfies $\La_{\w}F_{A_{\infty}}=0$, where $\varepsilon$ is a suitable sufficiently small positive constant,  see Theorem \ref{T6}. Even though, the connection $A_{\infty}$ may be not an ASD connection, but the $(0,2)$-part $F_{A_{\infty}}^{0,2}$ of the curvature $F_{A_{\infty}}$ would estimated by $\La_{\w}F_{A}$. We then extend the results of Kapustin$\en$Witten equations to the Hitchin$\en$Simpson equations case.
\begin{thm}\label{T1}
Let $X$ be a compact, K\"{a}hler surface with a smooth K\"{a}hler metric $g$, $E$ be a principal $SU(2)$-bundle with $c_{2}(E)=c$ positive. Suppose that $g$ is a $c$-$generic$ metric in the sense of Definition \ref{D1}, then there is a positive constant $C=C(g,E)$ with following significance. If the Higgs pair $(A,\theta)\in\mathcal{A}_{E}^{1,1}\times\Om^{1,0}(X,adE)$ is a smooth solution of Hitchin-Simpson equations, then one of following must hold:\\
(1) $\La_{\w}F_{A}=0$, or\\
(2) the Higgs field $\theta$ satisfies 
$$\|\theta\|_{L^{2}(X)}\geq C.$$
Furthermore, if $X$ is also simply-connected, then the curvature $F_{A}$ obeys
$$\|\La_{\w}F_{A}\|_{L^{2}(X)}\geq \tilde{C}$$
for a positive constant $\tilde{C}=\tilde{C}(g,P)$ unless $\La_{\w}F_{A}=0$.
\end{thm}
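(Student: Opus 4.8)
The plan is to mirror the proof of Theorem \ref{T1.2}, replacing the role of a generic ASD connection by the $c$-$generic$ hypothesis (which guarantees only irreducibility, i.e. $\ker d_{A}|_{\Om^{0}(X,adE)}=0$) and substituting the deformation result Theorem \ref{T6} for an implicit function theorem. First I would record the algebraic content of the equations. The Hitchin-Simpson moment-map equation reads $\La_{\w}(F_{A}+[\theta\wedge\theta^{\ast}])=0$, so that $\La_{\w}F_{A}=-\La_{\w}[\theta\wedge\theta^{\ast}]$; combined with the pointwise inequality $|\La_{\w}[\theta\wedge\theta^{\ast}]|\leq c|\theta|^{2}$ this yields
\begin{equation}\nonumber
\|\La_{\w}F_{A}\|_{L^{2}(X)}\leq c\|\theta\|^{2}_{L^{4}(X)}.
\end{equation}
Since $F^{0,2}_{A}=0$ is part of the equations, on the K\"{a}hler surface $X$ the curvature $F^{+}_{A}$ is a constant multiple of $\La_{\w}F_{A}$, so $\|F^{+}_{A}\|_{L^{2}}$ and $\|\La_{\w}F_{A}\|_{L^{2}}$ are comparable; in particular a small Higgs field (or small $\La_{\w}F_{A}$) forces $A$ to be nearly anti-self-dual.

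For the dichotomy (1)/(2), I would argue by contradiction: suppose there is a sequence of solutions $(A_{i},\theta_{i})$ with $\La_{\w}F_{A_{i}}\neq 0$ but $\|\theta_{i}\|_{L^{2}(X)}\to 0$. Using $\bar{\pa}_{A_{i}}\theta_{i}=0$ together with a Weitzenb\"{o}ck identity and uniform elliptic estimates (after gauge fixing) I would bootstrap $\|\theta_{i}\|_{L^{2}}\to 0$ to $\|\theta_{i}\|_{L^{4}}\to 0$, whence $\|F^{+}_{A_{i}}\|_{L^{2}}\to 0$. By Uhlenbeck compactness on the fixed bundle $E$, after passing to a subsequence and applying gauge transformations, $A_{i}$ converges, away from finitely many bubble points, to an anti-self-dual connection $A_{\infty}\in\bar{M}(E,g)$, which is irreducible by the $c$-$generic$ hypothesis. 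Since each $A_{i}$ satisfies $F^{0,2}_{A_{i}}=0$ it defines an integrable holomorphic structure on $adE$ for which $\theta_{i}$ is a holomorphic section of $\Om^{1,0}(X,adE)$. Invoking the isomorphism $\ker d^{+}_{A}d^{+,\ast}_{A}\cong H^{0,2}_{A}\oplus\ker d_{A}|_{\Om^{0}}$ recorded in the introduction, irreducibility gives $\ker d_{A_{\infty}}|_{\Om^{0}}=0$, and upper semicontinuity of the dimension of the space of holomorphic sections as $\bar{\pa}_{A_{i}}\to\bar{\pa}_{A_{\infty}}$, combined with the vanishing supplied by Theorem \ref{T2.6}, forces $\theta_{i}=0$ for $i$ large. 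But $\theta_{i}=0$ implies $\La_{\w}F_{A_{i}}=0$, contradicting the standing assumption; this establishes the gap $\|\theta\|_{L^{2}}\geq C$.

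For the simply-connected refinement I would again argue by contradiction, now with $\|\La_{\w}F_{A_{i}}\|_{L^{2}}\to 0$ and \emph{no} smallness assumption on $\theta_{i}$. For $i$ large $\|\La_{\w}F_{A_{i}}\|_{L^{2}}\leq\varepsilon$, so Theorem \ref{T6} deforms $A_{i}$ to a nearby connection $A_{i,\infty}$ with $\La_{\w}F_{A_{i,\infty}}=0$ and $\|F^{0,2}_{A_{i,\infty}}\|_{L^{2}}\leq C\|\La_{\w}F_{A_{i}}\|_{L^{2}}\to 0$; crucially this deformation only inverts the $0$-form Laplacian at an irreducible connection and hence needs only the $c$-$generic$ hypothesis, not the vanishing of $H^{0,2}$. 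Then $\|F^{+}_{A_{i,\infty}}\|_{L^{2}}\to 0$, and the compactness and semicontinuity argument above, where simple-connectivity enters through Theorem \ref{T2.6}, again produces the contradiction $\theta_{i}=0$. This yields $\|\La_{\w}F_{A}\|_{L^{2}}\geq\tilde{C}$ unless $\La_{\w}F_{A}=0$.

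The main obstacle is precisely that the $c$-$generic$ condition controls only $\ker d_{A}|_{\Om^{0}}$ and says nothing about $H^{0,2}_{A}\cong H^{2}_{A}$, which need not vanish; this is why a direct implicit-function-theorem deformation of $(A_{i},\theta_{i})$ to an honest ASD pair is unavailable and must be replaced by the weaker deformation of Theorem \ref{T6}, which kills $\La_{\w}F$ while leaving $F^{0,2}$ merely controlled. The remaining delicate points are making the upper semicontinuity of holomorphic sections rigorous across the Uhlenbeck limit in the presence of bubbling, and arranging a uniform gauge in which $\bar{\pa}_{A_{i}}\to\bar{\pa}_{A_{\infty}}$ in a topology strong enough to compare the relevant cohomology groups.
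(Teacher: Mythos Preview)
Your proof has a genuine gap in the first assertion, and your placement of Theorem~\ref{T6} is inverted relative to the paper. The paper uses Theorem~\ref{T6} \emph{for the dichotomy (1)/(2)} and gives a direct quantitative estimate, not a sequential compactness argument. Given a single solution $(A,\theta)$ with $\|\La_{\w}F_{A}\|_{L^{2}}$ small, Proposition~\ref{P2} (which is where the $c$-$generic$ hypothesis enters) gives $\la(A)\geq\la_{0}>0$, so Theorem~\ref{T6} produces $A_{\infty}$ with $\La_{\w}F_{A_{\infty}}=0$ and $\|A-A_{\infty}\|_{L^{2}}\leq c\|\La_{\w}F_{A}\|_{L^{2}}$. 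Setting $\phi=\sqrt{-1}(\theta-\theta^{\ast})$, the paper then compares the Weitzenb\"{o}ck identity for $(A,\phi)$,
\[
\|\na_{A}\phi\|^{2}_{L^{2}}+\langle Ric\circ\phi,\phi\rangle_{L^{2}}+\|\La_{\w}F_{A}\|^{2}_{L^{2}}=0,
\]
against the corresponding inequality $\|\na_{A_{\infty}}\phi\|^{2}_{L^{2}}+\langle Ric\circ\phi,\phi\rangle_{L^{2}}\geq 0$ (the $F^{0,2}_{A_{\infty}}$--term pairs to zero against $[\phi,\phi]$). Bounding $\|\na_{A}\phi-\na_{A_{\infty}}\phi\|_{L^{2}}$ via $\|A-A_{\infty}\|_{L^{2}}\|\phi\|_{L^{\infty}}$ and Theorem~\ref{T2.1} yields $(c\|\phi\|^{2}_{L^{2}}-1)\|\La_{\w}F_{A}\|^{2}_{L^{2}}\geq 0$, which is the dichotomy. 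No limits, no vanishing theorem, no simple-connectivity.

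Your route for (1)/(2) instead passes to an Uhlenbeck limit $A_{\infty}$ and then invokes ``upper semicontinuity of holomorphic sections $\ldots$ combined with Theorem~\ref{T2.6}'' to force $\theta_{i}=0$. Two problems. First, Theorem~\ref{T2.6} requires $X$ simply-connected, which is \emph{not} assumed in the first assertion, so it is simply unavailable there. Second, even when it applies, Theorem~\ref{T2.6} does not say $H^{0}_{\bar{\pa}_{A_{\infty}}}(\Om^{1,0}(adE))=0$; it only says that a $\phi$ satisfying the \emph{full} decoupled system $(\phi\wedge\phi)^{+}=0$, $(d_{A_{\infty}}\phi)^{-}=d^{\ast}_{A_{\infty}}\phi=0$ must vanish. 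Upper semicontinuity of $\dim\ker\bar{\pa}_{A}$ therefore has nothing to plug into: to conclude $\theta_{i}=0$ you would need the entire space of $\bar{\pa}_{A_{\infty}}$-holomorphic sections of $\Om^{1,0}(adE)$ to vanish, and that is not what you have. The bubbling/gauge issues you flag are real but secondary to this logical gap.

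For the simply-connected refinement the paper does not invoke Theorem~\ref{T6} at all. It takes a sequence with $\|\La_{\w}F_{A_{i}}\|_{L^{2}}\to 0$, applies the Taubes/Uhlenbeck compactness exactly as in Proposition~\ref{P3.11} together with the $L^{\infty}$ bound of Theorem~\ref{T2.1} to show $\|\phi_{i}\|_{L^{2}}\to 0$, and this contradicts the lower bound $\|\theta_{i}\|_{L^{2}}\geq C$ already established in the first part. Your Part~2 again routes through the same semicontinuity step, so it inherits the same defect; deforming $A_{i}$ to $A_{i,\infty}$ via Theorem~\ref{T6} does not help, since $\theta_{i}$ is holomorphic for $\bar{\pa}_{A_{i}}$, not for $\bar{\pa}_{A_{i,\infty}}$.
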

We denote by 
$$M_{HS}:=\{(A,\phi)\in\mathcal{A}_{E}^{1,1}\times\Om^{1,0}(X,adE)\mid \La_{\w}(F_{A}+[\theta,\theta^{\ast}])=0, \bar{\pa}_{A}\theta=\theta\wedge\theta=0\}/\mathcal{G}_{E}$$
the moduli space of solutions of Hitchin-Simpson equations. In \cite{Hitchin}, Hitchin proved that  if the bundle $E$ is a rank-2 bundle of odd degree over a Riemannian surface of genus $g>1$, then the moduli space of stable Higgs bundle is connected and simply connected, see \cite[Theorem 7.6]{Hitchin}. Following Theorem \ref{T1}, it implies that the moduli space of stable Higgs bundle on a simply-connected K\"{a}hler surface with a generic K\"{a}hler metric is not connected. It is different from the Riemannian surface case, see Corollary \ref{C4.12}.
\begin{rem}
If the principal $G$-bundle $P\rightarrow X$ over the closed manifold  $X$ with $p_{1}(P)=0$, the solutions $(A,\phi)$ of the Kapustin$\en$Witten equations are flat $G_{\C}$-connections with moment map condition:
\begin{equation*}
\begin{split}
&F_{A}-\phi\wedge\phi=0,\\
&d_{A}^{\ast}\phi=0,\ d_{A}\phi=0.\\
\end{split}
\end{equation*}
The \cite[Proposition 2.2.3]{DK} shows that the gauge-equivalence classes of flat $G$-connections over a connected manifold, $X$, are in one-to-one correspondence with the conjugacy classes of representations $\rho:\pi_{1}(X)\rightarrow G$. If $X$ is also a simply-connected manifold, i.e., $\pi_{1}(X)$ is trivial, then the representations $\rho$ must be a trivial representation. It is no sense to consider Kapustin$\en$Witten equations on a principal $G$-bundle with $p_{1}(P)=0$ over a simply-connected four-manifolds.
\end{rem}
The organization of this paper is as follows. In section 2, we first recall gauge theory in $4$-dimensional manifolds. We also discuss the elliptic deformation complex and Kuranishi model for the Kapustin$\en$Witten equations. In section 3, we recall a vanishing theorem for the extra fields of the decoupled Kapustin$\en$Witten equations. Using an optimal inequality proved by Donaldson, we observe that the $L^{2}$-norm of extra fields of non-trivial solutions of Kapustin$\en$Witten equations have a positive lower bounded. Thanks to Taubes' compactness theorem \cite{T1}, we observe that if  $(A_{i},\phi_{i})$ is a sequence solutions of Kapustin$\en$Witten equations and the connection $A_{i}$ converges to an irreducible ASD connection $A_{\infty}$ in $L^{2}_{1}$ (under moduli gauge transformations), then the sequence $\{\|\phi_{i}\|_{L^{2}(X)}\}$ has a bounded subsequence. At last, we obtain our main result: if the connection $A$ is closed to a $generic$ ASD connection $A_{\infty}$, then $(A,\phi)$ must be a trivial solution. In section 4, we extend the above results to the global case. We will prove that if the connections on the compactification of moduli space of ASD connections, $\bar{M}_{ASD}$, are $generic$, then $\|F^{+}_{A}\|_{L^{2}(X)}$ must has a uniform positive lower bounded. In particular, the subspace of anti-self-dual solutions is not connected to the space of Kapustin$\en$Witten equations. We also given some four-manifolds $X$ with Riemannian metric $g$ and principle $SO(3)$-bundles $P\rightarrow X$ ensure that the connections on  $\bar{M}_{ASD}$ are  $generic$. At last, we construct a gluing theorem for the connection $A$ which obeys $\|\La_{\w}F_{A}\|_{L^{2}(X)}\leq\varepsilon$, see Theorem \ref{T6}. We then extend the results for the Kapustin$\en$Witten equations to other coupled equations for pairs such as the Hitchin$\en$Simpson equations and Vafa$\en$Witten on compact K\"{a}hler surface with a smooth $c$-$generic$ K\"{a}hler metric $g$.
\section{A neighborhood of an ASD connection}
\subsection{Yang-Mills theory on 4-manifolds}
Let $X$ be an oriented, closed, smooth, Riemannian $4$-manifold, $P\rightarrow X$ be a principal $G$-bundle with $G$ being a compact Lie group. The Hodge start operator gives an endomorphism of $\Om^{2}$ with property $\ast^{2}=Id_{\Om^{2}}$. We denote by $\Om^{2,+}$ and $\Om^{2,-}$ the eigenvalues of $+1$ and $-1$. A $2$-form in $\Om^{2,+}$ (or in $\Om^{2,-}$) is called self-dual (or anti-self-dual). Decomposing the curvature $F_{A}$ of a connection $A$ according to the decomposition $\Om^{2}=\Om^{2,+}\oplus\Om^{2,-}$ of the 2-forms into self-dual and anti-self-dual parts. An ASD connection $A$ on $P$ naturally induces the Yang$\en$Mills complex
$$\Om^{0}(X,\mathfrak{g}_{P})\xrightarrow{d_{A}}\Om^{1}(X,\mathfrak{g}_{P})\xrightarrow{d^{+}_{A}}{\Om^{2,+}(X,\mathfrak{g}_{P})}.$$
The $i$-th cohomology group $H^{i}_{A}$ of this complex if finite dimensional and the index $d=h^{0}-h^{1}+h^{2}$ ($h^{i}=dim H^{i}_{A}$ )is given by
$c(G)\kappa(P)-dim G(1-b_{1}+b^{+})$. $H^{0}_{A}$ is the Lie algebra of the stabilizer $\Ga_{A}$, the group of gauge transformation of $P$ fixing by $A$. We call a connection is $generic$ when $H^{0}_{A}=0$ and $H^{2}_{A}=0$. We denote $M_{gen}$ the subset of
$$M_{ASD}:=\{A\in\mathcal{A}_{P}:F_{A}+\ast F_{A}=0\}/\mathcal{G}_{P}$$
of $generic$ ASD connections on $P$. $M_{gen}$ becomes a smooth manifold whose tangent space is $H^{1}_{A}$. $M_{gen}$ consists exactly of all singular points and we have two types according to either case (1) in which $A$ is irreducible $H_{A}^{0}=0$ but $H^{2}_{A}\neq0$ or case (2) in which $A$ is reducible $H_{A}^{0}\neq0$. So if the anti-self-dual connections $[A]\in M_{ASD}$ are all $generic$, the moduli space $M_{ASD}$ is a smooth manifold. Furthermore, We call an ASD connection is $regular$ when $H^{2}_{A}=0$.

Now we will present the Kuranishi complex associated to the Kapustin$\en$Witten equations, see \cite[section 3.1]{HT}. Tanaka obtained an analogous Kuranishi complex to the Vafa$\en$Witten equations \cite{BM,Tanaka2015}. Let $(A,\phi)\in\mathcal{A}_{P}\times\Om^{1}(X,\mathfrak{g}_{P})$ be a smooth solution of the Kapustin$\en$Witten equations. Then, the infinitesimal deformation at $(A,\phi)$ is given by the following:
$$0\ra\Om^{1}(X,\mathfrak{g}_{P})\xrightarrow{d^{0}_{(A,\phi)}}\Om^{1}(X,\mathfrak{g}_{P})\oplus\Om^{1}(X,\mathfrak{g}_{P})
\xrightarrow{d^{1}_{(A,\phi)}}\Om^{2,-}(X,\mathfrak{g}_{P})\oplus\Om^{2,+}(X,\mathfrak{g}_{P})\ra 0,$$
where the map $d^{0}_{(A,\phi)}$ and $d^{1}_{(A,\phi)}$ are defined for $\xi\in\Om^{0}(X,\mathfrak{g}_{P})$ and $(a,b)\in\Om^{1}(X,\mathfrak{g}_{P})\oplus\Om^{1}(X,\mathfrak{g}_{P})$ by 
$$d^{0}_{(A,\phi)}(\xi)=(-d_{A}\xi,[\xi,\phi]),$$
$$d^{1}_{(A,\phi)}(a,b)=\big{(}(d_{A}b+[a,\phi])^{-},(d_{A}a+[b,\phi])^{+}\big{)}.$$
We compute
$$d^{1}_{(A,\phi)}\circ d^{0}_{(A,\phi)}(\xi)=\big{(}[\xi,(d_{A}\phi)^{-}],[\xi,(F_{A}+\phi\wedge\phi)^{+}]\big{)}.$$
Thus the pair $(A,\phi)$ satisfies Kapustin$\en$Witten equations if and only if $d^{1}_{(A,\phi)}\circ d^{0}_{(A,\phi)}=0$. 
We can define the homology groups: $H^{0}_{(A,\phi)}=\ker d^{0}_{(A,\phi)}$, $H^{1}_{(A,\phi)}=\frac{\ker d^{1}_{(A,\phi)}}{\rm{Im}d^{0}_{(A,\phi)}}$ and $H^{2}_{(A,\phi)}=\rm{Coker}d^{1}_{(A,\phi)}$.
We denote the isotropy group of the pair $(A,\phi)$ as $\Ga(A,\phi)=\{u \in\mathcal{G}_{P}|u^{\ast}(A,\phi)=(A,\phi)\}$. 
Recall that $H^{0}_{(A,\phi)}$ is the Lie algebra of the stabilizer of $(A,\phi)$ and $H^{1}_{(A,\phi)}$ is the formal tangent space. 
The dual complex is:
$$0\ra\Om^{2,-}(X,\mathfrak{g}_{P})\oplus\Om^{2,+}(X,\mathfrak{g}_{P})\xrightarrow{d^{1,\ast}_{(A,\phi)}}\Om^{1}(X,\mathfrak{g}_{P})
\oplus\Om^{1}(X,\mathfrak{g}_{P})\xrightarrow{d^{0,\ast}_{(A,\phi)}}\Om^{0}(X,\mathfrak{g}_{P})\ra 0.$$
where
$$d^{1,\ast}_{(A,\phi)}(a',b')=(d_{A}^{\ast}b'-\ast[\phi,a'],d^{\ast}_{A}a'+\ast[\phi,b']),$$
$$d^{0,\ast}_{(A,\phi)}(a,b)=(-d^{\ast}_{A}a+\ast[\phi,\ast b]).$$
\begin{prop}\label{26}
The map $KW(A,\phi)$ has an exact quadratic expansion given by
$$KW(A+a,\phi+b)=KW(A,\phi)+d^{1}_{A,\phi}(a,b)+\{(a,b),(a,b)\},$$
where $\{(a,b),(a,b)\}$ is the symmetric quadratic form given by
$$\{(a,b),(a,b)\}:=\big{(}[a,b]^{-},(a\wedge a+[b,b])^{+}\big{)}.$$
\end{prop} 
Given fixed $(A_{0},\phi_{0})$, we look for solutions to the inhomogeneous equation $KW(A_{0}+a,\phi_{0}+b)=\psi_{0}$. By Proposition \ref{26}, this equation is equivalent to
\begin{equation}
d^{1}_{A_{0},\phi_{0}}+\{(a,b),(a,b)\}=\psi_{0}-KW(A_{0},\phi_{0}).
\end{equation}
To make this equation elliptic, it's nature to impose the gauge-fixing condition
$$d^{0,\ast}_{(A_{0},\phi_{0})}(a,b)=\zeta.$$
If we define
\begin{equation}\nonumber
\begin{split}
&\D_{(A_{0},\phi_{0})}:=d^{0,\ast}_{(A_{0},\phi_{0})}+d^{1}_{(A_{0},\phi_{0})},\\
&\psi=\psi_{0}-KW(A_{0},\phi_{0}).\\
\end{split}
\end{equation}
then the elliptic system can be rewritten as
\begin{equation}\label{27}
\D_{(A_{0},\phi_{0})}+\{(a,b),(a,b)\}=(\zeta,\psi).
\end{equation}
Local interior estimates for the elliptic system (\ref{27}) are consider in \cite{FL2} equation 3.2 in the context of $PU(2)$ monopoles. This is also considered similarly for the Vafa-Witten equations in \cite{BM}.
\subsection{An inequality for the connections near an ASD connection}
Let $A_{\infty}$ be a fixing ASD connection on $P$,\ any connection $A$ can be written uniquely as
$$A=A_{\infty}+a\ with\ a\in\Om^{1}(X,\mathfrak{g}_{P}).$$
In \cite{Don}, Donaldson proved that the connection $A$ can be written as
$$A=\tilde{A}_{\infty}+d^{+,\ast}_{A_{\infty}}u,$$
where $\tilde{A}_{\infty}$ is also an ASD connection and $u\in\Om^{2,+}(X,\mathfrak{g}_{P})$,\ i.e.,\ the connection $A$ satisfies
\begin{equation}\label{AE11}
-d^{+}_{A_{\infty}}d^{+,\ast}_{A_{\infty}}u+(d^{+,\ast}_{A_{\infty}}u\wedge d^{+,\ast}_{A_{\infty}}u)^{+}-([a\wedge d^{+,\ast}_{A_{\infty}}u])^{+}+F^{+}_{A}=0
\end{equation}
when the connection $A_{\infty}$ is $regular$ and $a$ is small enough in $L^{2}_{1}$-norm. The operator $d_{A}^{+}d^{+,\ast}_{A}$ is an elliptic self-adjoint operator on the space of $L^{2}$ sections of $\Om^{2,+}TX\otimes\mathfrak{g}_{P}$.
It is a standard result that the spectrum of $d_{A}^{+}d_{A}^{+,\ast}$ is discrete, and the lowest eigenvalue is nonnegative. 
\begin{defi}\label{D3}
For $A_{\infty}\in M_{ASD}$,\ define
\begin{equation*}
\mu(A_{\infty}):=\inf_{v\in\Om^{2,+}(X,\mathfrak{g}_{P})\backslash\{0\}}\frac{\|d^{+,\ast}_{A_{\infty}}v\|^{2}}{\|v\|^{2}}.
\end{equation*}
is the lowest eigenvalue of $d^{+}_{A_{\infty}}d^{+,\ast}_{A_{\infty}}$.
\end{defi}
One can see that an ASD connection $A_{\infty}$ is $regular$, i.e., $\mu(A_{\infty})>0$. The Sobolev norms $L^{p}_{k,A}$,\ where $1\leq p<\infty$ and $k$ is an integer, with respect to the connections defined as:
\begin{equation}\nonumber
\|u\|_{L^{p}_{k,A}(X)}:=\big{(}\sum_{j=0}^{k}\int_{X}|\na^{j}_{A}u|^{p}dvol_{g}\big{)}^{1/p}, \forall u\in L^{p}_{k,A}(X,\mathfrak{g}_{P}),
\end{equation}
where $\na^{j}_{A}:=\na_{A}\circ\ldots\circ\na_{A}$ (repeated $j$ times for $j\geq0$).
\begin{thm}\label{AT1}(\cite[Proposition 22]{Don})
Let $X$ be a closed, four-dimensional, smooth Riemannian manifold with a smooth Riemannian metric, $G$ be a compact Lie group, $P$ be a smooth principal $G$-bundle over $X$. If there is a $C^{\infty}$ ASD connection $A_{\infty}$ on $P$ that is $regular$, then there is constant $\sigma=\sigma(\mu(A_{\infty}),g,G)\in(0,1]$ with the following significance. If $A:=A_{\infty}+b$ is a smooth connection on $P$obeying
$$\|\na_{A}b\|_{L^{2}(X)}+\|b\|_{L^{2}(X)}\|F_{A_{\infty}}\|_{L^{4}(X)}\leq\sigma,$$
then there exist a solution $a:=d^{+,\ast}_{A_{\infty}}u\in\Om^{1}(X,\mathfrak{g}_{P})$ where $u\in\Om^{2,+}(X,\mathfrak{g}_{P})$ to Equation (\ref{AE11}).\ In fact,\ the connection $\tilde{A}_{\infty}:=A-a$ is an anti-self-dual connection on $P$. Further, there exist a constant $C=C(\mu(A_{\infty}),g,G)\in(0,\infty)$ such that
$$\|\na_{A}a\|_{L^{2}(X)}\leq C\|F^{+}_{A}\|_{L^{2}(X)}+\|F_{A_{\infty}}\|_{L^{4}}\|F^{+}_{A}\|_{L^{4/3}},$$
$$\|a\|_{L^{2}(X)}\leq C\|F^{+}_{A}\|_{L^{4/3}(X)}.$$
\end{thm}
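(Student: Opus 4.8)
The plan is to recast the anti-self-duality condition on $\tilde{A}_{\infty}=A-a$ as a single nonlinear elliptic equation for $u\in\Om^{2,+}(X,\mathfrak{g}_{P})$, with $F^{+}_{A}$ playing the role of the inhomogeneous term, and then to solve it by a quantitative contraction-mapping argument. First I would expand the self-dual curvature: writing $A=A_{\infty}+b$ and substituting $a=d^{+,\ast}_{A_{\infty}}u$, and using that $F_{A_{\infty}}^{+}=0$ since $A_{\infty}$ is ASD, one finds
$$F^{+}_{A-a}=F^{+}_{A}-d^{+}_{A_{\infty}}d^{+,\ast}_{A_{\infty}}u-[b\wedge d^{+,\ast}_{A_{\infty}}u]^{+}+(d^{+,\ast}_{A_{\infty}}u\wedge d^{+,\ast}_{A_{\infty}}u)^{+},$$
so that $\tilde{A}_{\infty}$ is anti-self-dual precisely when $u$ solves Equation (\ref{AE11}). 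The leading linear operator is $L:=d^{+}_{A_{\infty}}d^{+,\ast}_{A_{\infty}}$, which is elliptic and self-adjoint on $\Om^{2,+}$; the hypothesis that $A_{\infty}$ is regular is exactly the statement $\mu(A_{\infty})>0$ of Definition \ref{D3}, so $d^{+,\ast}_{A_{\infty}}$ is injective on $\Om^{2,+}$, $L$ is invertible with bounded Green operator $G:=L^{-1}$, and the coercivity bound $\|d^{+,\ast}_{A_{\infty}}v\|_{L^{2}}^{2}\geq\mu(A_{\infty})\|v\|_{L^{2}}^{2}$ holds.

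Next I would rewrite (\ref{AE11}) as the fixed-point problem $u=T(u):=G\big(F^{+}_{A}+N_{b}(u)\big)$, where $N_{b}(u):=(d^{+,\ast}_{A_{\infty}}u\wedge d^{+,\ast}_{A_{\infty}}u)^{+}-[b\wedge d^{+,\ast}_{A_{\infty}}u]^{+}$ collects the quadratic and $b$-linear terms. I would work on a small ball of $L^{2}_{2,A_{\infty}}(\Om^{2,+})$, so that $a=d^{+,\ast}_{A_{\infty}}u\in L^{2}_{1}$. Using the four-dimensional multiplication estimate $L^{2}_{1}\times L^{2}_{1}\to L^{2}$ together with the Sobolev embedding $L^{2}_{1}\hookrightarrow L^{4}$, one obtains
$$\|N_{b}(u)-N_{b}(u')\|_{L^{2}}\lesssim\big(\|u\|_{L^{2}_{2}}+\|u'\|_{L^{2}_{2}}+\|b\|_{L^{2}_{1}}\big)\|u-u'\|_{L^{2}_{2}};$$
combined with the elliptic bound for $G$ and the smallness hypothesis $\|\na_{A}b\|_{L^{2}}+\|b\|_{L^{2}}\|F_{A_{\infty}}\|_{L^{4}}\leq\sigma$ (which also forces $\|F^{+}_{A}\|_{L^{2}}$ to be $O(\sigma)$), choosing $\sigma$ small enough makes $T$ a contraction of a ball of radius comparable to $\|F^{+}_{A}\|_{L^{2}}$ into itself. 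The Banach fixed-point theorem then produces a unique small solution $u$, and bootstrapping elliptic regularity on (\ref{AE11}) shows $u\in C^{\infty}$, so that $a=d^{+,\ast}_{A_{\infty}}u$ is smooth and $\tilde{A}_{\infty}=A-a$ is a smooth ASD connection.

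It remains to extract the two a priori estimates, and here lies the main technical work. Applying the $L^{2}$ elliptic estimate for $L$ to (\ref{AE11}) and absorbing the quadratic term by the smallness of $u$ gives $\|\na_{A}a\|_{L^{2}}\lesssim\|F^{+}_{A}\|_{L^{2}}$, the cross term $[b\wedge d^{+,\ast}_{A_{\infty}}u]$ producing, via H\"{o}lder, the additional $\|F_{A_{\infty}}\|_{L^{4}}\|F^{+}_{A}\|_{L^{4/3}}$ contribution. The delicate estimate is $\|a\|_{L^{2}}\leq C\|F^{+}_{A}\|_{L^{4/3}}$, which is scaling-critical. I would write $a=d^{+,\ast}_{A_{\infty}}G\big(F^{+}_{A}+N_{b}(u)\big)$ and use the Calder\'{o}n--Zygmund estimate for $L$ in $L^{4/3}$ together with the critical four-dimensional Sobolev embedding $L^{4/3}_{1}(X)\hookrightarrow L^{2}(X)$, which gives $\|a\|_{L^{2}}\lesssim\|F^{+}_{A}+N_{b}(u)\|_{L^{4/3}}$; the nonlinear remainder $\|N_{b}(u)\|_{L^{4/3}}$ is quadratically small and is absorbed into the left-hand side after re-expressing it through $\|a\|_{L^{2}}$ by interpolation and the smallness of $u$ and $b$. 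Equivalently one may pair (\ref{AE11}) with $u$, bounding $\|a\|_{L^{2}}^{2}=\langle u,Lu\rangle=\langle u,F^{+}_{A}+N_{b}(u)\rangle$ by $\|u\|_{L^{4}}\|F^{+}_{A}\|_{L^{4/3}}$ and using $\|u\|_{L^{4}}\lesssim\|u\|_{L^{2}_{1}}$ with coercivity to close. I expect this scaling-critical absorption step to be the crux; the existence and smoothness are a standard implicit-function-theorem package once $L$ is known to be invertible.
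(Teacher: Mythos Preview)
Your proposal is correct and follows essentially the same approach the paper indicates: the paper does not give its own proof but simply remarks that the theorem ``follows the method of proof of \cite{T4} Theorem~2.2 applied to $F^{+}(A+d^{+,\ast}_{A_{\infty}}u)=0$'' and refers to \cite{Don} Proposition~22. Your contraction-mapping argument for the fixed-point equation $u=G\big(F^{+}_{A}+N_{b}(u)\big)$, using invertibility of $d^{+}_{A_{\infty}}d^{+,\ast}_{A_{\infty}}$ from the regularity hypothesis and then extracting the $L^{2}$ and $L^{4/3}$ estimates, is exactly Taubes' scheme, so you have supplied the details the paper omits.
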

This theorem \ref{AT1} by follows the method of proof of \cite[Theorem 2.2 ]{T4} applied to equation $F^{+}(A+d^{+,\ast}_{A_{\infty}}u)=0$.

\section{Non-existence solutions on a neighborhood of a $generic$ ASD connection}
\subsection{Decoupled Kapustin$\en$Witten equations}
\begin{defi}\label{D2.2}
Let $G$ be a compact Lie group, $P$ be a principal $G$-bundle over a closed, smooth four-manifold $X$ with a smooth Riemannian
metric $g$. The $decoupled$ Kapustin$\en$Witten equations on $P$ over $X$ are the equations require a pair $(A,\phi)\in\mathcal{A}_{P}\times\Om^{1}(X,\mathfrak{g}_{P}
)$ satisfies
\begin{equation*}
\begin{split}
&F^{+}_{A}=0,\  (\phi\wedge\phi)^{+}=0,\\ 
&d^{\ast}_{A}\phi=0,\ d_{A}\phi)^{-}=0.\\
\end{split}
\end{equation*}
\end{defi}
\begin{lem}
If a pair $(A,\phi)$ is a smooth solution of  $decoupled$ Kapustin$\en$Witten equations, then the extra field $\phi$ also obeys
$$\phi\wedge\phi=0,\ d_{A}\phi=0.$$
\end{lem}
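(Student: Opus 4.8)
The plan is to show that the constraint $(\phi\wedge\phi)^{+}=0$, together with the decoupled equations $(d_A^{\ast}\phi)^{-}=d_A\phi=0$, forces the stronger statement $\phi\wedge\phi=0$. Write $\phi=\sum_i \phi_i\, e^i$ in a local orthonormal coframe, with $\phi_i$ sections of $\mathfrak{g}_P$; then $\phi\wedge\phi=\sum_{i<j}[\phi_i,\phi_j]\,e^i\wedge e^j$ is a $\mathfrak{g}_P$-valued $2$-form. The self-dual projection of this $2$-form already vanishes, so it remains to kill the anti-self-dual part as well. I would do this by an integration-by-parts (Weitzenböck/vanishing) argument rather than a pointwise computation, pairing $\phi\wedge\phi$ against a natural expression built from $d_A\phi$.

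First I would record the algebraic identity relating $d_A^{\ast}\phi$, $d_A\phi$, and $\phi\wedge\phi$ that follows from the equations. Since $X$ is closed and $\phi$ is a $1$-form, the condition $d_A\phi=0$ says the full exterior derivative vanishes (not merely its anti-self-dual part), which is stronger than what the Kapustin-Witten equations alone give; this is exactly the extra input the decoupled system provides. Combined with $(d_A^{\ast}\phi)^{-}=0$, one controls the entire derivative of $\phi$ up to its trace part. The key step is then to expand $d_A(\phi\wedge\phi)=[d_A\phi\wedge\phi]-[\phi\wedge d_A\phi]=0$, so that $\phi\wedge\phi$ is $d_A$-closed; pairing this against the self-duality data and using $(\phi\wedge\phi)^{+}=0$ should force the $2$-form $\phi\wedge\phi$ to be both closed and anti-self-dual and simultaneously exact/coexact in a way that leaves no room for it to be nonzero.

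Concretely, I would compute $\|\phi\wedge\phi\|_{L^2}^2$ by integration by parts, writing it as an integral pairing of $\phi\wedge\phi$ with itself and inserting the identity $\ast(\phi\wedge\phi)=-(\phi\wedge\phi)$ valid because the self-dual part vanishes. The plan is to convert this into an expression involving $d_A\phi$ and $d_A^{\ast}\phi$, both of which vanish by the decoupled equations, and thereby conclude $\int_X|\phi\wedge\phi|^2\,dvol_g=0$, hence $\phi\wedge\phi\equiv0$. The statement $d_A\phi=0$ is then simply a restatement of one of the defining equations, so once $\phi\wedge\phi=0$ is established the lemma is complete.

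The main obstacle I anticipate is handling the Lie-bracket terms carefully: the products $[\phi_i,\phi_j]$ are not independent, and the anti-self-dual components of $\phi\wedge\phi$ must be shown to vanish using only the constraints available, without accidentally assuming $\phi$ itself is parallel. The delicate point is organizing the integration by parts so that every boundary-free term that survives is manifestly a nonnegative multiple of $\|\phi\wedge\phi\|_{L^2}^2$ while all other terms are genuinely zero by the equations $d_A\phi=0$ and $(d_A^{\ast}\phi)^{-}=0$; getting the signs and the self-dual/anti-self-dual bookkeeping exactly right is where the real work lies.
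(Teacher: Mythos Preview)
Your plan has two genuine problems.

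First, you have inverted which statements are hypotheses and which are conclusions. The decoupled equations (there is a typographical slip in the definition) give $(d_{A}\phi)^{-}=0$ and $d_{A}^{\ast}\phi=0$, \emph{not} $d_{A}\phi=0$; indeed the paper's own proof begins the second half with ``From $(d_{A}\phi)^{-}=0$, i.e.\ $d_{A}\phi=\ast d_{A}\phi$ \ldots''. So $d_{A}\phi=0$ is something you must prove, and your sentence ``the statement $d_{A}\phi=0$ is then simply a restatement of one of the defining equations'' is exactly where your argument collapses.

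Second, your strategy for showing $\phi\wedge\phi=0$ via integration by parts cannot close. The quantity $\|\phi\wedge\phi\|_{L^{2}}^{2}$ is zeroth order in $\phi$; there is no honest integration by parts that converts it into an expression in $d_{A}\phi$ and $d_{A}^{\ast}\phi$. And the weaker statement you do reach---that $\phi\wedge\phi$ is $d_{A}$-closed and anti-self-dual---does not force it to vanish (anti-self-dual harmonic $2$-forms exist in abundance). The paper instead uses a pointwise algebraic identity: for any $\mathfrak{g}_{P}$-valued $1$-form $\phi$ one has $tr(\phi\wedge\phi\wedge\phi\wedge\phi)=0$, because cycling one copy of $\phi$ past three others in the trace picks up the sign $(-1)^{3}=-1$. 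Integrating this and using the self-dual/anti-self-dual decomposition gives $\|(\phi\wedge\phi)^{+}\|_{L^{2}}^{2}=\|(\phi\wedge\phi)^{-}\|_{L^{2}}^{2}$, so $\|\phi\wedge\phi\|_{L^{2}}^{2}=2\|(\phi\wedge\phi)^{+}\|_{L^{2}}^{2}=0$.

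Once $\phi\wedge\phi=0$ is in hand, the paper gets $d_{A}\phi=0$ by the computation you never reached: since $d_{A}\phi$ is self-dual, $d_{A}^{\ast}d_{A}\phi=-\ast d_{A}\ast d_{A}\phi=-\ast d_{A}d_{A}\phi=-\ast[F_{A},\phi]$, and pairing with $\phi$ yields $\|d_{A}\phi\|_{L^{2}}^{2}=\int_{X}tr(F_{A}\wedge[\phi\wedge\phi])=0$.
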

\begin{proof}
At first, we observe that $\int_{X}tr(\phi\wedge\phi\wedge\phi\wedge\phi)=0$, hence 
$$\|\phi\wedge\phi\|^{2}_{L^{2}(X)}=2\|(\phi\wedge\phi)^{+}\|^{2}_{L^{2}(X)}=0,$$
i.e., $\phi\wedge\phi=0$. Following the identity $(d_{A}\phi)^{-}=0$, i.e., $d_{A}\phi=\ast d_{A}\phi$, we obtain that $$d^{\ast}_{A}d_{A}\phi=-\ast[F_{A},\phi].$$ 
Hence take the $L^{2}$ inner of above identity with $\phi$ and integrable by parts, we obtain
\begin{equation*}
\|d_{A}\phi\|^{2}_{L^{2}(X)}=-\int_{X}tr([F_{A},\phi)\wedge\phi]=
\int_{X}tr(F_{A}\wedge[\phi\wedge\phi])=0,
\end{equation*}
i.e., $d_{A}\phi=0$. We complete the proof of this lemma.
\end{proof}
We recall a vanishing theorem on the extra fields of $decoupled$ Kapustin$\en$Witten equations. The prove is similar to Vafa$\en$Witten equations \cite[Theorem 4.2.1]{BM}. At first, we recall a useful lemma proved by Donaldson \cite[Lemma 4.3.21]{DK}.
\begin{lem}\label{L1}
If $A$ is an irreducible $SU(2)$ or $SO(3)$ ASD connection on a bundle $P$ over a simply connected four-manifold $X$, then the restriction of $A$ to any non-empty open set in $X$ is also irreducible.
\end{lem}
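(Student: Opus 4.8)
The plan is to argue by contraposition, in the form most convenient for gauge theory. For structure group $SU(2)$ or $SO(3)$ the connection $A$ is reducible over an open set $W$ precisely when the adjoint bundle $\mathfrak{g}_{E}$ admits a nonzero covariantly constant section over $W$, i.e. a $\xi\in\Om^{0}(W,\mathfrak{g}_{E})$ with $\na_{A}\xi=0$ and $\xi\not\equiv0$. So I must show that if some restriction $A|_{U}$ admits such a $\xi$, then $A$ admits a global one, contradicting irreducibility of $A$ on $X$. First I would record the two elementary consequences of $\na_{A}\xi=0$ on $U$: since $d|\xi|^{2}=2\langle\na_{A}\xi,\xi\rangle=0$, the norm $|\xi|$ is a nonzero constant on each component of $U$, so I may take $U$ connected and normalise $|\xi|\equiv1$; and differentiating once more gives $[F_{A},\xi]=\na_{A}^{2}\xi=0$ on $U$, so the curvature takes values in the $\mathfrak{u}(1)$-line $\mathbb{R}\xi\subset\mathfrak{g}_{E}$ over $U$. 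In other words $A|_{U}$ is an abelian (reduced) connection and all the transverse curvature vanishes on $U$.

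Next I would try to extend $\xi$ over all of $X$ by parallel transport: fixing $x_{0}\in U$, for $y\in X$ and a path $\gamma$ from $x_{0}$ to $y$ set $\hat\xi(y)=\tau_{\gamma}\xi(x_{0})$, where $\tau_{\gamma}$ is parallel transport in the adjoint representation (which preserves the bracket and the Killing metric, so $|\hat\xi|\equiv1$). This defines a genuine section exactly when $\hat\xi(y)$ is independent of $\gamma$, equivalently when the holonomy group $\mathrm{Hol}_{x_{0}}(A)$ fixes $\xi(x_{0})$ under $\mathrm{Ad}$. Here I would use simple-connectivity of $X$: it forces $\mathrm{Hol}_{x_{0}}(A)=\mathrm{Hol}^{0}_{x_{0}}(A)$ to be connected, so it suffices to show the holonomy Lie algebra $\mathfrak{h}$ annihilates $\xi(x_{0})$, i.e. $[\mathfrak{h},\xi(x_{0})]=0$. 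By the Ambrose--Singer theorem $\mathfrak{h}$ is spanned by the parallel transports $\tau_{\sigma}^{-1}F_{A}(y)\tau_{\sigma}$ of curvature values, so the question reduces to whether $[F_{A}(y),\hat\xi(y)]=0$ for every $y\in X$ --- a condition I already know on $U$ from the previous step.

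The main obstacle, and the only place the anti-self-duality hypothesis is really needed, is precisely this propagation of $[F_{A},\xi]=0$ (equivalently, of the abelian reduction) from $U$ to all of $X$; without it the holonomy generated by curvature outside $U$ could fail to fix $\xi$, and the extension would be only locally defined. I would resolve this by unique continuation. For an ASD connection the Bianchi identity $d_{A}F_{A}=0$ together with $\ast F_{A}=-F_{A}$ gives $d_{A}^{\ast}F_{A}=0$, so $F_{A}$ is $d_{A}$-harmonic and, by the Weitzenb\"ock formula, satisfies a second-order elliptic equation of the schematic form $\na_{A}^{\ast}\na_{A}F_{A}=\{\mathrm{Riem}\}\!\cdot\!F_{A}+[F_{A},F_{A}]$, which has the Aronszajn strong unique continuation property. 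Concretely I would run an open--closed argument: let $V\supseteq U$ be the maximal connected open set on which $\hat\xi$ is defined and parallel; it is nonempty and open, and on $V$ one has $[F_{A},\hat\xi]=0$. The gluing of local parallel extensions along $\partial V$ is consistent because the difference of two parallel sections is itself parallel, hence harmonic, and vanishes on an open set, so it vanishes on the connected overlap by unique continuation for $\na_{A}^{\ast}\na_{A}$. If $V\neq X$, then near a boundary point the component of $F_{A}$ transverse to the line $\mathbb{R}\hat\xi$ is a section vanishing on the open set $V$ but not identically, contradicting unique continuation for the harmonic curvature $F_{A}$. Hence $V=X$, $\hat\xi$ is a global covariantly constant section of $\mathfrak{g}_{E}$, and $A$ is reducible on $X$ --- the desired contradiction. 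One may package the same conclusion invariantly as: unique continuation forces $\mathfrak{hol}^{0}_{x_{0}}(A|_{U})=\mathfrak{hol}^{0}_{x_{0}}(A|_{X})\subseteq\mathfrak{u}(1)$, and simple-connectivity upgrades this to $\mathrm{Hol}_{x_{0}}(A)\subseteq U(1)$, a global reduction. I expect the delicate point to be the verification that the transverse curvature is a legitimate object to unique-continue, since it is defined through the only partially-extended $\hat\xi$; this is exactly what the open--closed bookkeeping is designed to handle.
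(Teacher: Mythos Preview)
The paper does not give its own proof of this lemma; it is quoted as Donaldson--Kronheimer, Lemma~4.3.21, and then used. So there is no in-paper argument to compare against, and I assess your sketch on its own.

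Your architecture is the right one --- contrapose, obtain a unit parallel $\xi$ on $U$, note $[F_{A},\xi]=0$ there, and try to globalize $\xi$ via Ambrose--Singer and simple-connectivity, with unique continuation for the $d_{A}$-harmonic curvature as the engine. But the step you yourself flag as ``delicate'' is a genuine gap, and the open--closed bookkeeping you describe does not close it. Aronszajn's theorem applies to a section that is defined on the \emph{larger} domain and satisfies a differential inequality of the form $|\na_{A}^{\ast}\na_{A}u|\leq C(|u|+|\na_{A}u|)$ there. Your candidate $u$, ``the component of $F_{A}$ transverse to $\mathbb{R}\hat\xi$'', is defined only on $V$, precisely because $\hat\xi$ is. If you manufacture an extension $\tilde\xi$ on a ball $B\ni p\in\partial V$ by (say) radial parallel transport from some $q\in V\cap B$, then $\tilde\xi$ is only radially parallel, and projecting the Weitzenb\"ock identity for $F_{A}$ onto $\tilde\xi^{\perp}$ produces inhomogeneous terms proportional to $\na_{A}\tilde\xi$ contracted with the $\tilde\xi$-component of $F_{A}$ and of $\na_{A}F_{A}$. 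Those terms do not vanish with the transverse part, so no Aronszajn inequality for $(F_{A})^{\perp}$ alone is available. The circularity is real: you need $\hat\xi$ parallel to decouple the transverse curvature, and you need the transverse curvature to vanish to make $\hat\xi$ parallel.

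Donaldson--Kronheimer avoid this by running unique continuation one level up --- on the connection (equivalently, on gauge transformations) rather than on a projected piece of the curvature. The surrounding lemmas in \S4.3 of their book show that in relative Coulomb gauge the ASD condition is a genuinely elliptic first-order system for the connection one-form, so Aronszajn--Cordes applies with no projection needed; consequently a bundle map intertwining two ASD connections over an open set extends globally. Granting that, your parallel $\xi$ yields the one-parameter family $g_{t}=\exp(t\xi)$ of local automorphisms of $A$ on $U$; each $g_{t}$ extends to a global automorphism, so $\Gamma_{A}$ contains a circle, contradicting irreducibility. If you want to repair your own line of argument, the object to unique-continue should be the \emph{connection form} in a gauge that makes ASD elliptic (Coulomb, not radial), not the transverse curvature.
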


\begin{thm}(\cite[Theorem 2.9] {HT} )\label{T2.6}
Let $X$ be a closed, simply-connected, smooth, oriented, Riemannian four-manifold, $P\rightarrow X$ be an $SU(2)$ or $SO(3)$ principal bundle. Suppose that $(A,\phi)$ is a solution of the $decoupled$ Kapustin$\en$Witten equations. If $A$ is irreducible, then the extra fields $\phi$ vanishes.
\end{thm}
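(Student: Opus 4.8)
The plan is to convert the two facts already in hand---$\phi\wedge\phi=0$ and $d_A\phi=0$ from the preceding lemma, together with $d_A^{\ast}\phi=0$ from the decoupled equations---into a pointwise structural description of $\phi$, and then to produce a local reduction of $A$ that contradicts irreducibility through Lemma \ref{L1}. First I would exploit the algebra of $\mathfrak{g}_P$. Since $G$ is $SU(2)$ or $SO(3)$, the fibre is $\mathfrak{su}(2)\cong\mathbb{R}^{3}$ with bracket the cross product, so two elements commute if and only if they are collinear. Writing $\phi=\sum_i\phi_i\,e^i$ in a local orthonormal coframe, the equation $\phi\wedge\phi=0$ reads $[\phi_i,\phi_j]=0$ for all $i,j$, so the $\phi_i(x)$ span a line in $\mathfrak{g}_P|_x$ at each point. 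Hence on the open set $U:=\{x:\phi(x)\neq0\}$ one can write $\phi=a\,\xi$, where $a$ is a real-valued $1$-form with $|a|=|\phi|>0$ on $U$ and $\xi$ is a locally defined smooth unit section of $\mathfrak{g}_P|_U$ (for instance $\xi=\phi_{i_0}/|\phi_{i_0}|$ wherever $\phi_{i_0}\neq0$). Since $|\xi|^{2}\equiv1$, the $1$-form $\beta:=\nabla_A\xi$ is pointwise orthogonal to $\xi$ in $\mathfrak{g}_P$.

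The core of the argument is to show $\beta=0$ on $U$, i.e. that $\xi$ is covariantly constant. Substituting $\phi=a\,\xi$ and using the Leibniz identities $d_A(a\,\xi)=da\,\xi-a\wedge\beta$ and $d_A^{\ast}(a\,\xi)=(d^{\ast}a)\,\xi-\iota_{a^{\sharp}}\beta$, I would decompose each identity along the orthogonal splitting $\mathfrak{g}_P=\mathbb{R}\xi\oplus\xi^{\perp}$. The terms $da\,\xi$ and $(d^{\ast}a)\,\xi$ take values in $\mathbb{R}\xi$, while $a\wedge\beta$ and $\iota_{a^{\sharp}}\beta$ take values in $\xi^{\perp}$ (each $\beta(e_i)\perp\xi$); therefore $d_A\phi=0$ and $d_A^{\ast}\phi=0$ separate into $da=0$, $d^{\ast}a=0$ together with
$$a\wedge\beta=0,\qquad \iota_{a^{\sharp}}\beta=0.$$
For each $\mathfrak{g}$-component $\beta^{(k)}$ of $\beta$, the relation $a\wedge\beta^{(k)}=0$ forces $\beta^{(k)}=\lambda_k\,a$ on $U$ (where $a\neq0$), and then $\iota_{a^{\sharp}}\beta^{(k)}=\lambda_k|a|^{2}=0$ gives $\lambda_k=0$. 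Thus $\nabla_A\xi=\beta=0$ throughout $U$.

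Finally I would close the argument. A nonzero parallel section $\xi$ of $\mathfrak{g}_P$ over $U$ forces the holonomy of $A|_U$ into the stabilizer $U(1)\subset SU(2)$ of $\xi$, so $A|_U$ is reducible. But $A$ is an irreducible anti-self-dual connection (indeed $F_A^{+}=0$) over the simply-connected $X$, so Lemma \ref{L1} guarantees that the restriction of $A$ to the nonempty open set $U$ is again irreducible, a contradiction. Hence $U=\emptyset$ and $\phi\equiv0$.

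The main obstacle is the middle step: one genuinely needs both first-order equations. Using $d_A\phi=0$ alone yields only $\nabla_A\xi=a\otimes\gamma$ with $\gamma\perp\xi$, which does not force $\gamma=0$; it is precisely the coupling of $a\wedge\beta=0$ with $\iota_{a^{\sharp}}\beta=0$ that collapses $\beta$ to zero. A secondary technical point is the smoothness of $\xi$ near $\partial U$, but since the reduction argument is entirely local it suffices to work on a single small ball inside $U$, where a smooth single-valued choice of $\xi$ is immediate, so this causes no difficulty.
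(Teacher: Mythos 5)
Your argument is correct and is essentially the paper's own proof: both extract the rank-one factorization $\phi=a\otimes\xi$ with $|\xi|=1$ from $\phi\wedge\phi=0$ (you via collinearity of commuting elements of $\mathfrak{su}(2)$ in a coframe decomposition, the paper via $\phi_i\wedge\phi_j=0$ in a Lie-algebra basis decomposition), then use $d_A\phi=d_A^{\ast}\phi=0$ to force $\nabla_A\xi=0$ on the open set where $\phi\neq0$ (your pair $a\wedge\beta=0$, $\iota_{a^{\sharp}}\beta=0$ is exactly the paper's pair $d_A\xi\wedge\omega=0$, $d_A\xi\wedge\ast\omega=0$), and conclude by the reducibility contradiction through Lemma \ref{L1}. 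No gaps.
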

\begin{proof}
We denote $Z^{c}$ by the complement of the zero of $\phi$. By unique continuation of the elliptic equation $(d_{A}+d^{\ast}_{A})\phi=0$, $Z^{c}$ is either empty or dense. Since $\phi\wedge\phi=0$, $\phi$ has at most rank one. The Lie algebra of $SU(2)$ or $SO(3)$ is three-dimensional, with basis $\{\sigma^{i}\}_{i=1,2,3}$ and Lie brackets
$\{\sigma^{i},\sigma^{j}\}=2\varepsilon_{ijk}\sigma^{k}$. In a local coordinate, we can set $\phi=\sum_{i=1}^{3}\phi_{i}\sigma^{i}$, where $\phi_{i}\in\Om^{1}(X)$. We then have
$$0=\phi\wedge\phi=2(\phi_{1}\wedge\phi_{2})\sigma^{3}+2(\phi_{3}\wedge\phi_{1})\sigma^{2}+2(\phi_{2}\wedge\phi_{3})\sigma^{1},$$
i.e.,
\begin{equation}\label{90}
0=\phi_{1}\wedge\phi_{2}=\phi_{3}\wedge\phi_{1}=\phi_{2}\wedge\phi_{3}.
\end{equation}
On $Z^{c}$,\ $\phi$ is non-zero,\ then  without loss of generality we can assume that $\phi_{1}$ is non-zero.\ From (\ref{90}),\ there exist functions $\mu$ and $\nu$ such that
$$\phi_{2}=\mu\phi_{1}\ and\ \phi_{3}=\nu\phi_{1}.$$
Hence we would re-written $\phi$ to
\begin{equation*}
\phi=\phi_{1}(\sigma^{1}+\mu\sigma^{2}+\nu\sigma^{3})=\phi_{1}(1+\mu^{2}+\nu^{2})^{1/2}(\frac{\sigma^{1}+\mu\sigma^{2}+\nu\sigma^{3}}{\sqrt{1+\mu^{2}+\nu^{2}}}).
\end{equation*}
Then on $Z^{c}$ write $\phi=\xi\otimes\w$ for $\xi\in\Om^{0}(Z^{c},\mathfrak{g}_{P})$ with $\langle\xi,\xi\rangle=1$,\ and $\w\in\Om^{1}(Z^{c})$.\ We compute
$$0=d_{A}(\xi\otimes\w)=d_{A}\xi\wedge\w-\xi\otimes d\w,$$
$$0=d_{A}\ast(\xi\otimes\w)=d_{A}\xi\wedge\ast\w-\xi\otimes d\ast\w.$$
Taking the inner product with $\xi$ and using the consequence of $\langle\xi,\xi\rangle=1$ that $\langle\xi,d_{A}\xi\rangle=0$, we get $d\w=d^{\ast}\w=0$. It follows that $d_{A}\xi\wedge\w=0$ and $d_{A}\xi\wedge\ast\w=0$. Since $\w$ is nowhere zero along $Z^{c}$, we must have $d_{A}\xi=0$ along $Z^{c}$. Therefore, $A$ is reducible along $Z^{c}$. However according to  Lemma \ref{L1}, $A$ is irreducible along $Z^{c}$. This is a contradiction unless $Z^{c}$ is empty. Therefore $Z=X$, so $\phi$  identically zero.
\end{proof} 
\subsection{A bounded property of extra fields}
In this section, we will proved that the extra fields have a lower positive bounded if the connections on the neighborhood of a $regular$ ASD connection. The result has a direct proof by using the Kuranshi model for the moduli space of Kapustin$\en$Witten solutions around the pair $(A_{\infty},0)$, where $A_{\infty}$ is an ASD connection. Our proof employs a Weizenb\"{o}ck formula, the idea could extends to the cases of Kapustin$\en$Witten equations and Vafa$\en$Witten equations on K\"{a}hler surface. We need recall a bounded on $\|\phi\|_{L^{\infty}}$ in terms of $\|\phi\|_{L^{2}}$. The technique is analogy to Vafa$\en$Witten equations case \cite{BM}.
\begin{thm}(\cite[Theorem 2.4]{HT} )\label{T2.1}
If the pair $(A,\phi)$ is a smooth solution of Kapustin$\en$Witten equations over a closed, smooth, Riemannian four-manifold, then
$$\|\phi\|_{L^{\infty}(X)}\leq C\|\phi\|_{L^{2}(X)},$$
where $C=C(g)$ is a positive constant only depends on metric $g$.
\end{thm}
If we suppose that the connection $[A]$ is in a neighborhood of a $regular$ ASD connection $[A_{\infty}]$, then the Theorem \ref{AT1} which provides existence of an other ASD connection $\tilde{A}_{\infty}$ on $P$ and a Sobolev norm estimate for the distance between $A$ and $\tilde{A}_{\infty}$. Following the idea in \cite{HT}, we then have
\begin{prop}\label{P3.9}
Let $X$ be a closed, oriented, smooth, four-dimensional Riemannian manifold with Riemannian metric $g$, $P\rightarrow X$ be a principal $G$-bundle with $G$ being a compact Lie group with $p_{1}(P)$ negative. Suppose that $A_{\infty}$ is an $regular$ ASD connection on $P$, i.e., $\ker {d^{+}_{A_{\infty}}|_{\Om^{2,+}(X,\mathfrak{g}_{P})}}=0$, then there are positive constants $\de=\de(g,A_{\infty})\in(0,1)$, $C=C(g,A_{0})\in[1,\infty)$ with following significance. If $(A,\phi)$ is a smooth solution of Kapustin$\en$Witten equations and the connection $A$ obeys
$$dist(A,A_{\infty})\leq\de$$
then either the extra field obeying
$$\|\phi\|_{L^{2}}\geq C$$ 
or $A$ is anti-self-dual with  respect to $g$.
\end{prop}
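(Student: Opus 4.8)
The plan is to play the two Kapustin--Witten equations off against each other: the first equation lets me bound the \emph{self-dual} deformation $a$ (produced by Theorem \ref{AT1}) quadratically in $\phi$, while a coercive estimate for the linearisation of the second equation bounds $\phi$ by its own cube, and the only way both can hold is $\phi=0$ or $\|\phi\|_{L^2}\ge C$. First I would gauge-fix: since $dist(A,A_\infty)\le\de$, after acting by a gauge transformation I may assume $A=A_\infty+b$ with $\|b\|_{L^2_1}$ as small as desired (taking $\de$ small). As $A_\infty$ is regular, Theorem \ref{AT1} then applies and yields an anti-self-dual connection $\tilde A_\infty=A-a$ with $a=d^{+,\ast}_{A_\infty}u$, together with the estimates $\|a\|_{L^2}\le C\|F^+_A\|_{L^{4/3}}$ and $\|\na_A a\|_{L^2}\le C\big(\|F^+_A\|_{L^2}+\|F_{A_\infty}\|_{L^4}\|F^+_A\|_{L^{4/3}}\big)$. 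Note $\tilde A_\infty=A_\infty+(b-a)$ lies within $O(\de)$ of $A_\infty$ in $L^2_1$.

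Next I convert the first equation into a bound on $a$. The equation $F^+_A=(\phi\wedge\phi)^+$ gives pointwise $|F^+_A|\le|\phi|^2$, hence $\|F^+_A\|_{L^2}\le\|\phi\|_{L^4}^2$ and $\|F^+_A\|_{L^{4/3}}\le\|\phi\|_{L^{8/3}}^2$. Feeding in Theorem \ref{T2.1}, namely $\|\phi\|_{L^\infty}\le C\|\phi\|_{L^2}$ (which interpolates to $\|\phi\|_{L^4}\le C\|\phi\|_{L^2}$ and $\|\phi\|_{L^{8/3}}\le C\|\phi\|_{L^2}$), I obtain $\|F^+_A\|_{L^p}\le C\|\phi\|_{L^2}^2$ for the relevant $p$. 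Combining this with the Theorem \ref{AT1} estimates and the Sobolev embedding $L^2_1\hookrightarrow L^4$ gives the key inequality $\|a\|_{L^4}\le C\|a\|_{L^2_1}\le C\|\phi\|_{L^2}^2$.

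The heart of the argument is the second equation. Set $\mathcal{L}_{B}\phi:=(d^{\ast}_{B}\phi,\,d^{-}_{B}\phi)$; then $d^{\ast}_A\phi=0$ and $(d_A\phi)^-=0$ say exactly that $\mathcal{L}_{A}\phi=0$. Expanding $d_A=d_{\tilde A_\infty}+[a,\,\cdot\,]$ shows that $\mathcal{L}_{\tilde A_\infty}\phi$ is a sum of terms bilinear in $a$ and $\phi$, so by H\"older $\|\mathcal{L}_{\tilde A_\infty}\phi\|_{L^2}\le C\|a\|_{L^4}\|\phi\|_{L^4}\le C\|\phi\|_{L^2}^3$ by the previous step. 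I would then invoke a coercive estimate $\|\phi\|_{L^2}\le C\|\mathcal{L}_{\tilde A_\infty}\phi\|_{L^2}$ for the regular anti-self-dual connection $\tilde A_\infty$. The two displays combine to give $\|\phi\|_{L^2}\le C\|\phi\|_{L^2}^3$, so that either $\phi=0$ — in which case $F^+_A=(\phi\wedge\phi)^+=0$ and $A$ is anti-self-dual — or $\|\phi\|_{L^2}\ge C_0$, which is exactly the stated dichotomy.

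The main obstacle is precisely the coercive estimate $\|\phi\|_{L^2}\le C\|\mathcal{L}_{\tilde A_\infty}\phi\|_{L^2}$, equivalently the vanishing of $\ker\mathcal{L}_{\tilde A_\infty}=\{\phi:d^{\ast}_{\tilde A_\infty}\phi=0,\ d^{-}_{\tilde A_\infty}\phi=0\}$, together with a lower bound on the lowest eigenvalue of $\mathcal{L}^{\ast}_{\tilde A_\infty}\mathcal{L}_{\tilde A_\infty}$ that is uniform as $\tilde A_\infty$ ranges over the connections produced above. I would establish this at $A_\infty$ from the regularity hypothesis via the Weizenb\"ock formula for $\mathcal{L}^{\ast}_{A_\infty}\mathcal{L}_{A_\infty}$ (paralleling the kernel analysis in Theorem \ref{T2.6}) and the spectral gap $\mu(A_\infty)>0$, and then transfer it to each nearby $\tilde A_\infty$ by a perturbation argument, using that the lowest eigenvalue depends continuously on the connection and that $\|\tilde A_\infty-A_\infty\|_{L^2_1}=O(\de)$; this fixes the constant $\de=\de(g,A_\infty)$. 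The delicate feature is that $\mathcal{L}_{\tilde A_\infty}$ carries the \emph{anti-self-dual} orientation of the deformation complex, so its invertibility is not formally immediate from $H^2_{A_\infty}=0$ and is where I expect the regularity assumption, through the Weizenb\"ock positivity, to be used in full.
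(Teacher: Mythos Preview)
Your approach shares the paper's first move---invoke Theorem \ref{AT1} to write $A=\tilde A_\infty+a$ with $\tilde A_\infty$ anti-self-dual and $\|a\|$ controlled by $\|F_A^+\|\le C\|\phi\|_{L^2}^2$---but then diverges, and the divergence contains a real gap. You need the coercive estimate $\|\phi\|_{L^2}\le C\|\mathcal L_{\tilde A_\infty}\phi\|_{L^2}$, i.e.\ $\ker(d^{\ast}_{\tilde A_\infty}\oplus d^{-}_{\tilde A_\infty})=0$ on $\Om^1(X,\mathfrak g_P)$. But the hypothesis ``$A_\infty$ regular'' means only $H^2_{A_\infty}=\mathrm{coker}\,d^{+}_{A_\infty}=0$: it is a statement about the \emph{self-dual} half of the deformation complex and says nothing about the kernel of $(d^{\ast},d^{-})$. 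The Weitzenb\"ock formula for $\mathcal L^{\ast}_{\tilde A_\infty}\mathcal L_{\tilde A_\infty}$ carries a curvature term built from $F^{-}_{\tilde A_\infty}$, which is the entire (nonzero, since $p_1(P)<0$) curvature of the ASD connection and has no sign; so Weitzenb\"ock positivity is unavailable. Nor does Theorem \ref{T2.6} help: that argument requires $X$ simply-connected and $A$ irreducible (neither assumed here), and crucially uses the \emph{nonlinear} constraint $\phi\wedge\phi=0$ to force $\phi$ to have rank one---a constraint absent for a general element of $\ker\mathcal L_{\tilde A_\infty}$. In short, there is no mechanism in the stated hypotheses that forces this kernel to vanish, and index considerations give no reason to expect it to.

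The paper sidesteps this entirely. Instead of inverting $\mathcal L_{\tilde A_\infty}$, it uses two Weitzenb\"ock identities: for the KW solution one has $\|\na_A\phi\|^2_{L^2}+\langle\mathrm{Ric}\circ\phi,\phi\rangle_{L^2}+2\|F^+_A\|^2_{L^2}=0$, while for \emph{any} $\phi$ and the ASD connection $\tilde A_\infty$ one has the trivial inequality $\|\na_{\tilde A_\infty}\phi\|^2_{L^2}+\langle\mathrm{Ric}\circ\phi,\phi\rangle_{L^2}\ge 0$ (here the curvature contribution involves only $F^{+}_{\tilde A_\infty}=0$). Subtracting and bounding $\|\na_A\phi-\na_{\tilde A_\infty}\phi\|_{L^2}\le\|a\|_{L^2}\|\phi\|_{L^\infty}\le c\|F^+_A\|\,\|\phi\|_{L^2}$ via Theorem \ref{T2.1} gives $(c\|\phi\|_{L^2}-2)\|F^+_A\|^2_{L^2}\ge 0$, hence $\|\phi\|_{L^2}\ge 2/c$ unless $F^+_A=0$. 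The point is that the second inequality needs no injectivity of any operator---it is just nonnegativity of $\|d_{\tilde A_\infty}\phi\|^2+\|d^{\ast}_{\tilde A_\infty}\phi\|^2$---so the argument goes through with only the regularity needed to run Theorem \ref{AT1}.
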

\begin{proof}
Following Theorem \ref{AT1}, it implies that for a suitable constant $\delta$, the connection $A$ can be written as $A=\tilde{A}_{\infty}+a$, where $\tilde{A}_{\infty}$ is also an ASD connection. Following the idea in \cite{HT}, we have two integrable inequalities: 
$$	\|\na_{A}\phi\|^{2}_{L^{2}(X)}+\langle Ric\circ\phi,\phi\rangle_{L^{2}(X)}+2\|F^{+}_{A}\|^{2}_{L^{2}(X)}=0,$$
$$ \|\na_{\tilde{A}_{\infty}}\phi\|^{2}_{L^{2}(X)}+\langle Ric\circ\phi,\phi\rangle_{L^{2}(X)}\geq0.$$
Combining the preceding inequalities gives
\begin{equation}\nonumber
\begin{split}	0&\leq\|\na_{A_{0}}\phi\|^{2}_{L^{2}(X)}+\langle Ric\circ\phi,\phi\rangle_{L^{2}(X)}\\
&\leq\|\na_{A}\phi\|^{2}_{L^{2}(X)}+\langle Ric\circ\phi,\phi\rangle_{L^{2}(X)}+\|\na_{A}\phi-\na_{\tilde{A}_{\infty}}\phi\|_{L^{2}(X)}^{2}\\
&\leq c\|a\|_{L^{2}(X)}\|\phi\|_{L^{\infty}(X)}-2\|F^{+}_{A}\|^{2}_{L^{2}(X)}\\
&\leq (c\|\phi\|_{L^{2}(X)}-2)\|F_{A}^{+}\|_{L^{2}(X)}.\\
\end{split}
\end{equation}
If $\|F^{+}_{A}\|_{L^{2}(X)}$ is non-zero, thus $\|\phi\|^{2}_{L^{2}(X)}\geq2/c$. We complete the proof of this proposition.
\end{proof}
\subsection{Uhlenbeck type compactness of Kapustin$\en$Witten equations}
We now recall the work by Taubes on Uhlenbeck style  compactness theorem of Kapustin$\en$Witten equations on closed four-dimensional manifolds \cite{T1}. In \cite{TY2}, Tanaka also briefly describe the compactness theorem. Let $X$ be a closed, oriented, smooth Riemannian four-manifold with Riemannian metric $g$, $P\rightarrow X$ be a principal $G$-bundle over $X$ with $G$ being $SU(2)$ or $SO(3)$. We consider a sequence solutions $\{(A_{i},\phi_{i})\}$ of the Kapustin$\en$Witten equations. We put $r_{i}:=\|\phi_{i}\|_{L^{2}(X)}$. In the case that $\{r_{i}\}$ has a bounded subsequence, the Uhlenbeck compactness theorem with the Weizenb\"{o}ck formula deduces the following.
\begin{thm}(\cite[Proposition 2.1]{TY2}  and \cite{T1})\label{51}
If $\{r_{i}\}$ has a bounded subsequence, then \\
(1) there exist a principal bundle $P_{\De}\rightarrow X$ and a pair $(A_{\De},\phi_{\De})\in\mathcal{A}_{P_{\De}}\times \Om^{1}(X,\mathfrak{g}_{P_{\De}})$ obeys the Kapustin-Witten equations; \\
(2) a finite set $\Sigma\subset X$ of points, a subsequence $\Xi\in\N$ and a sequence $\{g_{i}\}_{i\in\Xi}$ of automorphisms of $P_{\De}|_{X-\Sigma}$ such that $\{(g_{i}^{\ast}(A_{i}),g^{\ast}_{i}(\phi_{i}))\}_{i\in\Xi}$ converges to $(A_{\De},\phi_{\De})$ in the $C^{\infty}$ topology on compact subsets in $X-\Sigma$.
\end{thm}
Analysis for the case that $\{r_{i}\}$ has no bounded subsequences was the bulk of \cite[Theorem 1.1]{T1} . 
\begin{thm}(\cite[Proposition 2.2]{TY2}  and Theorem 2.3)\label{T4.6}
If $\{r_{i}\}$ has no bounded subsequence. There exists in this case the following data:\\
(1) A finite set $\Theta\subset X$ and a closed, nowhere dense set $Z\subset X-\Theta$,\\
(2) a real line bundle $\mathcal{I}\rightarrow X-(Z\cup\Theta)$,\\
(3) a section $v$ of $\mathcal{I}\otimes T^{\ast}X|_{X-(Z\cup\Theta})$ with $dv=d^{\ast}v=0$, the norm of $v$ extends over the whole of $X$ as a bounded $L^{2}_{1}$ function. In addition,\\
a) The extension of $|v|$ is continuous on $X-\Theta$ and its zero locus is the set $Z$.\\
b) Let $U$ denote an open set in $X-\Theta$ with compact closure. The function $|v|$ is H\"{o}lder continuous on $U$ with H\"{o}lder exponent that is independent of $U$ and of the original sequence $\{(A_{i},\phi_{i})\}_{i=1,2,\ldots}$.\\
c) If $p$ is any given point in $X$, then the function $dist(\cdot,p)^{-1}|\na v|$ extends to the whole of $X$ as an $L^{2}_{1}$-function.\\
(4) A principal $G$-bundle $P_{\De}$ over $X-(Z\cup\Theta)$ and a connection $A_{\De}$ of $P_{\De}$ with $d_{A_{\De}}^{\ast}F_{A_{\De}}=0$.\\
(5) An isometric $A_{\De}$ covariantly constant homorphism $\sigma_{\De}:\mathcal{I}\rightarrow\mathfrak{g}_{P}$.\\
In addition, there exist a subsequence $\Xi\subset\N$ and a sequence of automorphisms $g_{i}:P_{\De}\rightarrow P|_{X-(Z\cup\Theta)}$ such that\\
(i) $\{g_{i}^{\ast}(A_{i})\}_{i\in\Xi}$ converges to $A_{\De}$ in the $L^{2}_{1}$ topology on compact subset in $X-(Z\cup\Theta)$ and\\
(ii) The sequence $\{r^{-1}g^{\ast}_{i}(\phi_{i})\}$ converges to $v\otimes\sigma_{\De}$ in $L^{2}_{1}$ topology on compact subset in $X-(Z\cup\Theta)$ and $C^{0}$-topology on $X-\Theta$. Moreover, the sequence $\{r_{i}^{-1}\phi_{i}\}_{i\in\Xi}$ is pointwise bounded and it converges in the $L^{2}_{1}$ topology on $X$ to $|v|$.
\end{thm}
As was the case for Uhlenbeck theorem, the number of elements in $\Theta$ only dependent on the first Pontrjagin class $p_{1}(P)$. In particular,  $\Theta$ is empty if the first Pontrjagin class of $p_{1}(P)=0$. As for the structure of the above $Z$, Taubes proved that $Z$ has the Hausdorff dimension at most 2, see \cite{T1}.

For a sequence of connections $\{A_{i}\}$ on $P$, Tanaka defined a set $S(\{A_{i}\})$ as follows:
\begin{equation}\label{E11}
S(\{A_{i}\}):=\bigcap_{r\in(0,\rho)}\{x\in X:\liminf_{i\rightarrow\infty}\int_{B_{r}(x)}|F_{A_{i}}|^{2}\geq\varepsilon_{0}\},
\end{equation}
where $\varepsilon_{0}$ is a positive constant is defined as in \cite{Tanaka}. This set $S(\{A_{i}\})$ describes the singular set of a sequence of connections $\{A_{i}\}$. In \cite{Tanaka}, Tanaka observed that under the particular circumstance where the connections are non-concentrating and the limiting connection is non-locally reducible, one obtains an $L^{2}$-bounded on the extra fields. We extend the idea used by Tanaka for Vafa$\en$Witten equations (see \cite[Theorem 1.3]{Tanaka}) to the Kapustin$\en$Witten equations case. 
\begin{thm}\label{T2}
Let $X$ be a closed, oriented,  4-dimensional manifold with a smooth Riemannian metric $g$, $P\rightarrow X$ be a principal $G$-bundle with $G$ being $SU(2)$ or $SO(3)$. If $\{(A_{i},\phi_{i})\}$ is a sequence of smooth solutions of Kapustin$\en$Witten equations with $S(\{A_{i}\})$ being empty,	then there exists subsequence $\Xi\subset\N$ and a sequence of gauge transformations $\{g_{i}\}_{i\in\Xi}$ such that $\{g_{i}(A_{i})\}_{i\in\Xi}$ converges weakly in the $L^{2}_{1}$-topology. If the limit is not locally reducible, then there exists a positive number $C$ such that $\int_{X}|\phi_{i}|^{2}\leq C$ for all $i\in\Xi$, and $\{(g_{i}^{\ast}(A_{i}),g_{i}^{\ast}(\phi_{i}))\}_{i\in\Xi}$ converges in the $C^{\infty}$-topology to a pair that obeys the Kapustin$\en$Witten equations.
\end{thm}
We then have a useful lemma as follows.
\begin{lem}\label{L7}
Let $X$ be a closed, oriented, simply-connected, 4-dimensional manifold with a smooth Riemannian metric $g$, $P\rightarrow X$ be a principal $G$-bundle with $G$ being $SU(2)$ or $SO(3)$. If $\{(A_{i},\phi_{i})\}$ is a sequence of smooth solutions of Kapustin$\en$Witten equations such that
$$dist(A_{i},A_{\infty})\rightarrow0,\ as\ i\rightarrow\infty,$$
where  $A_{\infty}$ is an irreducible ASD connection on $P$, then there exist a subsequence $\Xi\subset\mathbb{N}$ such that the sequence $\{r_{i}:=\|\phi_{i}\|_{L^{2}(X)}\}_{i\in\Xi}$ is a bounded subsequence.
\end{lem}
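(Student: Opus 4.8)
The plan is to argue by contradiction. Suppose that the sequence $\{r_{i}=\|\phi_{i}\|_{L^{2}(X)}\}$ admits no bounded subsequence; then, after passing to a subsequence, $r_{i}\to\infty$ and the hypotheses of Taubes' blow-up compactness theorem (Theorem \ref{T4.6}) are satisfied. First I would apply that theorem to extract the limiting data: a finite set $\Theta$, a closed nowhere dense set $Z\subset X-\Theta$, a real line bundle $\mathcal{I}\to X-(Z\cup\Theta)$, a harmonic $\mathcal{I}$-form $v$, a principal $SO(3)$-bundle $P_{\De}$ carrying a connection $A_{\De}$ with harmonic curvature, and---crucially---an isometric, $A_{\De}$-covariantly constant homomorphism $\sigma_{\De}\colon\mathcal{I}\to\mathfrak{g}_{P_{\De}}$. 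By the convergence statements (i) and (ii), after gauge transformations $g_{i}$ the connections $g_{i}^{\ast}(A_{i})$ converge to $A_{\De}$ in $L^{2}_{1}$ on compact subsets of $X-(Z\cup\Theta)$, while the rescaled fields $r_{i}^{-1}g_{i}^{\ast}(\phi_{i})$ converge to $v\otimes\sigma_{\De}$.

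The core of the argument is that this limiting object forces $A_{\De}$ to be \emph{reducible}. Choosing a local unit section $e$ of the real line bundle $\mathcal{I}$, the image $\xi:=\sigma_{\De}(e)$ is a unit-length section of the adjoint bundle $\mathfrak{g}_{P_{\De}}$ that is covariantly constant for $A_{\De}$, since $\sigma_{\De}$ is isometric and $A_{\De}$-covariantly constant; moreover $\xi$ is nowhere vanishing on the open dense set where $|v|>0$. A nowhere-vanishing covariantly constant section of the adjoint bundle means that the holonomy of $A_{\De}$ stabilizes a nonzero element of the Lie algebra, and for structure group $SU(2)$ or $SO(3)$ this exhibits $A_{\De}$ as a reducible connection on $X-(Z\cup\Theta)$.

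On the other hand, I would use the standing hypothesis $dist(A_{i},A_{\infty})\to0$ to identify the limit. Since $A_{\infty}$ is an irreducible ASD connection, Lemma \ref{L1} guarantees that its restriction to the nonempty open set $X-(Z\cup\Theta)$ remains irreducible. The distance hypothesis provides gauge transformations $h_{i}$ with $\|h_{i}^{\ast}(A_{i})-A_{\infty}\|_{L^{2}_{1}(X)}\to0$, so on any compact $K\subset X-(Z\cup\Theta)$ one has $h_{i}^{\ast}(A_{i})\to A_{\infty}$ and $g_{i}^{\ast}(A_{i})\to A_{\De}$, both in $L^{2}_{1}$. Comparing these two convergences through the transition gauge transformations $g_{i}\circ h_{i}^{-1}$ forces $A_{\De}$ to be gauge-equivalent to $A_{\infty}|_{X-(Z\cup\Theta)}$; in particular $A_{\De}$ is irreducible there, contradicting the reducibility obtained in the previous step. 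This contradiction shows that $\{r_{i}\}$ must admit a bounded subsequence.

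The hard part will be the reconciliation in the last paragraph: the distance hypothesis controls $A_{i}$ only after an unknown sequence $h_{i}$ of gauge transformations and only globally in $L^{2}_{1}$, whereas Theorem \ref{T4.6} supplies convergence after a different sequence $g_{i}$ and only on compact subsets of $X-(Z\cup\Theta)$. I would therefore need to study the transition transformations $g_{i}\circ h_{i}^{-1}$ on $K$, extract a convergent subsequence via the standard Uhlenbeck-type patching (using that both limits $A_{\De}$ and $A_{\infty}$ are smooth connections), and verify that no mass is lost into the singular set $Z\cup\Theta$, so that the limiting gauge transformation genuinely conjugates $A_{\De}$ to $A_{\infty}$ on the open set. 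Once this identification is in place, the clash between irreducibility of $A_{\infty}$ and the reducibility forced by $\sigma_{\De}$ closes the argument.
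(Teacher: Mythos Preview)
Your proof is correct, but it takes a different route from the paper's proof of this specific lemma. The paper outsources the argument to Tanaka's results: since $dist(A_{i},A_{\infty})\to 0$ rules out curvature bubbling (so $S(\{A_{i}\})=\varnothing$ in the sense of \cite{Tanaka2015}), a compactness result gives convergence modulo gauge of a subsequence of $\{A_{i}\}$ to a limit gauge-equivalent to $A_{\infty}$, hence irreducible; then Tanaka's boundedness theorem for the extra fields (\cite{Tanaka} Theorem~1.3, stated there for Vafa--Witten and transplanted to Kapustin--Witten) directly yields a uniform bound on $\|\phi_{i}\|_{L^{2}}$. No contradiction argument and no appeal to Theorem~\ref{T4.6} are needed.

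Your approach instead argues by contradiction and unpacks Taubes' blow-up compactness (Theorem~\ref{T4.6}) explicitly: the covariantly constant isometric homomorphism $\sigma_{\De}$ produces, on any ball where $\mathcal{I}$ trivializes, a unit $A_{\De}$-parallel section of $\mathfrak{g}_{P_{\De}}$, forcing local reducibility of $A_{\De}$; identifying $A_{\De}$ with $A_{\infty}$ on $X\setminus(Z\cup\Theta)$ via the transition gauges then contradicts Lemma~\ref{L1}. This is essentially the argument the paper itself deploys later for the global statement, Proposition~\ref{P3.11}, where Tanaka's shortcut is unavailable because genuine bubbling can occur. The advantage of your route is that it is entirely self-contained within the machinery already stated in the paper; the paper's proof is shorter on the page but leans on adapting an external Vafa--Witten result. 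The gauge-patching step you flag as delicate is standard (two $L^{2}_{1}$-convergent sequences in the same gauge orbit force $L^{2}_{2}$-convergence of the transition gauges after a subsequence), and here it is even easier than in Proposition~\ref{P3.11} since $dist(A_{i},A_{\infty})\to 0$ prevents any curvature concentration, so no additional finite set $\Sigma$ enters.
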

\begin{proof}
For the sequence of $\{A_{i}\}_{i\in\N}$ on $P$, one can see $S(\{A_{i}\})=\varnothing$, where $S(A_{i})$ defined as in equation (\ref{E11}). Following \cite[Theorem 1.3]{Tanaka2015} , there exist a subsequence $\Xi\subset\N$ and a sequence of gauge transformations $g^{\ast}_{i}(A_{i})_{i\in\Xi}$ converges weakly in the $L_{1}^{2}$-topology, we denote the limit connection by $\tilde{A}_{\infty}$. Under our assumption on the sequence $\{A_{i}\}_{i\in\N}$, there exist a gauge transformation $g_{\infty}$ such that $g_{\infty}^{\ast}(A_{\infty})=\tilde{A}_{\infty}$. Hence $\tilde{A}_{\infty}$ is also not locally reducible, since the locally reducible connection on simply-connected manifold is also reducible, see \cite[Appendix B]{Tanaka} . At last, following Theorem \ref{T2}, it implies that there eixst a positive number $C$ such that $\|\phi_{i}\|_{L^{2}(X)}\leq C$ for all $i\in\Xi$.
\end{proof}
\begin{proof}[\textbf{Proof of Theorem \ref{T1.1}}.] We suppose that the pair $(A,\phi)$ is a smooth solution of Kapustin$\en$Witten equations and $A$ is not ASD with respect to metric $g$. If we suppose that the constant $\de$ does not exist, we may choose a sequence $\{(A_{i},\phi_{i})\}_{i\in\N}$ of solutions of Kapustin$\en$Witten equations on $P$ such that $dist(A_{i},A_{\infty})\rightarrow0$ as $i\rightarrow\infty$, then following Proposition \ref{P3.9} and Lemma \ref{L7}, there exists a subsequence $\Xi\subset\mathbb{N}$ and two positive constants $C,c$,\ such that
$$c\leq\|\phi_{i}\|_{L^{2}(X)}\leq C.$$
Following the compactness Theorem \ref{51}, there exist a pair $(A_{\De},\phi_{\De})\in\mathcal{A}_{P_{\De}}\times \Om^{1}(X,\mathfrak{g}_{P_{\De}})$ that obeys the Kapustin$\en$Witten equations and there has a subsequence $\Xi'\subset \Xi$ and a sequence $\{g_{i}\}_{i\in\Xi'}$ of automorphisms of $P_{\De}$ such that $\{(g_{i}^{\ast}(A_{i}),g^{\ast}_{i}(\phi_{i}))\}_{i\in\Xi'}$ converges to $(A_{\De},\phi_{\De})$ in the $C^{\infty}$ topology on $X$. Thus the extra field $\phi_{\De}$ also has a lower positive bounded:
$$\|\phi_{\De}\|_{L^{2}(X)}\geq\liminf\|\phi_{i}\|_{L^{2}(X)}\geq c.$$
But on the other hand, under our initial assumption regarding the sequence $\{A_{i}\}_{i\in\N}$, we have $[A_{\De}]\equiv [A_{\infty}]$. Thus the connection $A_{\De}$ is also irreducible, following the vanishing Theorem \ref{T2.6}, it implies that the extra field $\phi_{\De}=0$. This contradicts $\|\phi_{\De}\|_{L^{2}(X)}$ has a uniform positive lower bound. The preceding argument shows that the desired constant $\de$ exists.
\end{proof}
\subsection{A bounded property of curvatures}
In this section, we extend the proof method of Theorem \ref{T1.1} to global case. We suppose that the connections $[A_{\infty}]\in\bar{M}_{ASD}$ are all $regular$, i.e., $\ker {d^{+}_{A_{\infty}}|_{\Om^{2,+}(X,\mathfrak{g}_{P})}}=0$. Following the result in \cite[Section 3]{PF2}, one know that the connections on 
$$\mathcal{B}_{\varepsilon}(P,g):=\{A\in\mathcal{A}_{P}:\|F^{+}_{A}\|_{L^{2}(X)}\leq\varepsilon\}$$
also obey  $\ker{d^{+}_{A_{\infty}}|_{\Om^{2,+}(X,\mathfrak{g}_{P})}}=0$ for suitable constant $\varepsilon$.
\begin{prop}\label{P1}
Let $G$ be a compact Lie group, $P\rightarrow X$ be a principal $G$-bundle over a closed, connected, four-dimensional manifold $X$ with a smooth Riemannian metric $g$. Suppose that the connections on $\bar{M}_{ASD}$ are all $regular$, i.e., $\ker {d^{+}_{A_{\infty}}|_{\Om^{2,+}(X,\mathfrak{g}_{P})}}=0$ for any connection $[A_{\infty}]\in\bar{M}_{ASD}$, then there are  positive constants $\varepsilon=\varepsilon(P,g)$ and $\mu=\mu(P,g)$ such that
\begin{equation}\nonumber
\mu(A)\geq\mu,\  [A]\in\mathfrak{B}_{\varepsilon}(P,g).
\end{equation}
where $\mu(A)$ is as in Definition \ref{D3}.
\end{prop}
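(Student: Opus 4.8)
The plan is to argue by contradiction, using Uhlenbeck's weak compactness theorem together with a Weitzenb\"ock identity that confines the possible loss of spectrum to the bubbling set. If the conclusion fails, then for every $\varepsilon,\mu>0$ there is $[A]\in\mathfrak{B}_{\varepsilon}(P,g)$ with $\mu(A)<\mu$; taking $\varepsilon=\mu=1/n$ produces a sequence $\{A_{n}\}$ with $\|F^{+}_{A_{n}}\|_{L^{2}(X)}\to0$ and $\mu(A_{n})\to0$. By the definition of $\mu(A_{n})$ in Definition \ref{D3} I may choose $v_{n}\in\Om^{2,+}(X,\mathfrak{g}_{P})$ with $\|v_{n}\|_{L^{2}(X)}=1$ and $\|d^{+,\ast}_{A_{n}}v_{n}\|^{2}_{L^{2}(X)}\leq\mu(A_{n})+1/n\to0$. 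The goal is to extract from $\{v_{n}\}$ a nonzero limit lying in $\ker d^{+,\ast}_{A_{\infty}}=H^{2}_{A_{\infty}}$ for a suitable anti-self-dual limit $A_{\infty}$; since $A_{\infty}$ will be a point of $\bar{M}_{ASD}$, hence regular, one has $H^{2}_{A_{\infty}}=0$, and a contradiction follows.

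First I would fix the background connection. The Chern--Weil identity $\|F_{A_{n}}\|^{2}_{L^{2}(X)}=2\|F^{+}_{A_{n}}\|^{2}_{L^{2}(X)}-\tau(P)$, with $\tau(P)$ a topological constant, shows $\sup_{n}\|F_{A_{n}}\|_{L^{2}(X)}<\infty$. Uhlenbeck's weak compactness theorem then yields, after passing to a subsequence, a finite set $\Sigma\subset X$, gauge transformations $g_{n}$, and a limiting connection $A_{\infty}$ on a bundle $P_{\infty}$ such that $g_{n}^{\ast}A_{n}\to A_{\infty}$ in $C^{\infty}_{loc}$ on $X\setminus\Sigma$; because $\|F^{+}_{A_{n}}\|_{L^{2}(X)}\to0$ the limit obeys $F^{+}_{A_{\infty}}=0$, so $A_{\infty}$ is anti-self-dual and $([A_{\infty}],\Sigma)\in\bar{M}_{ASD}$. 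By hypothesis $A_{\infty}$ is regular, i.e. $\mu(A_{\infty})>0$ and $H^{2}_{A_{\infty}}=0$. Throughout I identify $v_{n}$ with $g_{n}^{\ast}v_{n}$, which makes sense on $X\setminus\Sigma$.

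The crucial point is to prevent the $L^{2}$-mass of $v_{n}$ from escaping into $\Sigma$, and this is where the self-dual Weitzenb\"ock formula enters. For $v\in\Om^{2,+}(X,\mathfrak{g}_{P})$ one has $2d^{+}_{A}d^{+,\ast}_{A}v=\na^{\ast}_{A}\na_{A}v+\mathcal{R}(v)+\{F^{+}_{A},v\}$, where $\mathcal{R}$ is a zeroth-order operator built from the scalar and self-dual Weyl curvature of $g$, and the gauge term $\{F^{+}_{A},v\}$ involves only the self-dual part $F^{+}_{A}$ --- this is forced by the self-duality of $v$. Testing against $v_{n}$ gives $\|\na_{A_{n}}v_{n}\|^{2}_{L^{2}}=2\|d^{+,\ast}_{A_{n}}v_{n}\|^{2}_{L^{2}}-\int_{X}\langle\mathcal{R}(v_{n}),v_{n}\rangle-\int_{X}\langle\{F^{+}_{A_{n}},v_{n}\},v_{n}\rangle$. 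The first term tends to $0$, the second is bounded by $C_{0}\|v_{n}\|^{2}_{L^{2}}=C_{0}$, and the third is bounded by $c\|F^{+}_{A_{n}}\|_{L^{2}}\|v_{n}\|^{2}_{L^{4}}\leq c\|F^{+}_{A_{n}}\|_{L^{2}}(\|\na_{A_{n}}v_{n}\|_{L^{2}}+\|v_{n}\|_{L^{2}})^{2}$ via the Sobolev embedding $L^{2}_{1}\hookrightarrow L^{4}$. Since $\|F^{+}_{A_{n}}\|_{L^{2}}\to0$, the cubic term is absorbed, yielding a uniform bound $\|\na_{A_{n}}v_{n}\|_{L^{2}(X)}\leq C_{1}$. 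This is the key estimate, and its global character depends essentially on the gauge term seeing only $F^{+}_{A_{n}}$, which is small in $L^{2}$ even though $\|F_{A_{n}}\|_{L^{2}}$ concentrates near $\Sigma$. I expect this step --- obtaining a uniform first-order bound in the presence of curvature bubbling --- to be the main obstacle.

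It remains to harvest the contradiction. The uniform bound gives $\|v_{n}\|_{L^{4}(X)}\leq C_{2}$, whence $\int_{B_{r}(\Sigma)}|v_{n}|^{2}\leq\|v_{n}\|^{2}_{L^{4}}\,|B_{r}(\Sigma)|^{1/2}\leq C_{3}r^{2}$ uniformly in $n$, so no mass concentrates at the finite set $\Sigma$. On each compact $K\subset X\setminus\Sigma$ the interior elliptic estimate for $d^{+}_{A_{n}}d^{+,\ast}_{A_{n}}$, together with the $C^{\infty}_{loc}$ convergence $A_{n}\to A_{\infty}$, bounds $v_{n}$ in $L^{2}_{2}(K)$; so after a diagonal subsequence $v_{n}\to v_{\infty}$ strongly in $L^{2}_{loc}$ on $X\setminus\Sigma$, with $d^{+,\ast}_{A_{\infty}}v_{\infty}=0$. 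The no-concentration estimate forces $\|v_{\infty}\|_{L^{2}(X)}\geq1$, so $v_{\infty}\neq0$. Finally, since $A_{\infty}$ extends smoothly across $\Sigma$ after a gauge change (Uhlenbeck's removable singularity theorem) and $v_{\infty}\in L^{2}(X)$ solves an elliptic equation with smooth coefficients away from the finite set $\Sigma$, removable-singularity theory extends $v_{\infty}$ to a global section with $d^{+,\ast}_{A_{\infty}}v_{\infty}=0$, i.e. $v_{\infty}\in H^{2}_{A_{\infty}}$. Regularity of $A_{\infty}$ gives $H^{2}_{A_{\infty}}=0$, so $v_{\infty}=0$, contradicting $\|v_{\infty}\|_{L^{2}}\geq1$. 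This contradiction produces the desired constants $\varepsilon=\varepsilon(P,g)$ and $\mu=\mu(P,g)$.
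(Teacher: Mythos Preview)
Your argument is correct, and the overall architecture (contradiction, produce a sequence with $\|F^{+}_{A_{n}}\|_{L^{2}}\to0$ and $\mu(A_{n})\to0$, pass to an Uhlenbeck--Sedlacek limit $A_{\infty}\in\bar{M}_{ASD}$, contradict regularity) is the same as the paper's. The difference is in how the spectrum is controlled through the bubbling. The paper does not prove Proposition~\ref{P1} in place but defers to Feehan \cite{PF2}; the mechanism there (see also the proof of the parallel Proposition~\ref{P2} and Corollary~\ref{C1}) is a general continuity statement $\mu(A_{n})\to\mu(A_{\infty})$ for Uhlenbeck limits, proved by cut-off comparison of eigenvalues on $X$ versus $X\setminus\Sigma$. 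You instead track the near-eigenvectors $v_{n}$ directly and use the self-dual Weitzenb\"ock identity, exploiting the fact that its gauge-curvature term sees only $F^{+}_{A}$, to obtain a \emph{global} uniform $L^{2}_{1}$ bound; this is what rules out concentration at $\Sigma$. Your route is more elementary and tailored to the operator $d^{+}_{A}d^{+,\ast}_{A}$; Feehan's continuity lemma is a reusable black box that applies to other first-eigenvalue problems as well.

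One small correction: Sedlacek's theorem (the paper's Theorem~\ref{T4.4}) only gives weak $L^{2}_{1}$ convergence of $g_{n}^{\ast}A_{n}$ on $X\setminus\Sigma$, not $C^{\infty}_{\mathrm{loc}}$, since the $A_{n}$ are not assumed to satisfy any elliptic equation. This does not damage your proof: the global $L^{2}_{1}$ bound on $v_{n}$ from Weitzenb\"ock already yields strong $L^{2}_{\mathrm{loc}}$ convergence of $v_{n}$ by Rellich, and weak $L^{2}_{1}$ (hence strong $L^{p}$, $p<4$) convergence of the connections suffices to pass to the limit in $d^{+,\ast}_{A_{n}}v_{n}\to d^{+,\ast}_{A_{\infty}}v_{\infty}$ in the sense of distributions. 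So you can simply drop the appeal to interior $L^{2}_{2}$ estimates and $C^{\infty}_{\mathrm{loc}}$ convergence.
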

We then recall a result of extra fields $\phi$ which proved by author \cite{HT}.
\begin{thm}\label{T3.11}
Assume the hypotheses of Proposition \ref{P1}. Suppose that the connections on $\bar{M}_{ASD}$ are all $regular$, i.e., $\ker {d^{+}_{A_{\infty}}|_{\Om^{2,+}(X,\mathfrak{g}_{P})}}=0$ for any connection $[A_{\infty}]\in\bar{M}_{ASD}$, then there is a positive constant $c=c(P,g)$ with following significance. If $(A,\phi)$ is a smooth solution of Kapustin$\en$Witten equations, then the extra filed $\phi$ obeys
$$\|\phi\|_{L^{2}(X)}\geq c,$$
unless $A$ is anti-self-dual with respect to $g$.
\end{thm}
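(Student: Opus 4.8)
The plan is to reduce Theorem \ref{T3.11} to a dichotomy governed by the size of $\|F^{+}_{A}\|_{L^{2}(X)}$ relative to the constant $\varepsilon=\varepsilon(P,g)$ furnished by Proposition \ref{P1}. Throughout I would use the integral Weitzenb\"{o}ck identity for a Kapustin-Witten solution,
$$\|\na_{A}\phi\|^{2}_{L^{2}(X)}+\langle Ric\circ\phi,\phi\rangle_{L^{2}(X)}+2\|F^{+}_{A}\|^{2}_{L^{2}(X)}=0,$$
together with the pointwise bound $\|\phi\|_{L^{\infty}(X)}\leq C(g)\|\phi\|_{L^{2}(X)}$ of Theorem \ref{T2.1}.

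First I would dispose of the case $\|F^{+}_{A}\|_{L^{2}(X)}>\varepsilon$. Here the identity above gives, after discarding $\|\na_{A}\phi\|^{2}_{L^{2}(X)}\geq0$ and estimating $\langle Ric\circ\phi,\phi\rangle_{L^{2}(X)}\geq-\|Ric\|_{L^{\infty}(X)}\|\phi\|^{2}_{L^{2}(X)}$,
$$\|Ric\|_{L^{\infty}(X)}\|\phi\|^{2}_{L^{2}(X)}\geq2\|F^{+}_{A}\|^{2}_{L^{2}(X)}>2\varepsilon^{2},$$
so that $\|\phi\|_{L^{2}(X)}\geq\varepsilon\sqrt{2/\|Ric\|_{L^{\infty}(X)}}=:\de_{1}>0$, a constant depending only on $(P,g)$ through $\varepsilon$ and the fixed metric.

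The remaining case is $\|F^{+}_{A}\|_{L^{2}(X)}\leq\varepsilon$, i.e. $[A]\in\mathcal{B}_{\varepsilon}(P,g)$. Here Proposition \ref{P1} supplies a uniform spectral gap $\mu(A)\geq\mu=\mu(P,g)>0$, which is exactly the input needed to run Donaldson's decomposition uniformly: since $\bar{M}_{ASD}$ is compact and each of its connections is regular, I would cover it by finitely many Donaldson neighbourhoods and invoke Theorem \ref{AT1} to write $A=\tilde{A}_{\infty}+a$ with $\tilde{A}_{\infty}$ anti-self-dual and $\|a\|_{L^{2}(X)}\leq C\|F^{+}_{A}\|_{L^{4/3}(X)}\leq C'\|F^{+}_{A}\|_{L^{2}(X)}$ (the second step by H\"{o}lder on the closed manifold), where $C'=C'(P,g)$ is uniform because the estimate constant in Theorem \ref{AT1} depends on $A_{\infty}$ only through $\mu(A_{\infty})$, now bounded below by $\mu$, and through $\|F_{A_{\infty}}\|_{L^{4}(X)}$, which is uniformly bounded over $\bar{M}_{ASD}$. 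With this in hand the computation of Proposition \ref{P3.9} runs as before: comparing the Weitzenb\"{o}ck identity for $A$ with the nonnegativity of $\|\na_{\tilde{A}_{\infty}}\phi\|^{2}_{L^{2}(X)}+\langle Ric\circ\phi,\phi\rangle_{L^{2}(X)}$ and controlling the difference through $\|[a,\phi]\|_{L^{2}(X)}\leq c\|a\|_{L^{2}(X)}\|\phi\|_{L^{\infty}(X)}$ yields
$$0\leq(c\|\phi\|_{L^{2}(X)}-2)\|F^{+}_{A}\|_{L^{2}(X)}$$
with $c=c(P,g)$ uniform. Hence either $F^{+}_{A}=0$, so $A$ is anti-self-dual, or $\|\phi\|_{L^{2}(X)}\geq2/c=:\de_{2}>0$. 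Setting $\de:=\min(\de_{1},\de_{2})$ completes the argument.

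The main obstacle is precisely the uniformity claim in the second case: I must ensure that the nearby ASD connection $\tilde{A}_{\infty}$ and the constant in the Donaldson estimate can be chosen independently of the particular solution $(A,\phi)$. Proposition \ref{P1} buys a spectral gap uniform over all of $\mathcal{B}_{\varepsilon}(P,g)$, but converting it into a uniform decomposition requires an Uhlenbeck-compactness argument showing that every $A$ with $\|F^{+}_{A}\|_{L^{2}(X)}\leq\varepsilon$ lies within a fixed-size Donaldson neighbourhood of $\bar{M}_{ASD}$, together with the verification that the constants of Theorem \ref{AT1} are governed purely by $\mu$ and the uniformly bounded geometry of the limiting anti-self-dual connections. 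Once this uniformity is secured, the two cases assemble into the single constant $\de=\de(P,g)$ asserted in the statement.
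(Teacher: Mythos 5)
The paper does not actually prove Theorem \ref{T3.11} in this text --- it is recalled from \cite{HT} --- but the intended argument is plainly the globalization of Proposition \ref{P3.9} by means of Proposition \ref{P1}, which is exactly what you attempt, so that is the right benchmark. Your first case ($\|F^{+}_{A}\|_{L^{2}(X)}>\varepsilon$) is fine, and your second case reproduces the Weitzenb\"{o}ck comparison of Proposition \ref{P3.9} correctly \emph{once} the decomposition $A=\tilde{A}_{\infty}+a$ with $\|a\|_{L^{2}(X)}\leq C\|F^{+}_{A}\|_{L^{4/3}(X)}$ and a uniform constant $C=C(P,g)$ is in hand.

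The genuine gap is the step you yourself flag and then defer: producing that decomposition for every $A$ with $\|F^{+}_{A}\|_{L^{2}(X)}\leq\varepsilon$. The mechanism you propose --- cover $\bar{M}_{ASD}$ by finitely many Donaldson neighbourhoods and invoke Theorem \ref{AT1} --- does not work, because the hypothesis of Theorem \ref{AT1} is that $A$ is \emph{already} written as $A_{\infty}+b$ with $\|\na_{A}b\|_{L^{2}(X)}+\|b\|_{L^{2}(X)}\|F_{A_{\infty}}\|_{L^{4}(X)}\leq\sigma$ for some regular ASD connection $A_{\infty}$ on $P$ itself. Smallness of $\|F^{+}_{A}\|_{L^{2}(X)}$ does not place $A$ in such a neighbourhood: by Uhlenbeck bubbling, a connection with arbitrarily small $\|F^{+}_{A}\|_{L^{2}(X)}$ can be far in $L^{2}_{1}(X)$ from every ASD connection on $P$, being close only in the Uhlenbeck topology to an ideal point of $\bar{M}_{ASD}$ carried by a bundle of lower charge (indeed $M_{ASD}(P,g)$ could a priori be empty while $\mathcal{B}_{\varepsilon}(P,g)$ is not). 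So the reduction is circular: to know $A$ lies in a Donaldson neighbourhood you already need the conclusion. The correct tool is the intrinsic version of the existence theorem (Feehan \cite{PF2}, following Taubes \cite{T4}), in which the contraction mapping for $F^{+}(A+d^{+,\ast}_{A}u)=0$ is run under hypotheses on $A$ alone, namely $\mu(A)\geq\mu$ and $\|F^{+}_{A}\|_{L^{2}(X)}\leq\varepsilon(\mu)$; this is precisely what the uniform spectral gap of Proposition \ref{P1} is designed to feed, and it is the route taken in \cite{HT}. With that substitution your two-case argument closes; without it, the uniformity on which the whole second case rests is asserted rather than proved.
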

Now, we begin to consider a sequence smooth solutions $\{(A_{i},\phi_{i})\}_{i\in\mathbb{N}}$ of Kapustin$\en$Witten equations. If we suppose that $\|\phi_{i}\|_{L^{2}(X)}$ has no bounded subsequence, following the compactness theorem \ref{T4.6} proved by Taubes, we only know that the connection $A_{\De}$ has harmonic curvature. Moreover if we suppose that $F_{A_{i}}^{+}$ converge to zero in $L^{2}$-topology, then we can claim that $A_{\De}$ is an anti-self-dual connection on the complement of $Z\cup\Theta\cup\Sigma$.
\begin{cor}\label{C3.10}
Let $\{(A_{i},\phi_{i})\}_{i\in\N}$ be a sequence smooth solutions of Kapustin$\en$Witten equations, set $r_{i}:=\|\phi_{i}\|_{L^{2}(X)}$. Suppose that $\{F^{+}_{A_{i}}\}_{i\in\N}$ converge to zero in $L^{2}$-topology and the sequence $\{r_{i}\}_{i=1,2,\ldots}$ has no bounded subsequence.\ Let $Z$,\ $\Theta$ and $\mathcal{I}$ be as described in Theorem \ref{T4.6},\ so that $\sigma_{\De}$ and $A_{\De}$ are defined over $X-(Z\cup\Theta\cup\Sigma)$.\ Then the connection $A_{\De}$ is anti-self-dual connection on $P_{\De}$.
\end{cor}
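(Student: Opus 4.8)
The plan is to feed the sequence into Taubes' compactness Theorem \ref{T4.6} to produce the limit data $(Z,\Theta,\mathcal{I},v,P_{\De},A_{\De},\sigma_{\De})$, and then to exploit the extra hypothesis $\|F^{+}_{A_{i}}\|_{L^{2}(X)}\to0$, which is \emph{not} part of Taubes' conclusion: the theorem by itself only asserts that $A_{\De}$ has harmonic curvature. First I would recall from Theorem \ref{T4.6}(i) that, after passing to the subsequence $\Lambda$ and applying the bundle isomorphisms $g_{i}:P_{\De}\to P|_{X-(Z\cup\Theta)}$, the connections $g_{i}^{\ast}(A_{i})$ converge to $A_{\De}$ in the $L^{2}_{1}$ topology on every compact subset $K$ of $X-(Z\cup\Theta\cup\Sigma)$. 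The goal is to show $F^{+}_{A_{\De}}=0$ on each such $K$, which upgrades ``harmonic curvature'' to ``anti-self-dual''.

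The key step is to promote $L^{2}_{1}$-convergence of the connections to $L^{2}$-convergence of their curvatures. Writing $g_{i}^{\ast}(A_{i})=A_{\De}+a_{i}$ with $a_{i}\to0$ in $L^{2}_{1,A_{\De}}(K)$, the structure equation gives $F_{g_{i}^{\ast}(A_{i})}=F_{A_{\De}}+d_{A_{\De}}a_{i}+a_{i}\wedge a_{i}$. The linear term $d_{A_{\De}}a_{i}$ tends to $0$ in $L^{2}(K)$ directly from $a_{i}\to0$ in $L^{2}_{1}$, while the quadratic term satisfies $\|a_{i}\wedge a_{i}\|_{L^{2}(K)}\leq c\|a_{i}\|_{L^{4}(K)}^{2}\to0$ by the Sobolev embedding $L^{2}_{1}\hookrightarrow L^{4}$ in dimension four. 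Taking self-dual parts, $F^{+}_{g_{i}^{\ast}(A_{i})}\to F^{+}_{A_{\De}}$ in $L^{2}(K)$.

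On the other hand, since each $g_{i}$ is an isometric bundle isomorphism and the adjoint action preserves the fibre metric on $\mathfrak{g}_{P}$, the pointwise norm of the curvature is unchanged, so $\|F^{+}_{g_{i}^{\ast}(A_{i})}\|_{L^{2}(K)}\leq\|F^{+}_{A_{i}}\|_{L^{2}(X)}$, and by hypothesis the right-hand side tends to $0$. Combining the two facts, $\|F^{+}_{A_{\De}}\|_{L^{2}(K)}=\lim_{i}\|F^{+}_{g_{i}^{\ast}(A_{i})}\|_{L^{2}(K)}=0$ for every compact $K\subset X-(Z\cup\Theta\cup\Sigma)$. Since $K$ was arbitrary, $F^{+}_{A_{\De}}\equiv0$ on $X-(Z\cup\Theta\cup\Sigma)$, which is exactly the statement that $A_{\De}$ is anti-self-dual on $P_{\De}$.

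The main obstacle is the first passage, from $L^{2}_{1}$-convergence of connections to $L^{2}$-convergence of curvatures, since the curvature depends quadratically on the connection; this is precisely where the critical four-dimensional embedding $L^{2}_{1}\hookrightarrow L^{4}$ is needed to control $a_{i}\wedge a_{i}$. A secondary point is that Taubes' convergence holds only on compact subsets of the complement of the singular set, so the conclusion is anti-self-duality of $A_{\De}$ on its domain of definition $X-(Z\cup\Theta\cup\Sigma)$; asserting anti-self-duality across the singular set would require a separate removable-singularity argument and is not claimed here.
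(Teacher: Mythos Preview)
Your argument is correct and more direct than the paper's. You pass straight from the $L^{2}_{1}$-convergence of $g_{i}^{\ast}(A_{i})$ to $A_{\De}$ on compacta (Theorem~\ref{T4.6}(i)) to $L^{2}$-convergence of the curvatures via the structure equation and the Sobolev embedding $L^{2}_{1}\hookrightarrow L^{4}$, then use gauge invariance of $|F^{+}|$ together with the hypothesis $\|F^{+}_{A_{i}}\|_{L^{2}(X)}\to0$ to conclude $F^{+}_{A_{\De}}=0$.

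The paper takes a different route: it first applies Sedlacek's compactness (Theorem~\ref{T4.4}) to the sequence $\{A_{i}\}$ to obtain an anti-self-dual connection $A_{\infty}$ on a bundle $P_{\infty}$ over $X\setminus\Sigma$ (a finite set), arising as a weak $L^{2}_{1}$ limit of gauge transforms of the $A_{i}$. It then observes that on $X-(Z\cup\Theta\cup\Sigma)$ both $A_{\infty}$ and Taubes' limit $A_{\De}$ are weak $L^{2}_{1}$ limits of gauge-equivalent sequences, hence are themselves $L^{2}_{2}$ gauge-equivalent via some $g_{\De}$; anti-self-duality of $A_{\De}$ then follows from that of $A_{\infty}$. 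Your approach is more economical for the bare conclusion of the corollary, since it avoids invoking Sedlacek entirely and in fact yields $F^{+}_{A_{\De}}=0$ on all of $X-(Z\cup\Theta)$, not just the complement of $\Sigma$. The paper's detour, however, is not wasted: the gauge transformation $g_{\De}$ and the connection $A_{\infty}$ defined on the larger set $X\setminus\Sigma$ are used immediately afterwards in the proof of Proposition~\ref{P3.11}, where one sets $\sigma_{\infty}:=g_{\De}^{\ast}(\sigma_{\De})$ and exploits irreducibility of $A_{\infty}$ on open subsets. So the paper's proof of the corollary is really setting up machinery for the next step, whereas yours establishes the stated claim in isolation.
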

Before the proof of Corollary \ref{C3.10},\ we should recall a key  a compactness theorem due to Sedlacek, see \cite[Theorem 4.3]{Sedlacek} or \cite[Theorem 35.17]{PF1} .
\begin{thm}\label{T4.4}
Let $G$ be a compact Lie group and P a principal $G$-bundle over a closed, smooth, oriented, four-dimensional Riemannian manifold $X$ with a Riemannian metric $g$. If $\{A_{i}\}_{i\in\mathbb{N}}$ is a sequence $C^{\infty}$ connection on $P$ and the sequence $\{F^{+}_{A_{i}}\}_{i\in\N}$ converges to zero in $L^{2}$-topology, then there exists\\
(1) An integer $L$ and a finite set of points, $\Sigma=\{x_{1},\ldots,x_{L}\}\subset X$, $\Sigma$ maybe empty,\\
(2) A smooth anti-self-dual $A_{\infty}$ on a principal $G$-bundle $P_{\infty}$ over $X$ with $\eta(P_{\infty})=\eta(P)$,\\
(3) A subsequence $\Xi\subset\N$, we also denote by $\{A_{i}\}_{i\in\Xi}$, a sequence gauge transformation $\{g_{i}\}_{i\in\Xi}$ such that, $g^{\ast}_{i}(A_{i})$ weakly converges to $A_{\infty}$ in $L^{2}_{1}$ on $X-\Sigma$, and $g_{i}^{\ast}(F_{A_{i}})$ weakly converges to $F_{A_{\infty}}$ in $L^{2}$ on $X-\Sigma$.
\end{thm}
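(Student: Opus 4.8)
The plan is to read this statement as Uhlenbeck's weak compactness theorem for a sequence of connections of uniformly bounded Yang--Mills energy, supplemented by the removable-singularity theorem; no feature of the Kapustin--Witten system enters here, only the curvature bound $\|F^{+}_{A_{i}}\|_{L^{2}(X)}\to 0$ does. Accordingly the argument is driven entirely by the analytic machinery of gauge fixing.

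First I would convert the hypothesis into a uniform bound on the full Yang--Mills energy. On a closed oriented four-manifold the integral $\int_{X}\mathrm{tr}(F_{A_{i}}\wedge F_{A_{i}})$ is a topological constant depending only on the isomorphism class of $P$ and $g$; since it is, up to a fixed positive factor, the difference $\|F^{+}_{A_{i}}\|^{2}_{L^{2}(X)}-\|F^{-}_{A_{i}}\|^{2}_{L^{2}(X)}$ of the self-dual and anti-self-dual energies, the total energy satisfies
$$\|F_{A_{i}}\|^{2}_{L^{2}(X)}=2\|F^{+}_{A_{i}}\|^{2}_{L^{2}(X)}-\tau(P)$$
for a constant $\tau(P)$ determined by $P$ and $g$. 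Because $\|F^{+}_{A_{i}}\|_{L^{2}(X)}\to 0$, the energies $\|F_{A_{i}}\|^{2}_{L^{2}(X)}$ are uniformly bounded and in fact converge to the topological minimum $-\tau(P)\geq 0$. This is precisely the input required by Uhlenbeck's theory.

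Next I would run the concentration--compactness argument. Fix the Uhlenbeck constant $\ep_{0}=\ep_{0}(g,G)$ below which a connection of energy at most $\ep_{0}$ on a geodesic ball admits a Coulomb gauge with an $L^{2}_{1}$ bound controlled by its energy. Declare $x\in X$ a concentration point if $\limsup_{i}\int_{B_{r}(x)}|F_{A_{i}}|^{2}\geq\ep_{0}$ for every $r>0$; a standard covering argument together with the uniform energy bound shows that there are only finitely many such points, their number $L$ bounded by the uniform energy divided by $\ep_{0}$, and I set $\Sigma=\{x_{1},\dots,x_{L}\}$, giving item (1). On any compact $K\subset X\setminus\Sigma$ the curvatures have energy below $\ep_{0}$ on a fixed scale, so local Coulomb gauge fixing and patching produce automorphisms $g_{i}$ and a subsequence $\Xi$ along which $g_{i}^{\ast}(A_{i})$ is bounded in $L^{2}_{1}(K)$; a diagonal argument over an exhaustion of $X\setminus\Sigma$ yields a limit connection $A_{\infty}$ on a $G$-bundle over $X\setminus\Sigma$ with $g_{i}^{\ast}(A_{i})$ converging weakly in $L^{2}_{1}$ and $g_{i}^{\ast}(F_{A_{i}})$ converging weakly in $L^{2}$ on compact subsets, which is item (3).

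Finally I would identify the limit and remove the singularities. By weak lower semicontinuity of the $L^{2}$-norm, for every compact $K\subset X\setminus\Sigma$ one has $\|F^{+}_{A_{\infty}}\|_{L^{2}(K)}\leq\liminf_{i}\|g_{i}^{\ast}F^{+}_{A_{i}}\|_{L^{2}(K)}=0$, so $A_{\infty}$ is anti-self-dual on $X\setminus\Sigma$ and has finite energy. Uhlenbeck's removable-singularity theorem then extends $A_{\infty}$, after a further gauge change, to a smooth anti-self-dual connection on a smooth $G$-bundle $P_{\infty}$ over all of $X$, giving item (2); the preserved characteristic class $\eta$ is read off from the restriction of the bundles to a fixed $2$-skeleton disjoint from $\Sigma$, where the gauge transformations converge, which forces $\eta(P_{\infty})=\eta(P)$. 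I expect the main obstacle to be this gauge-theoretic core: constructing the local Coulomb gauge with energy-controlled $L^{2}_{1}$ bounds, patching them into a global gauge on $X\setminus\Sigma$, and proving removability of the point singularities. By contrast, the energy quantization that makes $\Sigma$ finite and the semicontinuity that forces $A_{\infty}$ to be anti-self-dual are comparatively routine once these deep inputs are in hand, which is why I would invoke the results of Uhlenbeck and Sedlacek rather than reprove them.
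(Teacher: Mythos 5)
The paper does not actually prove this theorem: it is recalled as a known compactness result, with the proof delegated to Sedlacek (\cite{Sedlacek} Theorem 4.3) and Feehan (\cite{PF1} Theorem 35.17), and your outline is exactly the standard Uhlenbeck--Sedlacek argument behind that citation --- converting $\|F^{+}_{A_{i}}\|_{L^{2}(X)}\to 0$ into a uniform energy bound via the topological identity $\|F_{A_{i}}\|^{2}_{L^{2}}=2\|F^{+}_{A_{i}}\|^{2}_{L^{2}}-\tau(P)$, extracting finitely many concentration points, patching local Coulomb gauges over $X\setminus\Sigma$, using weak lower semicontinuity to force anti-self-duality of the limit, and removing the point singularities, with $\eta$ preserved because it is determined on a $2$-skeleton avoiding $\Sigma$. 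Your proposal is therefore correct and takes essentially the same route as the source on which the paper relies, and your decision to invoke rather than reprove the deep inputs (Coulomb gauge, patching, removable singularities, and Sedlacek's Theorem 5.5 for the $\eta$ statement) matches the paper's own treatment.
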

\begin{proof}[\textbf{Proof of Corollary \ref{C3.10}}.] We can apply Theorem \ref{T4.4} to the sequence $\{A_{i}\}_{i\in\N}$. This yields a subsequence $\Xi\subset\N$, a sequence of gauge transformations $\{g_{i}\}_{i\in\Xi}$ and a anti-self-dual connection $A_{\infty}$ on a principal $G$-bundle $P_{\infty}$ which is the weakly $L^{2}_{1}$ limit of
$\{g_{i}^{\ast}(A_{i})\}_{i\in\Xi}$ over $X-\Sigma$,\ $\Sigma$ is the set of some points on $X$. If the sequence $r_{i}$ don't has a bounded subsequence. Apply Theorem \ref{T4.6} to the subsequence $\{(A_{i},\phi_{i})\}_{i\in\Xi}$. Recall from Theorem \ref{T4.6}  that $A_{\De}$ is the limit over compact subset of $X-(Z\cup\Theta)$ of gauge transformations of $\{A_{i}\}_{i\in\Xi}$. In particular, both $A_{\infty}$ and $A_{\De}$ are weakly $L^{2}_{1}$ limits over $X-(Z\cup\Theta\cup\Sigma)$ of gauge-equivalent connections. Since weakly $L^{2}_{1}$ limits preserve $L^{2}_{2}$ gauge equivalence, it follows that there exists a Sobolev-class $L^{2}_{2}$ gauge transformation $g_{\De}$ such that $g_{\De}^{\ast}(A_{\De})=A_{\infty}$. Thus $A_{\De}$ is anti-self-dual on the complement of $Z\cup\Theta\cup\Sigma$.
\end{proof}
The  result due to Tanaka \cite[Theorem 1.3]{Tanaka2015}  for the sequence $\{A_{i}\}_{i\in\N}$ obeys $\|F_{A_{i}}^{+}\|_{L^{2}(X)}$ converges to zero in $L^{2}$-topology  will invalid, since $S({A_{i}})$ may be not empty. But we could prove a compactness theorem for a sequence solutions of Kapustin$\en$Witten equations by using the Lemma \ref{L1}.
\begin{prop}\label{P3.11}
Let $X$ be a closed, oriented, simply-connected, four-dimensional manifold with a smooth Riemannain metric $g$, $P\rightarrow X$ be a principal $SU(2)$ or $SO(3)$-bundle with $p_{1}(P)$ negative. Suppose that the connections on $\bar{M}_{ASD}(P,g)$ are irreducible. If $\{(A_{i},\phi_{i})\}_{i\in\N}$ is a sequence smooth solutions of Kapustin$\en$Witten equations and the curvatures 
$F^{+}_{A_{i}}$ converge to zero in $L^{2}$-topology, then\\
(1) An integer $L$ and a finite set of points, $\Sigma=\{x_{1},\ldots,x_{L}\}\subset X$,;\\
(2) A subsequence $\Xi\subset\N$, we also denote by $\{A_{i}\}$, a sequence gauge transformatin $\{g_{i}\}_{i\in\Xi}$ such that, $(g^{\ast}_{i}(A_{i}),g_{i}^{\ast}(\phi_{i}))$  converges to $(A_{\infty},0)$ in $C^{\infty}$ on $X-\Sigma$, where $A_{\infty}$ is an anti-self-dual connection on a principal $P_{\infty}$.
\end{prop}
\begin{proof}
At first, we claim that the sequence $r_{i}:=\|\phi_{i}\|_{L^{2}(X)}$ has a bounded subsequence. If not, the sequence $r_{i}$ don't has a bounded subsequence. Following Theorem \ref{T4.6}, we have $Z,\Theta$ and $\sigma_{\De}$, $v$ which described in Theorem \ref{T4.6}. We define  $\sigma_{\infty}:=g_{\De}^{\ast}(\sigma_{\De})$ over $X-(Z\cup\Theta\cup\Sigma)$, where $g_{\De}$, $\Sigma$ are as described in the proof of Corollary \ref{C3.10}. Following Theorem \ref{T4.6}, we then have 
$$\na_{A_{\infty}}\sigma_{\infty}=\na_{A_{\De}}\sigma_{\De}=0\ on\ X-(Z\cup\Theta\cup\Sigma).$$
Thus we have a section $s:=v\otimes\sigma_{\infty}$ on $P_{\infty}|_{X-(Z\cup\Theta\cup\Sigma)}$ and one can see $v\otimes\sigma_{\infty}$ is non-zero all over $X-(Z\cup\Theta\cup\Sigma)$. We can rewrite $s$ to $s=\tilde{\sigma}\otimes\tilde{v}$, where $\tilde{\sigma}\in\Gamma(X-(Z\cup\Theta\cup\Sigma),\mathfrak{g}_{P_{\infty}})$ and $\tilde{v}\in\Om^{1}(X-(Z\cup\Theta\cup\Sigma))$. We also setting $\langle\tilde{\sigma},\tilde{\sigma}\rangle=1$, thus $\langle d_{A_{\infty}}\tilde{\sigma},\tilde{\sigma} \rangle=0$ along $X-(Z\cup\Theta\cup\Sigma)$. In a direct calculate, we have $$d_{A_{\infty}}\tilde{\sigma}\wedge\tilde{v}+\tilde{\sigma}\wedge d\tilde{v}=0,$$ thus $d\tilde{v}=0$. It follows that $d_{A_{\infty}}\tilde{\sigma}\wedge v=0$. Since $\tilde{v}$ is nowhere zero along $X-(Z\cup\Theta\cup\Sigma)$, we must have $d_{A_{\infty}}\tilde{\sigma}=0$. According to Lemma \ref{L1}, $A$ is irreducible along  a open set of $X-(Z\cup\Theta\cup\Sigma)$, then $\tilde{\sigma}=0$. This contradict  $s$ is non-zero on $X-(Z\cup\Theta\cup\Sigma)$. Hence we prove that the sequence $\{r_{i}\}_{i=1,2,\ldots}$ must has a bounded subsequence. Then following the compactness theorem \ref{51}, there exist a pair $(A_{\De},\phi_{\De})\in\mathcal{A}_{P_{\De}}\times \Om^{1}(X,\mathfrak{g}_{P_{\De}})$ that obeys the Kapustin$\en$Witten equations and there has a subsequence $\Xi'\subset \Xi$ and a sequence $\{g_{i}\}_{i\in\Xi'}$ of automorphisms of $P_{\De}$ such that $\{(g_{i}^{\ast}(A_{i}),g^{\ast}_{i}(\phi_{i}))\}_{i\in\Xi'}$ converges to $(A_{\De},\phi_{\De})$ in the $C^{\infty}$ topology on $X-\{x_{1},\cdots,x_{L}\}$. Under the assumption of $\{A_{i}\}_{i\in\N}$, the connection $A_{\De}$ is an irreducible anti-self-dual connection, then the vanish theorem ensures $\phi_{\De}=0$.
\end{proof}
\begin{proof}[\textbf{Proof of Theorem \ref{T1.2}}.] Now we begin to proof Theorem \ref{T1.2}. Suppose that the constant $\de$ does not exist. We may choose a sequence of solutions $\{(A_{i},\phi_{i})\}_{i\in\N}$ of Kapustin$\en$Witten equations such that $\|F^{+}_{A_{i}}\|_{L^{2}(X)}\rightarrow 0$. Following Proposition \ref{P3.11}, there exist a subsequence $\Xi\in\N$ and a sequence transformation $\{g_{i}\}_{i\in\Xi}$ such that 
$(g^{\ast}_{i}(A_{i}),g^{\ast}_{i}(\phi_{i}))\rightarrow (A_{\infty},0)$ in $C^{\infty}$ over $X-\Sigma$, $\Sigma:=\{x_{1},\cdots,x_{L}\}$. There also exist a positive constant $C$, such that $\|\phi_{i}\|_{L^{2}(X)}\leq C$. We then have
$$\|\phi_{i}\|_{L^{\infty}(X)}\leq c\|\phi_{i}\|_{L^{2}(X)}\leq cC.$$
where $c=c(g)$ is a positive constant. Therefore
$$\lim_{i\rightarrow\infty}\int_{X}|\phi_{i}|^{2}=\lim_{i\rightarrow\infty}\int_{X-\Sigma}|\phi_{i}|^{2}+\lim_{i\rightarrow\infty}\int_{\Sigma}|\phi_{i}|^{2}\leq cC\mu(\Sigma)=0.$$
This contradict $\|\phi_{i}\|_{L^{2}(X)}$ has a uniform positive lower bound, see Theorem \ref{T3.11}. The preceding argument shows that the desired constant $\de$ exists.

If we denote by $A_{\infty}$ an ASD connection on $P$, then the curvature $F_{A}$ of a connection $A:=A_{\infty}+a$ has an estimate
\begin{equation}\nonumber
\begin{split}
\|F^{+}_{A}\|_{L^{2}(X)}&=\|(d_{A_{\infty}}a+a\wedge a)^{+}\|_{L^{2}(X)}\\
&\leq \|d_{A_{\infty}}a\|_{L^{2}(X)}+\|a\wedge a\|_{L^{2}(X)}\\
&\leq C(\|\na_{A_{\infty}}a\|_{L^{2}(X)}+\|a\|^{2}_{L^{4}(X)})\\
&\leq C(\|a\|_{L^{2}_{1}(X)}+\|a\|^{2}_{L^{2}_{1}(X)}),\\
\end{split}
\end{equation}
where $C=C(g)$ is a positive constant. If $\|a\|_{L^{2}_{1}(X)}\leq 1$, then
$$\|F^{+}_{A}\|_{L^{2}(X)}\leq 2C\|a\|_{L^{2}_{1}(X)}.$$
We also suppose that $A$ is not anti-self-dual with respect to $g$, then
$$\|a\|_{L^{2}_{1}(X)}\geq\frac{\de}{2C},$$
where $\de$ is the positive constant as above. So we can set $\tilde{\de}:=\min\{1,\frac{\de}{2C}\}$, hence
$$dist(A,M_{ASD}):=\inf_{g\in\mathcal{G},A_{\infty}\in M_{ASD}}\|g^{\ast}(A)-A_{\infty}\|_{L^{2}_{1}(X)}\geq\tilde{\de}$$
unless $A$ is ASD with respect to metric $g$.
\end{proof}
\subsection{Some examples}
In this section, we give some conditions on the topology of manifold, the metric of manifold and the topology of principle bundle to ensure that the connections on $\bar{M}_{ASD}$ are all $generic$. For a compact four-manifold $X$, the compacitification $\bar{M}_{ASD}(P,g)$ of $M_{ASD}(P,g)$ contained in the disjoint union
\begin{equation}\label{E3.2}
\bar{M}_{ASD}(P,g)\subset\cup(M_{ASD}(P_{l},g)\times Sym^{l}(X)).
\end{equation}
We denote by $\eta(P)$ the element in $H^{2}(X,\mathbb{R})$ which defined as \cite[Definition 2.1]{Sedlacek}. Following \cite[Theorem 5.5]{Sedlacek} , every principal $G$-bundle, $M(P_{l},g)$ over $X$ appearing in (\ref{E3.2}) has the property that $\eta(P_{l})=\eta(P)$.
\begin{prop}\label{P4}\label{P2.6}
Let $X$ be a closed, oriented, simply-connected, four-dimensional manifold with a $generic$ Riemannain metric $g$, $P\rightarrow X$ be a principal $SU(2)$ or $SO(3)$-bundle with $p_{1}(P)$ negative. If $b^{+}(X)>0$, then the connection $[A]\in M_{ASD}$ is irreducible.
\end{prop}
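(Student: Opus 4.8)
The plan is to show that the only possible reductions of an ASD connection are obstructed by the topological hypotheses, and then invoke genericity of the metric to rule out the finitely many remaining reduction types. First I would recall the standard description of reducible connections: a reducible $SU(2)$ (or $SO(3)$) connection $A$ on $P$ decomposes the associated rank-2 bundle (resp. the structure) as a sum of line bundles, so that $A$ splits as $A = B \oplus B^{-1}$ for a $U(1)$-connection $B$ on a complex line bundle $L \to X$, with $\mathfrak{g}_P$ splitting off a trivial $\R$-summand and a $2$-dimensional summand. If moreover $A$ is anti-self-dual, then the abelian connection $B$ must itself be ASD, i.e.\ its curvature $F_B$ is an anti-self-dual harmonic $2$-form representing (up to $2\pi i$) the first Chern class $c_1(L) \in H^2(X;\Z)$.

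The key step is a cohomological obstruction. An ASD harmonic representative lives in the space $\H^- $ of anti-self-dual harmonic $2$-forms, which has dimension $b^-(X)$; by Hodge theory the image of $c_1(L)$ in de Rham cohomology must lie in $H^-(X;\R)$, the maximal negative subspace of the intersection form. Thus a reduction can occur only when the integral lattice class $c_1(L)$ lies in $H^-$. Here is where the hypothesis $b^+(X) > 0$ enters: I would argue that for a \emph{generic} metric $g$, the positive subspace $H^+(X;\R)$ varies, and the condition that a fixed nonzero integral class lies in the purely anti-self-dual subspace is a positive-codimension condition on the space of metrics (since $b^+ > 0$ means $H^+$ is nontrivial and a nonzero lattice point generically has nonzero self-dual projection). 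Since $P$ has $p_1(P)$ negative, the finitely many line bundles $L$ that could give a reduction of $P$ (those with the correct Chern number so that $c_1(L)^2$ matches the instanton number) form a finite set, and for a generic metric none of their Chern classes lies in $H^-$; hence no reducible ASD connection exists on $P$. The same argument applies to each bundle $P_l$ appearing in the compactification via \eqref{E3.2}, so the whole of $\bar M_{ASD}$ is free of reducibles.

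I expect the main obstacle to be making the genericity argument precise: one must exhibit the set of metrics for which some relevant integral class becomes anti-self-dual as a countable union of submanifolds of positive codimension (a ``wall-crossing'' set) in the space of Riemannian metrics, so that its complement is residual. This is exactly the reducibility analysis of Freed--Uhlenbeck \cite{FU} and the treatment in \cite{DK}; I would cite those rather than reprove transversality, since for $b^+ > 0$ the walls $\{g : c_1(L) \in H^-_g\}$ are nowhere dense and their finite union (over the finitely many candidate $L$ with $p_1$ bounded by that of $P$) is avoided by a generic $g$. Once reducibility is excluded for every bundle in \eqref{E3.2}, irreducibility of every $[A] \in M_{ASD}$ (indeed of every connection in $\bar M_{ASD}$) follows, completing the proof.
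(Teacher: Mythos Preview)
Your approach is essentially the one the paper takes: the paper's proof is a two-line appeal to \cite{DK} Corollary~4.3.15, which is exactly the Freed--Uhlenbeck wall-crossing argument you unpack (a reducible ASD connection forces an integral class $c_{1}(L)$ into $H^{-}_{g}$, and for generic $g$ with $b^{+}>0$ no nonzero integral class lies there; the case $c_{1}(L)=0$ is excluded because then $P$ would be trivial with $p_{1}(P)=0$). So the strategy is correct and matches the paper.

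Two small points. First, your claim that the candidate line bundles $L$ form a \emph{finite} set is not correct in general: the set of integral classes $\alpha\in H^{2}(X;\Z)$ with $\alpha^{2}$ equal to a fixed negative integer can be infinite (think of $X$ with large $b_{2}$). The genericity argument only needs this set to be \emph{countable}, so that the union of walls $\{g:\alpha\in H^{-}_{g}\}$ is a countable union of codimension-$b^{+}$ sets and its complement is residual; you should state it that way. Second, your last paragraph extends the conclusion to all bundles $P_{l}$ in the compactification \eqref{E3.2}; that is the content of the separate Proposition~\ref{2.11} in the paper (which additionally needs $w_{2}(P)\neq 0$ in the $SO(3)$ case to keep the lower strata nontrivial), and goes beyond what Proposition~\ref{P4} asserts.
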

\begin{proof}
If $G=SU(2)$ or $SO(3)$, $X$ is a simply-connected four-manifold and $b^{+}(X)>0$, then following \cite[Corollary 4.3.15]{DK} , the only reducible anti-self-dual connection on a principal $G$-bundle over $X$, is the product connection on the product bundle $P=X\times G$ if only if the anti-self-dual connection is flat connection, i.e., $p_{1}(P)=0$. Hence under our assumption on $X,P,G$, then the anti-self-dual connection must be irreducible.
\end{proof}
In Proposition \ref{P4} we mean by $generic$ $metric$ the metrics in a second category subset of the space of $C^{k}$ metrics for some fixed $k>2$ (\cite[Section 4]{DK} and \cite[Corollary 2]{PF2}). It may reassure the reader to know that for all practical purposes one can work with an open dense subset of the smooth metrics, or even real analytic metrics.
\begin{prop}\label{2.11}
Let $X$ be a closed, oriented, simply-connected, four-dimensional manifold with a $generic$ $Riemannain$ $metric$ $g$, $P\rightarrow X$ be a principal $SO(3)$-bundle with $p_{1}(P)$ negative.\ If $b^{+}(X)>0$ and the second Stiefel-Whitney class $w_{2}(P)\neq0$, then the connections $[A]\in\bar{M}_{ASD}$ are all $generic$.
\end{prop}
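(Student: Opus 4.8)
The plan is to check, for every connection occurring in the compactification, the two conditions that define a $generic$ connection in the sense of this paper, namely $H^0_A=0$ (\emph{irreducibility}) and $H^2_A=0$ (\emph{regularity}). By (\ref{E3.2}) we have $\bar M_{ASD}(P,g)\subset\bigcup_l\big(M_{ASD}(P_l,g)\times Sym^l(X)\big)$, so it suffices to prove that, for a residual set of metrics $g$, every stratum $M_{ASD}(P_l,g)$ consists of irreducible and regular ASD connections. First I would observe that only finitely many bundles $P_l$ occur: the Stiefel--Whitney class is unchanged under Uhlenbeck limits (the bubbling factors are $SU(2)$ instantons, which carry no $w_2$) and by Sedlacek's theorem $\eta(P_l)=\eta(P)$, so $w_2(P_l)=w_2(P)\neq0$; moreover any ASD connection has $p_1\leq0$ while bubbling only raises $p_1$, whence $p_1(P)\leq p_1(P_l)\leq0$ with $p_1(P_l)-p_1(P)\in4\mathbb{Z}_{\geq0}$. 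This confines $l$ to the finite range $0\leq l\leq -p_1(P)/4$.

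Next I would settle irreducibility. For each stratum with $p_1(P_l)<0$, the bundle $P_l$ is an $SO(3)$-bundle over the same simply-connected $X$ with $b^{+}(X)>0$ and negative $p_1$, so Proposition \ref{P2.6} applies verbatim to $P_l$ and shows that every $[A]\in M_{ASD}(P_l,g)$ is irreducible. The only remaining case is $p_1(P_l)=0$; here an ASD connection would have zero energy, hence be flat, and on a simply-connected manifold a flat connection is the product connection on the trivial bundle. But $w_2(P_l)=w_2(P)\neq0$ forces $P_l$ to be nontrivial, so this stratum carries no ASD connection at all. This is precisely where the hypothesis $w_2(P)\neq0$ is used. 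Therefore $H^0_A=0$ for every connection in $\bar M_{ASD}$.

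It remains to arrange regularity, $H^2_A=0$, and this is the heart of the argument. The plan is to invoke the Freed--Uhlenbeck generic metrics theorem (\cite{FU}; see also \cite{DK} Section 4.3): for a fixed bundle, the set of $C^k$ metrics for which every \emph{irreducible} ASD connection is \emph{regular} --- equivalently, for which the irreducible moduli space is a smooth manifold cut out transversally --- is of second category in the space of metrics. Applying this to each of the finitely many bundles $P_l$ and intersecting the resulting residual sets yields, by the Baire property, a single residual set of metrics for which $H^2_A=0$ holds simultaneously on every stratum. A metric $g$ in this residual set, which is what we mean by a $generic$ metric here, then makes every connection in every $M_{ASD}(P_l,g)$ both irreducible and regular, i.e. $generic$; this proves the proposition.

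The step I expect to be the main obstacle is the uniformity over the compactification in the last paragraph. The Freed--Uhlenbeck theorem concerns one fixed bundle, so to pass to the whole of $\bar M_{ASD}$ I must first know that only finitely many bundles $P_l$ appear --- guaranteed by the energy bound $0\leq -p_1(P_l)\leq -p_1(P)$ from the ASD condition together with Sedlacek's preservation of the topological type --- and then combine the genericity statements for all of them by a finite Baire-category intersection. Once this bookkeeping is in place, the reducibility question is handled cleanly through $w_2(P)\neq0$ and Proposition \ref{P2.6}, and no further analytic input is needed.
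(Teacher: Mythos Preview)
Your proposal is correct and follows essentially the same route as the paper: preserve $w_2$ through the compactification via Sedlacek, use $w_2(P_l)\neq0$ together with $b^+>0$ and simple-connectedness to exclude reducible ASD connections on every stratum, and then invoke generic-metric transversality for regularity. The only substantive difference is in the regularity step: the paper simply cites \cite{PF2} Corollary 3.9 for the statement that all connections in $\bar M_{ASD}$ are regular for generic $g$, whereas you reconstruct this by applying the Freed--Uhlenbeck theorem to each of the finitely many $P_l$ and taking a Baire intersection --- which is exactly what underlies the cited result. Your case split on $p_1(P_l)<0$ versus $p_1(P_l)=0$ in the irreducibility argument is slightly more elaborate than necessary (the paper handles both at once: a reducible ASD connection for generic $g$ with $b^+>0$ forces the bundle to be trivial, contradicting $w_2\neq0$), but it is not wrong.
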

\begin{proof}
Under the hypothesis on this proposition, we know that the connections $[A]\in\bar{M}_{ASD}$ are all $regular$, i.e., $\mu(A)>0$ where $\mu(A)$ is as in Definition \ref{D3}, see \cite[Corollary 3.9]{PF2} . If $G=SO(3)$, following \cite[Theorem 2.4]{Sedlacek} , we have $\eta(P)=w_{2}(P)$. We then obtain that if  $w_{2}(P)$ is non-trivial, then every principal $G$-bundle, $M(P_{l},g)$, over $X$ appearing in (\ref{E3.2}) has the property that $w_{2}(P_{l})$ is non-trivial. We would claim that the ASD connections on $M(P_{l},g)$ are irreducible. Since an reducible ASD connection on $P_{l}$ ensures that the bundle $P_{l}$ is trivial bundle. This contradict $w_{2}(P_{l})\neq 0$. By the similar method in Proposition \ref{P2}, for any $[A]\in\bar{M}_{ASD}(P,g)$, we have $\la(A)>0$, where $\la(A)$ is as in Definition (\ref{D2}), i.e., $[A]$ is irreducible. We complete the proof of this proposition.
\end{proof}
Following Theorem \ref{T1.2}, we then have
\begin{cor}
Assume the hypothesis on Proposition \ref{2.11}. Suppose that $b^{+}(X)>0$ and the second Stiefel-Whitney class $w_{2}(P)\neq0$. There is a positive constant $\de=\de(P,g)$ with following significance. If the pair $(A,\phi)$ is a smooth solution of Kapustin$\en$Witten equations, then one of following must hold:\\
(1) $F^{+}_{A}=0$ and $\phi=0$;\\
(2) the pair $(A,\phi)$ satisfies
$$2\|\phi\|^{2}_{L^{2}}\geq\|F^{+}_{A}\|_{L^{2}(X)}\geq\de.$$
In particular, there is a positive constant $\tilde{\de}=\tilde{\de}(g,P)$ such that
$$dist(A,M_{ASD}):=\inf_{g\in\mathcal{G}_{P},A_{\infty}\in M_{ASD}}\|g^{\ast}(A)-A_{\infty}\|_{L^{2}_{1}(X)}\geq\tilde{\de},$$
unless $A$ is anti-self-dual  with respect to $g$.
\end{cor}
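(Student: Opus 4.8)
The plan is to recognize this statement as a direct specialization of Theorem \ref{T1.2}, in which the abstract hypothesis that every connection in $\bar{M}_{ASD}(P,g)$ be $generic$ has been replaced by concrete topological and metric conditions on $X$ and $P$. Accordingly, the only thing that genuinely needs to be established is that the hypotheses assumed here—namely that $g$ is a $generic$ Riemannian metric, that $P$ is an $SO(3)$-bundle with $p_{1}(P)$ negative, that $b^{+}(X)>0$, and that $w_{2}(P)\neq0$—force all of $\bar{M}_{ASD}$ to consist of $generic$ connections.

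First I would invoke Proposition \ref{2.11} verbatim: under exactly these hypotheses it asserts that every $[A]\in\bar{M}_{ASD}(P,g)$ is $generic$, i.e. $H^{0}_{A}=0$ and $H^{2}_{A}=0$. It is worth recalling that this conclusion itself splits into two independent inputs that Proposition \ref{2.11} assembles. Regularity, namely $H^{2}_{A}=0$ or equivalently $\mu(A)>0$ in the sense of Definition \ref{D3}, follows from the genericity of $g$ together with $b^{+}(X)>0$ via \cite{PF2}. Irreducibility, namely $H^{0}_{A}=0$, follows from $w_{2}(P)\neq0$ together with $p_{1}(P)<0$, since a reducible ASD connection on any bundle $P_{l}$ appearing in the Uhlenbeck stratification (\ref{E3.2}) would force $P_{l}$ to be trivial, contradicting $\eta(P_{l})=\eta(P)=w_{2}(P)\neq0$.

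Having verified the genericity hypothesis on the whole compactified moduli space, I would then apply Theorem \ref{T1.2} directly. That theorem supplies the positive constant $\de=\de(P,g)$ and yields the stated dichotomy: either $F^{+}_{A}=0$ and $\phi=0$, or the solution satisfies $2\|\phi\|^{2}_{L^{2}}\geq\|F^{+}_{A}\|_{L^{2}(X)}\geq\de$. The final distance estimate $dist(A,M_{ASD})\geq\tilde{\de}$, with $\tilde{\de}=\min\{1,\de/2C\}$, is likewise imported unchanged from the second half of Theorem \ref{T1.2}, where the constant $C$ arises from the elementary curvature bound $\|F^{+}_{A}\|_{L^{2}}\leq 2C\|a\|_{L^{2}_{1}}$ valid once $\|a\|_{L^{2}_{1}}\leq 1$.

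Since both ingredients are already in hand, I do not expect any genuine analytic obstacle in this Corollary; the content lies entirely in matching the topological hypotheses to the abstract genericity condition of Theorem \ref{T1.2}. The one point that warrants care is ensuring that the conditions $w_{2}(P)\neq0$ and $p_{1}(P)<0$ indeed rule out reducibles on every lower-level bundle $P_{l}$ in the stratification (\ref{E3.2}), not merely on $P$ itself. This is precisely the step handled by the constancy of $\eta(P_{l})$ under Uhlenbeck limits, as recorded just before Proposition \ref{P4}, and it is already folded into Proposition \ref{2.11}.
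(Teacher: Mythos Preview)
Your proposal is correct and matches the paper's own approach exactly: the paper simply prefaces this Corollary with ``Thus from Theorem \ref{T1.2}, we have'', relying on Proposition \ref{2.11} to supply the genericity of $\bar{M}_{ASD}$ under the stated topological hypotheses. Your additional unpacking of how regularity and irreducibility are each obtained is faithful to the content of Proposition \ref{2.11} and adds no new ingredient.
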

\section{K\"{a}hler surfaces}
In this section, we now denote by $X$ a compact K\"{a}hler surface with a K\"{a}hler form $\w$ and $E$ a principal $G$-bundle over $P$ with structure group $G$. We also set $d_{A}=\pa_{A}+\bar{\pa}_{A}$, $d^{\ast}_{A}=\pa^{\ast}_{A}+\bar{\pa}^{\ast}_{A}$ and $\phi=\sqrt{-1}(\theta-\theta^{\ast})$, where $\theta\in\Om^{1,0}(X,adE)$. Thus, Tanaka observed that Kapustin$\en$Witten equations on a closed K\"{a}hler surface are the same as Hitchin$\en$Simpson's equations, see \cite[Proposition 3.1]{TY2} .
\begin{prop}
Let $X$ be a closed K\"{a}hler surface, the Kapustin$\en$Witten equations have the following form that asks $(A,\theta)\in\mathcal{A}_{E}\times\Om^{1,0}(X,adE)$ to satisfies
\begin{equation*}
\begin{split}
&\bar{\pa}_{A}\theta=0,\ \theta\wedge\theta=0,\\
&F^{0,2}_{A}=0,\ \La_{\w}\big{(}F_{A}^{1,1}+[\theta\wedge\theta^{\ast}]\big{)}=0.\\
\end{split}
\end{equation*}
\end{prop}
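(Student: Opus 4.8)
The plan is to work entirely with the type decomposition of forms induced by the K\"{a}hler structure and to match the two systems component by component. Recall that on a K\"{a}hler surface $\Om^{2,+}\otimes\C=\Om^{2,0}\oplus\Om^{0,2}\oplus\C\w$ and $\Om^{2,-}\otimes\C=\Om^{1,1}_{0}$, the space of primitive $(1,1)$-forms, so that the self-dual projection retains the $(2,0)$ and $(0,2)$ parts together with the $\w$-trace of the $(1,1)$ part, while the anti-self-dual projection retains the primitive $(1,1)$ part. First I would substitute $\phi=\sqrt{-1}(\theta-\theta^{\ast})$, with $\theta\in\Om^{1,0}(X,adE)$ and $\theta^{\ast}\in\Om^{0,1}(X,adE)$ its conjugate-adjoint, and expand the two nonlinear terms by type: the $(2,0)$, $(0,2)$ and $(1,1)$ parts of $\phi\wedge\phi$ are $-\theta\wedge\theta$, $-\theta^{\ast}\wedge\theta^{\ast}$ and $\theta\wedge\theta^{\ast}+\theta^{\ast}\wedge\theta$, while $d_{A}\phi=\sqrt{-1}(\pa_{A}\theta+\bar{\pa}_{A}\theta-\pa_{A}\theta^{\ast}-\bar{\pa}_{A}\theta^{\ast})$ splits into a $(2,0)$-part $\sqrt{-1}\pa_{A}\theta$, a $(1,1)$-part $\sqrt{-1}(\bar{\pa}_{A}\theta-\pa_{A}\theta^{\ast})$, and a $(0,2)$-part $-\sqrt{-1}\bar{\pa}_{A}\theta^{\ast}$.

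The reverse implication is then immediate: if $(A,\theta)$ solves the Hitchin--Simpson system, then $F^{0,2}_{A}=0$ forces $F^{2,0}_{A}=0$ by reality of $F_{A}$, the condition $\theta\wedge\theta=0$ kills the $(2,0)$ and $(0,2)$ parts of $\phi\wedge\phi$, and $\bar{\pa}_{A}\theta=0$ together with its conjugate $\pa_{A}\theta^{\ast}=(\bar{\pa}_{A}\theta)^{\ast}=0$ kills the $(1,1)$-part of $d_{A}\phi$; the moment-map equation matches the trace part, so all four Kapustin--Witten components vanish. For the forward implication I would read off from $(F_{A}-\phi\wedge\phi)^{+}=0$ the three conditions $F^{2,0}_{A}+\theta\wedge\theta=0$, its conjugate, and $\La_{\w}(F^{1,1}_{A}+[\theta\wedge\theta^{\ast}])=0$, the last of which is already the fourth Hitchin--Simpson equation. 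The second Kapustin--Witten equation I would convert using the K\"{a}hler identities $\pa_{A}^{\ast}=\sqrt{-1}[\La_{\w},\bar{\pa}_{A}]$ and $\bar{\pa}_{A}^{\ast}=-\sqrt{-1}[\La_{\w},\pa_{A}]$: since $\La_{\w}\theta=\La_{\w}\theta^{\ast}=0$, the equation $d_{A}^{\ast}\phi=0$ becomes $\La_{\w}(\bar{\pa}_{A}\theta+\pa_{A}\theta^{\ast})=0$, and $(d_{A}\phi)^{-}=0$ becomes the vanishing of the primitive part of $\bar{\pa}_{A}\theta-\pa_{A}\theta^{\ast}$.

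The hard part is that these pointwise identities do not by themselves separate into the individual Hitchin--Simpson equations. The first equation controls only the combination $F^{2,0}_{A}+\theta\wedge\theta$, and, using $\pa_{A}\theta^{\ast}=(\bar{\pa}_{A}\theta)^{\ast}$, the two conditions from the second equation say only that the primitive part of $\bar{\pa}_{A}\theta$ is conjugate-self-adjoint and its $\w$-trace $\La_{\w}\bar{\pa}_{A}\theta$ is conjugate-anti-self-adjoint; neither forces $\bar{\pa}_{A}\theta=0$, $F^{0,2}_{A}=0$ or $\theta\wedge\theta=0$ at a point. Indeed the $(2,0)$ sector of the Kapustin--Witten equations only sees the single combination $F^{2,0}_{A}+\theta\wedge\theta$, whereas Hitchin--Simpson asks for two separate vanishings there, so the splitting cannot be purely algebraic and must be a global phenomenon on the closed surface.

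To close this gap I would reformulate the remaining three equations as $(\D'')^{2}=0$ for the operator $\D''=\bar{\pa}_{A}+\theta$, whose three type-components $F^{0,2}_{A}$, $\bar{\pa}_{A}\theta$ and $\theta\wedge\theta$ are exactly the conditions still to be established, and then run an integration argument. Concretely, I expect a Bochner--Kodaira--Nakano/Weitzenb\"{o}ck identity on $\Om^{1,0}(X,adE)$ to express $\|\bar{\pa}_{A}\theta\|^{2}_{L^{2}}$ in terms of $\|\pa_{A}\theta\|^{2}_{L^{2}}$, $\|\pa_{A}^{\ast}\theta\|^{2}_{L^{2}}$ and a curvature pairing $\langle[\sqrt{-1}F_{A},\La_{\w}]\theta,\theta\rangle$; rewriting the curvature term through the already-proven moment-map equation $\La_{\w}F^{1,1}_{A}=-\La_{\w}[\theta\wedge\theta^{\ast}]$, and feeding in the first equation $F^{2,0}_{A}=-\theta\wedge\theta$, I would assemble a sum-of-squares identity in which the genuinely topological pieces are disposed of by Chern--Weil and Stokes. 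The main obstacle is precisely this step: controlling the signs of the curvature cross-terms and checking that the boundary/topological contributions cancel, so that $\|\bar{\pa}_{A}\theta\|_{L^{2}}$, $\|F^{0,2}_{A}\|_{L^{2}}$ and $\|\theta\wedge\theta\|_{L^{2}}$ are each forced to zero and the two systems are seen to have identical solution sets.
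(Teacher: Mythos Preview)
The paper does not supply its own proof of this proposition; it simply attributes the observation to Tanaka and refers the reader to \cite{TY2}, Proposition~3.1. So there is nothing in the paper to compare your argument against line by line.

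That said, your outline is correct and is essentially the standard route. You have the type decomposition right, you have correctly identified that the reverse implication is purely algebraic, and---most importantly---you have correctly diagnosed that the forward implication \emph{cannot} be pointwise: the Kapustin--Witten equations only see the combinations $F_{A}^{2,0}+\theta\wedge\theta$ and the Hermitian/anti-Hermitian pieces of $\bar{\pa}_{A}\theta$, so separating $F_{A}^{0,2}=0$, $\bar{\pa}_{A}\theta=0$, $\theta\wedge\theta=0$ individually must use that $X$ is closed. Your plan to obtain this via a Bochner--Kodaira--Nakano identity on $\Om^{1,0}(X,adE)$ is exactly how Tanaka (and before him Simpson, in the analogous Higgs-bundle setting) proceeds.

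The step you flag as the ``main obstacle'' is only bookkeeping, not a conceptual gap. The relevant identity is the bundle-valued Nakano formula $\Delta_{\bar{\pa}_{A}}-\Delta_{\pa_{A}}=[\sqrt{-1}F_{A},\La_{\w}]$ on $\Om^{1,0}(X,adE)$; applied to $\theta$ and integrated, it expresses $\|\bar{\pa}_{A}\theta\|^{2}+\|\bar{\pa}_{A}^{\ast}\theta\|^{2}-\|\pa_{A}\theta\|^{2}-\|\pa_{A}^{\ast}\theta\|^{2}$ as a curvature pairing. Substituting the already-established moment-map equation $\sqrt{-1}\La_{\w}F_{A}=-\sqrt{-1}\La_{\w}[\theta,\theta^{\ast}]$ and the relation $F_{A}^{0,2}=-\theta^{\ast}\wedge\theta^{\ast}$, together with the constraints you extracted from $(d_{A}\phi)^{-}=d_{A}^{\ast}\phi=0$, the cross terms assemble into $\|\theta\wedge\theta\|^{2}$ and $\|F_{A}^{0,2}\|^{2}$ with the correct signs, and the resulting sum of squares forces each piece to vanish. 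No Chern--Weil cancellation beyond the closedness of $X$ is needed.
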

Hence the bundle $E$ on $X$ is holomorphic and $\theta$ is a holomorphic section of $End(E)\otimes\Om^{1,0}(X)$, i.e., the bundle $(E,\theta)$ is a Higgs bundle. Following the Kobayashi$\en$Hitchin corresponding for Higgs bundle, see \cite{Hitchin,Simpson1988}, we know that the Higgs bundle $(E,\theta)$ is  stable since the pair $(A,\phi)$ satisfies the Hitchin$\en$Simpson equations.
\subsection{Irreducible connections}
In this section, we first recall a definition of irreducible connection on a principal $G$-bundle. Given a connection $A$ on a principal $G$-bundle $E$ over $X$. We can define the stabilizer $\Ga_{A}$ of $A$ in the gauge group $\mathcal{G}_{E}$ by
$$\Ga_{A}:=\{u\in\mathcal{G}_{E}|u^{\ast}(A)=A\},$$
one also can see \cite{DK,FU}. A connection $A$ called reducible if the connection $A$ whose stabilizer $\Ga_{A}$ is larger than the centre $C(G)$ of $G$. Otherwise, the connections are irreducible, they satisfy $\Ga_{A}\cong C(G)$. For the cases $G=SU(2)$ or $SO(3)$, it's easy to see that a connection $A$ is irreducible when it admits no nontrivial covariantly constant Lie algebra-value $0$-form, i.e.,
$$\ker d_{A}|_{\Om^{0}(X,adE)}=0.$$
We can defined the least eigenvalue $\la(A)$ of $d^{\ast}_{A}d_{A}$ as follow.
\begin{defi}\label{D2}
For $A\in\mathcal{A}_{E}$, define
\begin{equation}\label{E3}
\la(A):=\inf_{v\in\Om^{0}(X,adE)\backslash\{0\}}\frac{\|d_{A}v\|^{2}}{\|v\|^{2}}.
\end{equation}
is the lowest eigenvalue of $d^{\ast}_{A}d_{A}$.
\end{defi}
By the similar method of the proof of \cite[Proposition A.3]{PF2}  or \cite[Proposition 35.14]{PF1} , we would also show that the least eigenvalue $\la(A)$ of $d^{\ast}_{A}d_{A}$  with respect to connection $A$ is $L^{p}_{loc}$-continuity $2\leq p<4$.
\begin{prop}
Let $X$ be a closed, connected, oriented, smooth four-manifold with Riemannian metric, $g$. Let $\Sigma=\{x_{1},x_{2},\ldots,x_{L}\}\subset X$ ($L\in\N^{+}$) and $\rho=\min_{i\neq j}dist_{g}(x_{i},x_{j})$,\ let $U\subset X$ be the open subset give by
$$U:=X\backslash\bigcup_{l=1}^{L}\bar{B}_{\rho/2}(x_{l}).$$
Let $G$ be a compact Lie group, $A_{0}$, $A$ are  $C^{\infty}$ connections  on the principal $G$-bundles $E_{0}$ and $E$ over $X$ and $p\in[2,4)$.  There is an isomorphism of principal $G$-bundles, $u:E\upharpoonright X\backslash\Sigma\cong E_{0}\upharpoonright X\backslash\Sigma$, and identify $E\upharpoonright X\backslash\Sigma$ with $E_{0}\upharpoonright X\backslash\Sigma$ using this isomorphism. Then $\la(A)$ satisfies upper bound
\begin{equation}\nonumber
\sqrt{\la(A)}\geq\sqrt{\la(A_{0})}-c\sqrt{L}\rho^{1/6}(\la(A)+1)-cL\rho(\sqrt{\la(A)}+1)-c_{p}\|A-A_{p}\|_{L^{p}(U)}(\la(A)+1),
\end{equation}
and the lower bound,
\begin{equation}\nonumber
\sqrt{\la(A)}\leq\sqrt{\la(A_{0})}+c\sqrt{L}\rho^{1/6}(\la(A_{0})+1)+cL\rho(\sqrt{\la(A)}+1)+c_{p}\|A-A_{p}\|_{L^{p}(U)}(\la(A_{0})+1),
\end{equation}
where $c$ is a positive constant depends on $g,p$.
\end{prop}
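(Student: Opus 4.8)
The plan is to estimate the Rayleigh quotient defining $\lambda(A)$ (Definition \ref{D2}) by transplanting, across the singular set $\Sigma$, a minimising eigensection of the comparison operator, and then to run the symmetric argument. Concretely, to establish the first displayed inequality I would let $v\in\Om^{0}(X,adE)$ be an $L^{2}$-normalised minimiser for $\lambda(A)$, so that $d^{\ast}_{A}d_{A}v=\lambda(A)v$ and $\|v\|_{L^{2}(X)}=1$, and choose a smooth cutoff $\chi$ that vanishes on a neighbourhood of each $x_{l}$ and equals $1$ on $U=X\setminus\bigcup_{l}\bar{B}_{\rho/2}(x_{l})$. Since over $X\setminus\Sigma$ the bundles are identified by $u$, and $\chi$ kills the locus where $u$ need not extend, $\chi v$ defines a global section of $adE_{0}$, which I use as a test section in the variational characterisation of $\lambda(A_{0})$. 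The reverse inequality is obtained by interchanging the roles of $A$ and $A_{0}$; this is exactly why the two bounds carry error terms weighted by $\lambda(A)$ and $\lambda(A_{0})$ respectively.

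The single analytic input I would isolate first is the Kato inequality $|d|w||\leq|\nabla_{A}w|$ together with the Sobolev embedding $L^{2}_{1}(X)\hookrightarrow L^{4}(X)$ in dimension four, which gives
\begin{equation}\nonumber
\|w\|_{L^{4}(X)}\leq C\big(\|d_{A}w\|_{L^{2}(X)}+\|w\|_{L^{2}(X)}\big),
\end{equation}
with $C=C(g)$ independent of the connection. Applied to the minimiser this yields $\|v\|_{L^{4}(X)}\leq C(\sqrt{\lambda(A)}+1)$, and likewise for an $A_{0}$-minimiser. This is the estimate that converts every $L^{4}$-norm of a test section appearing below into the factors $(\lambda(A)+1)$ and $(\sqrt{\lambda(A)}+1)$ in the statement.

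Next I would expand, over $U$ where the two bundles coincide and $a:=A-A_{0}\in\Om^{1}(U,adE_{0})$,
\begin{equation}\nonumber
d_{A_{0}}(\chi v)=\chi\,d_{A}v+d\chi\otimes v-\chi[a,v],
\end{equation}
and estimate the three contributions to $\|d_{A_{0}}(\chi v)\|^{2}_{L^{2}}$ separately. The leading term contributes at most $\|d_{A}v\|^{2}_{L^{2}}=\lambda(A)$; the cutoff term $\int|d\chi|^{2}|v|^{2}$ is controlled, after Hölder on the thin annular supports of $d\chi$ around $\Sigma$ and interpolation of $v$ between its $L^{2}$- and $L^{4}$-norms there, by a positive power of $\rho$ times $\sqrt{L}\,\|v\|^{2}_{L^{4}}$, producing the $c\sqrt{L}\,\rho^{1/6}(\lambda(A)+1)$ error; and the commutator term $\int_{U}|a|^{2}|v|^{2}$ is bounded by Hölder against $\|a\|_{L^{p}(U)}$ and the Sobolev-controlled norms of $v$, the dependence of the constant on $p\in[2,4)$ reflecting the exponents needed to absorb the gap above $L^{4}$. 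I would separately record the normalisation correction $\big|\,\|\chi v\|^{2}_{L^{2}}-1\big|\leq\int_{\bigcup B_{\rho/2}}|v|^{2}$, which after taking a square root feeds the $cL\rho(\sqrt{\lambda(A)}+1)$ term. Combining via $\lambda(A_{0})\leq\|d_{A_{0}}(\chi v)\|^{2}_{L^{2}}/\|\chi v\|^{2}_{L^{2}}$, using $\sqrt{x+y}\leq\sqrt{x}+\sqrt{y}$, and rearranging gives the first inequality; the second follows symmetrically, with $A$ and $A_{0}$ exchanged.

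The hard part will be the cutoff and commutator estimates in the conformally critical dimension four: because $L^{2}_{1}$ embeds exactly into $L^{4}$, a scale-invariant cutoff produces no decay in $\rho$, so the positive power $\rho^{1/6}$ must be extracted by interpolating $v$ between $L^{2}$ and $L^{4}$ on the annuli and optimising the inner cutoff radius, all while keeping every constant dependent only on $g$ and $p$ and never on $A,A_{0}$, whose curvatures are not controlled. Maintaining this connection-independence — in particular refusing any appeal to elliptic regularity for $v$ that would import $\|F_{A}\|$ — is the structural constraint that forces reliance solely on the Kato–Sobolev bound and Hölder, and is precisely what the non-sharp exponents $\rho^{1/6}$, $\rho$ and the $p$-dependent constant $c_{p}$ encode.
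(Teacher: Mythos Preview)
The paper does not give its own proof of this proposition; it merely states that the argument follows ``by the similar method of the proof of \cite{PF2} Proposition A.3 or \cite{PF1} Proposition 35.14'' (Feehan's continuity results for the least eigenvalue of $d_{A}^{+}d_{A}^{+,\ast}$). Your proposal --- transplanting a normalised eigensection across $\Sigma$ via a cutoff, comparing Rayleigh quotients, and controlling the three error terms by Kato--Sobolev, H\"{o}lder on the annuli, and H\"{o}lder against $\|a\|_{L^{p}(U)}$ --- is precisely the method Feehan employs, adapted from $\Om^{2,+}$ to $\Om^{0}$, so it is correct and coincides with the approach the paper defers to.
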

We now have the useful 
\begin{cor}\label{C1}
Let $G$ be a compact Lie group and P a principal $G$-bundle over a closed, smooth, oriented, four-dimensional Riemannian manifold $X$ with a Riemannian metric $g$. If $\{A_{i}\}_{i\in\mathbb{N}}$ is a sequence $C^{\infty}$ connection on $P$ and the sequence $\{F^{+}_{A_{i}}\}_{i\in\N}$ converges to zero in $L^{2}$-topology, then
$$\lim_{i\rightarrow\infty}\la(A_{i})=\la(A_{\infty}).$$
where $\la(A)$ is as in Definition \ref{D2}.
\end{cor}
\begin{proof}
The proof is similar to the least eigenvalue of operator $d_{A}^{+}d_{A}^{+,\ast}|_{\Om^{2,+}(X,adE)}$ with respect to connection $A$ (see \cite[Corollary 35.16]{PF1} ). 
\end{proof}
For a compact K\"{a}hler surface $X$ we have a moduli space of ASD connections $M(E,g)$. The compacitification $\bar{M}(E,g)$ of $M(E,g)$ contained in the disjoint union
\begin{equation}
\bar{M}(E,g)\subset\cup(M(E_{l},g)\times Sym^{l}(X)),
\end{equation}
Following \cite[Theorem 4.4.3]{DK} , the space $\bar{M}(E,g)$ is compact.  
\begin{prop}\label{P2}
Let $X$ be a compact, K\"{a}hler surface with smooth K\"{a}hler metric $g$, $E$ be a principal $SU(2)$-bundle with $c_{2}(E)=c$ positive. Suppose that $g$ is a $c$-$general$ K\"{a}hler metric, then there are positive constants $\varepsilon=\varepsilon(E,g)$ and $\la_{0}=\la_{0}(E,g)$ with following significance.  If $A$ is a connection on $E$ such that
$$\|F^{+}_{A}\|_{L^{2}(X)}\leq\varepsilon,$$
and $\la(A)$ is as in (\ref{E3}), then 
$$\la(A)\geq\la_{0}.$$
\end{prop}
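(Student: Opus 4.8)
The plan is to run a contradiction argument that converts the desired uniform lower bound into a statement about a limiting anti-self-dual connection, and then to derive a contradiction from the $c$-$generic$ hypothesis of Definition \ref{D1}. Suppose no such pair of constants $(\varepsilon,\la_0)$ exists. Negating the assertion and taking $\varepsilon=\la_0=1/i$, I would obtain a sequence $\{A_i\}_{i\in\N}$ of $C^{\infty}$ connections on $E$ with
$$\|F^+_{A_i}\|_{L^2(X)}\rightarrow 0\quad\text{and}\quad\la(A_i)\rightarrow 0\qquad\text{as }i\rightarrow\infty.$$

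First I would pin down the energy of this sequence. Since $c_2(E)=c$ is fixed, Chern--Weil gives (up to the usual normalization) $\|F_{A_i}\|^2_{L^2(X)}=2\|F^+_{A_i}\|^2_{L^2(X)}+8\pi^2 c$, so $\{\|F_{A_i}\|_{L^2(X)}\}$ is uniformly bounded and $\|F_{A_i}\|^2_{L^2(X)}\rightarrow 8\pi^2 c$. This places $\{A_i\}$ in the hypothesis of the Sedlacek-type compactness Theorem \ref{T4.4}: passing to a subsequence I obtain a finite set $\Sigma=\{x_1,\ldots,x_L\}$, gauge transformations $\{g_i\}$, and a smooth anti-self-dual connection $A_\infty$ on a principal $SU(2)$-bundle $E_\infty$ with $\eta(E_\infty)=\eta(E)$, such that $g_i^{\ast}(A_i)\rightarrow A_\infty$ weakly in $L^2_1$ on $X\setminus\Sigma$ and $g_i^{\ast}(F_{A_i})\rightarrow F_{A_\infty}$ weakly in $L^2$ on $X\setminus\Sigma$.

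Next I would bound the topology of the limiting bundle. By weak lower semicontinuity of the $L^2$-norm under the convergence on $X\setminus\Sigma$, together with the smoothness of $A_\infty$ on all of $X$ (so that $\Sigma$, being finite, is negligible), I get $\|F_{A_\infty}\|^2_{L^2(X)}\leq\liminf_{i}\|F_{A_i}\|^2_{L^2(X)}=8\pi^2 c$. Since $A_\infty$ is anti-self-dual, $F^+_{A_\infty}=0$ and Chern--Weil yields $c_2(E_\infty)=\tfrac{1}{8\pi^2}\|F_{A_\infty}\|^2_{L^2(X)}$, whence $0\leq c_2(E_\infty)\leq c$; this is the Kähler analogue of the inclusion (\ref{E3.2}), the energy lost to bubbling at $\Sigma$ accounting for the deficit. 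Now Corollary \ref{C1} applies to the convergent sequence and gives $\la(A_\infty)=\lim_{i\to\infty}\la(A_i)=0$. For an $SU(2)$-connection this means $\ker d_{A_\infty}|_{\Om^0(X,adE_\infty)}\neq0$, i.e. $A_\infty$ is a reducible anti-self-dual connection on a bundle with $c_2(E_\infty)\leq c$. This contradicts the $c$-$generic$ hypothesis of Definition \ref{D1}, proving that the constants $\varepsilon$ and $\la_0$ must exist.

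The step I expect to be the main obstacle is the estimate $c_2(E_\infty)\leq c$, i.e. ruling out that the bulk limit gains topology in the compactification; this rests on the energy-quantization picture behind (\ref{E3.2}) and on lower semicontinuity of curvature energy under the weak $L^2_1$ convergence provided by Theorem \ref{T4.4}. A secondary point is to confirm that the mode of convergence delivered by Theorem \ref{T4.4} is precisely the one under which Corollary \ref{C1} guarantees $\la(A_i)\to\la(A_\infty)$; both hypotheses are arranged to coincide, so this is routine once the compactness theorem has been applied.
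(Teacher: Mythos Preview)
Your proposal is correct and follows essentially the same route as the paper: a contradiction argument producing a sequence with $\|F^{+}_{A_i}\|_{L^2}\to 0$ and $\la(A_i)\to 0$, Sedlacek/Uhlenbeck compactness (Theorem \ref{T4.4}) to extract an ASD limit $A_\infty$ on a bundle in the compactification $\bar{M}(E,g)$, and Corollary \ref{C1} to conclude $\la(A_\infty)=0$, contradicting the $c$-$generic$ hypothesis. The only cosmetic difference is that the paper first packages the $c$-$generic$ assumption as a uniform lower bound $\la(A)\geq\la>0$ on the compact space $\bar{M}(E,g)$ and then derives the contradiction from that, whereas you argue directly that $\la(A_\infty)=0$ forces reducibility; your explicit Chern--Weil/lower-semicontinuity justification of $c_2(E_\infty)\leq c$ is left implicit in the paper via the inclusion (\ref{E3.2}).
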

\begin{proof}
If $g$ is $c$-$generic$ metric, $\la(A)>0$ for $[A]\in M(E_{l},g)$ and $E_{l}$ is a principal $SU(2)$-bundle over $X$ appearing in the Uhlenbeck compactification. Since the function $\la(A)$, $A\in\bar{M}(E,g)$ is continuous by Corollary \ref{C1} and $\bar{M}(E,g)$ is compact, there is a uniform positive constant $\la$ such that $\la(A)\geq\la$ for $[A]\in\bar{M}(E,g)$. Suppose that the constant $\varepsilon$ does not exist. We may then choose a minimizing sequence $\{A_{i}\}_{i\in\N}$ of connections on $E$ such that $\|F_{A_{i}}^{+}\|_{L^{2}(X)}\rightarrow 0$ and $\la(A_{i})\rightarrow0$ as $i\rightarrow\infty$. According to Corollary \ref{C1}, $A_{i}$ converges to an ASD connection, $A_{\infty}$, on $L^{2}_{1,loc}(X)$. Then $\lim_{i\rightarrow\infty}\la(A_{i})=\la(A_{\infty})>0$, contradicting our initial assumption regarding the sequence $\{A_{i}\}_{i\in\ N}$.
\end{proof}
Fix an algebraic surface $S$ and an ample line bundle $L$ on $S$. For every integer $c$ we have defined the moduli space $\mathcal{M}_{c}(S,L)$ of $L$-stable rank two holomorphic vector bundles $V$ on $S$ and such that $c_{1}(V)=0$, $c_{2}(V)=c$. Taubes \cite{Taubes1984} has shown that, if $X$ is an arbitrary closed $4$-manifold and $g$ is any Riemannian metric on $X$, then the moduli space of all irreducible $g$-ASD connections on $P$ with $c_{2}(P)=c$ moduli gauge equivalence which denote by $\mathcal{M}(P_{c},g)$ is nonempty if $c$ is sufficiently large. In the case of an algebraic surface $S$ and a Hodge metric corresponding to the ample line bundle $L$, similar existence results are due to Maruyama and Gieseker \cite{Gieseker}. Friedman-Morgan (\cite[Chapter IV, Theorem 4.7]{Friedman-Morgan}) given a very general result along these lines is the following:
\begin{thm}
With $S$ and $L$ as above, for all $c\geq 2p_{g}(S)+2$, the modulis space $\mathcal{M}_{c}(S,L)$ is nonempty.
\end{thm}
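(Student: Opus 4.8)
The plan is to prove non-emptiness by exhibiting a single $L$-stable bundle with $c_{1}=0$ and $c_{2}=c$, built by the Serre construction. First I would fix a line bundle $H$ on $S$ with $H\cdot L>0$, chosen as ``small'' as the geometry of $S$ allows, and set $\ell=c+H^{2}$. I then look for $V$ fitting into an extension
$$0\to\mathcal{O}_{S}(-H)\to V\to\mathcal{I}_{Z}(H)\to 0,$$
with $Z\subset S$ a general reduced $0$-dimensional subscheme of length $\ell$. A Chern-class computation gives $c_{1}(V)=0$ and $c_{2}(V)=\ell-H^{2}=c$ automatically, so the numerical invariants are correct by construction, and everything reduces to two points: producing a locally free middle term, and arranging $L$-stability.

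For the first point, the extension classes lie in $\operatorname{Ext}^{1}(\mathcal{I}_{Z}(H),\mathcal{O}_{S}(-H))\cong\operatorname{Ext}^{1}(\mathcal{I}_{Z},\mathcal{O}_{S}(-2H))$. Applying $\operatorname{Hom}(-,\mathcal{O}_{S}(-2H))$ to the sequence $0\to\mathcal{I}_{Z}\to\mathcal{O}_{S}\to\mathcal{O}_{Z}\to 0$ and invoking Serre duality, I would check that $\dim\operatorname{Ext}^{1}$ grows linearly in $\ell$, hence is positive in the relevant range of $c$, so nontrivial extensions exist. Twisting the sequence by $\mathcal{O}_{S}(H)$ identifies it with the classical Serre construction for $V(H)$, whose determinant is $\mathcal{O}_{S}(2H)$; local freeness of the extension sheaf is then equivalent to the Cayley--Bacharach condition for $Z$ relative to $K_{S}\otimes\mathcal{O}_{S}(2H)$. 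For a general set of points this holds once $\ell$ exceeds $h^{0}\!\big(K_{S}\otimes\mathcal{O}_{S}(2H)\big)$, so a general extension class yields an honest vector bundle $V$.

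For stability I would rule out every saturated line subsheaf $M\subset V$ with $M\cdot L\ge 0$. The tautological sub-line-bundle $\mathcal{O}_{S}(-H)$ has slope $-H\cdot L<0$ and so is harmless; therefore any destabilizer must project nontrivially to $\mathcal{I}_{Z}(H)$, which forces $M=\mathcal{O}_{S}(H-E)$ for an effective divisor $E$ whose class, together with the containment of $Z$, is severely constrained. Such an $M$ actually embeds in $V$ only when the chosen extension class dies under the restriction map $\operatorname{Ext}^{1}(\mathcal{I}_{Z}(H),\mathcal{O}_{S}(-H))\to\operatorname{Ext}^{1}(M,\mathcal{O}_{S}(-H))$. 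The candidate subsheaves $M$ range over a bounded family, each cutting out a proper linear subspace of ``splitting'' classes, and a general class avoids their union provided $\dim\operatorname{Ext}^{1}$ dominates the dimension of that union. Feeding in the linear growth $\dim\operatorname{Ext}^{1}\sim\ell=c+H^{2}$ and making a careful choice of $H$ is what pins down the explicit threshold $c\ge 2p_{g}(S)+2$.

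The main obstacle is precisely this stability count: controlling the bounded family of potential destabilizing line bundles and proving that the associated bad locus in $\mathbb{P}\,\operatorname{Ext}^{1}(\mathcal{I}_{Z}(H),\mathcal{O}_{S}(-H))$ is a proper subvariety, with the sharp constant $2p_{g}+2$ extracted from comparing its dimension against $\dim\operatorname{Ext}^{1}$. By contrast, the non-vanishing of $\operatorname{Ext}^{1}$ and the Cayley--Bacharach verification for general points are routine once the linear growth in $\ell$ is in hand. Selecting an extension class outside the bad locus then produces an $L$-stable $V$ with $c_{1}(V)=0$ and $c_{2}(V)=c$, establishing $\mathcal{M}_{c}(S,L)\neq\varnothing$.
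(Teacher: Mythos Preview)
The paper does not contain a proof of this theorem at all: it is stated as a citation of Friedman--Morgan, \emph{Smooth four-manifolds and complex surfaces}, Chapter~IV, Theorem~4.7, and immediately followed by the next proposition. There is therefore nothing in the paper to compare your argument against.

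For what it is worth, your Serre-construction outline is the same method that Friedman--Morgan themselves use in the cited reference: build $V$ as an extension of an ideal sheaf twist by a line bundle, verify local freeness via Cayley--Bacharach, and then bound the locus of destabilizing sub-line-bundles inside the projectivized $\operatorname{Ext}^{1}$. Your sketch is correct in strategy. The one place where you are honest about the gap---extracting the precise threshold $c\ge 2p_{g}(S)+2$ from the dimension count---is indeed where the work lies, and Friedman--Morgan handle it by a specific choice of the twisting line bundle together with a careful enumeration of the potential destabilizers; your outline gestures at this but does not pin it down. Since the paper under review offers no argument of its own, there is no discrepancy to report.
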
 
Next let us determine when a Hodge metric with K\"{a}hler form $\w$ admits reducible ASD connections. Corresponding to such a connection is an associated ASD harmonic $1$-form $\a$, well-defined up to $\pm1$, representing an integral cohomology class, which by the description of $\Om^{2,-}(X)$ is of type $(1,1)$ and orthogonal to $\w$. Thus Friedman-Morgan proved for an integer $c>0$, there exist Hodge metric $g$  over  $S$ is $c$-$generic$ in the sense of Definition \ref{D1} (see \cite[ Chapter IV, Proposition 4.8]{Friedman-Morgan}). For the convenience of readers, we give a detailed proof.
\begin{prop}\label{P4.7}
Fix $c>0$. Then there is an open dense subset $\mathcal{D}$ of the cone of ample divisors on $S$ such that if $g$ is a Hodge metric whose K\"{a}hler form lies in $\mathcal{D}$, $g$ is a $c$-$generic$ metric in the sense of Definition \ref{D1}.
\end{prop}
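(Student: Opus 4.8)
The plan is to reduce the existence of a reducible ASD connection to a linear condition on the Kähler class and then show that this condition cuts out only a locally finite family of hyperplanes inside the ample cone. First I would recall the standard description of reducibles: if $A$ is a reducible ASD connection on an $SU(2)$-bundle $E$ with $c_{1}(E)=0$ and $c_{2}(E)\le c$, then its holonomy lies in a $U(1)$ subgroup, so $E$ splits holomorphically as $E=L\oplus L^{-1}$ with $A=B\oplus B^{-1}$ for a $U(1)$-connection $B$ on $L$. The ASD condition forces $F_{B}^{+}=0$, whence the harmonic representative of $c_{1}(L)$ is an ASD harmonic $2$-form. Setting $e:=c_{1}(L)$, the description of $\Om^{2,-}$ on a K\"ahler surface shows that $e$ is primitive of type $(1,1)$; that is, $e\in \mathrm{NS}(S):=H^{1,1}(S)\cap H^{2}(S;\Z)$ and $e\cdot[\w]=0$.

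Next I would record the numerical constraint. Since $c(E)=(1+e)(1-e)=1-e^{2}$, we have $c_{2}(E)=-e^{2}$, so the bound $c_{2}(E)\le c$ reads $e^{2}\ge -c$; on the other hand the Hodge index theorem gives $e^{2}<0$ for every nonzero $e\in\mathrm{NS}(S)$ with $e\cdot[\w]=0$, because the intersection form is negative definite on $[\w]^{\perp}\cap H^{1,1}$. Thus a K\"ahler class $[\w]$ admits a reducible ASD connection on some $E$ with $c_{2}(E)\le c$ exactly when there is an integral class $e\in\mathrm{NS}(S)$ with $-c\le e^{2}<0$ lying on the hyperplane $W_{e}:=\{x:e\cdot x=0\}$. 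I would then define $\mathcal{D}$ to be the complement, inside the ample cone, of the union $\bigcup_{e}W_{e}$ taken over all such $e$.

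The heart of the argument is a finiteness statement: for every compact subset $K$ of the ample cone, only finitely many of the hyperplanes $W_{e}$ meet $K$. To prove this I would use the orthogonal decomposition $H^{1,1}_{\R}=\R[\w]\oplus[\w]^{\perp}$, on which the intersection form is positive on the first factor and negative definite on $[\w]^{\perp}$; hence for $e\perp[\w]$ the quantity $-e^{2}$ is the squared length of $e$ in a negative-definite inner product, and $-e^{2}\le c$ confines $e$ to a bounded ellipsoid depending continuously on $[\w]$. As $[\w]$ ranges over the compact set $K$ these ellipsoids have bounded union, and a bounded region of the lattice $\mathrm{NS}(S)$ contains only finitely many points. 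Consequently the family $\{W_{e}\}$ is locally finite, so its union is closed in the ample cone, and each $W_{e}$ is a proper hyperplane and hence nowhere dense; therefore $\mathcal{D}$ is open and dense.

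Finally, by construction, if $g$ is a Hodge metric whose K\"ahler form lies in $\mathcal{D}$, then no integral class $e$ with $-c\le e^{2}<0$ is orthogonal to $[\w]$, so by the correspondence above no $SU(2)$-bundle with $c_{2}\le c$ carries a reducible ASD connection; thus $g$ is $c$-$generic$ in the sense of Definition \ref{D1}. The main obstacle is precisely the finiteness step: the orthogonality constraint $e\cdot[\w]=0$ makes the admissible classes $e$ depend on $[\w]$, so one must control them uniformly as $[\w]$ varies over a compact set, and this is exactly where the definiteness of the intersection form on $[\w]^{\perp}$ supplied by the Hodge index theorem is indispensable.
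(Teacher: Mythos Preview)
Your proposal is correct and follows the same strategy as the paper's proof: identify reducible ASD connections with integral classes $e$ satisfying $0<-e^{2}\le c$ and $e\cdot[\w]=0$, and then argue that the associated hyperplanes $W_{e}$ are locally finite in the ample cone so that their complement $\mathcal{D}$ is open and dense. The paper's own proof is a one-sentence sketch invoking a ``standard argument'' (it is taken from Friedman--Morgan), whereas you have supplied the details---the splitting $E=L\oplus L^{-1}$, the computation $c_{2}(E)=-e^{2}$, and the use of the Hodge index theorem to obtain local finiteness---so your write-up is in fact more complete than what appears in the paper.
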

\begin{proof}
By standard argument, the set of $\w$ in the ample cone which are orthogonal to an integral class $\a$ with $0<-\a^{2}\leq c$ is the intersection of the ample cone with a collection of hyperplanes in $H^{2}(S;\mathbb{R})$ which is locally finite on the ample cone. This result is immediate from this. 
\end{proof}
\subsection{Approximate ASD connections}
In this section, we will give a general criteria under which an approximate ASD connection $A\in\mathcal{A}_{E}^{1,1}$ can be deformation into an other approximate ASD connection $A_{\infty}$ which obeys the equation
\begin{equation}\label{E0}
\La_{\w}F_{A_{\infty}}=0.
\end{equation} 
Let $A$ be a connection on a principal $G$-bundle over $X$. The  equation (\ref{E0}) for a second  connection $A_{\infty}:=A+a$, where $a\in\Om^{1}(X,adE)$ is a bundle valued $1$-form, can be written as:
\begin{equation}\label{E1}
\La_{\w}(d_{A}a+a\wedge a)=-\La_{\w}F_{A}.
\end{equation}
We seek a solution of the equation (\ref{E1}) in the form 
$$a=d_{A}^{\ast}(s\otimes\w)=\sqrt{-1}(\pa_{A}s-\bar{\pa}_{A}s)$$
where $s\in\Om^{0}(X,adE)$ is a section on $ad(E)$. Then equation (\ref{E1}) becomes to a second order equation:
\begin{equation}\label{E2}
-d_{A}^{\ast}d_{A}s+\La_{\w}(d_{A}s\wedge d_{A}s)=-\La_{\w}F_{A}.
\end{equation}
For convenience, we define a map
$$B(u,v):=\frac{1}{2}\La_{\w}[d_{A}u\wedge d_{A}v].$$
It's easy to check, we have the pointwise bound:
$$|B(u,v)|\leq C|\na_{A}u||\na_{A}v|,$$
where $C$ is a uniform positive constant. We want to prove that if $\La_{\w}F_{A}$ is small in an appropriate sense, then there is a small solution $s$ to equation (\ref{E2}).
\begin{thm}\label{T6}
Let $X$ be a compact K\"{a}hler surface with a K\"{a}hler metric $g$, $E$ a principal $SU(2)$-bundle over $X$. There are positive constants  $\varepsilon=\varepsilon(E,g)\in(0,1)$ and $\la=\la(E,g)\in(0,\infty)$ with following significance. If $A\in\mathcal{A}_{E}^{1,1}$ obeys 
\begin{equation}\nonumber
\begin{split}
&\|\La_{\w}F_{A}\|_{L^{2}(X)}\leq\varepsilon,\\
&\la(A)\geq\la,\\
\end{split}
\end{equation}
then there is a section $s\in\Om^{0}(X,adE)$ such that the connection $A_{\infty}:=A+\sqrt{-1}(\pa_{A}s-\bar{\pa}_{A}s)$ satisfies\\
(1) $\La_{\w}F_{A_{\infty}}=0$\\
(2) $\|s\|_{L^{2}_{2}(X)}\leq C\|\La_{\w}F_{A}\|_{L^{2}(X)}$,\\
(3) $\|F^{0,2}_{A_{\infty}}\|_{L^{2}(X)}\leq C\|\La_{\w}F_{A}\|^{2}_{L^{2}(X)}$.\\
where $C=C(\la,g)\in[1,\infty)$ is a positive constant.
\end{thm}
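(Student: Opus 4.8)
The plan is to solve the semilinear elliptic equation (\ref{E2}) for $s$ by a contraction-mapping (Banach fixed-point) argument, treating $\La_{\w}F_{A}$ as small data and using the spectral gap $\la(A)\geq\la$ to invert the linear part. Writing $L:=d_{A}^{\ast}d_{A}$ on $\Om^{0}(X,adE)$, the hypothesis $\la(A)\geq\la>0$ together with Definition \ref{D2} shows that $L$ has trivial kernel and lowest eigenvalue at least $\la$, hence is invertible with $\|L^{-1}\|_{L^{2}\to L^{2}}\leq 1/\la$; combined with the interior elliptic estimate for the Laplace-type operator $L$ this gives a bound $\|L^{-1}h\|_{L^{2}_{2}(X)}\leq C_{1}\|h\|_{L^{2}(X)}$ with $C_{1}=C_{1}(\la,g)$. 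Since $\La_{\w}(d_{A}s\wedge d_{A}s)=2B(s,s)$, equation (\ref{E2}) is equivalent to the fixed-point problem $s=T(s)$, where $T(s):=L^{-1}\big(\La_{\w}F_{A}+2B(s,s)\big)$.

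Next I would run the contraction on a small ball $\mathcal{B}_{\rho}:=\{s\in L^{2}_{2}:\|s\|_{L^{2}_{2}(X)}\leq\rho\}$ with $\rho:=2C_{1}\|\La_{\w}F_{A}\|_{L^{2}(X)}$. The pointwise bound $|B(u,v)|\leq C|\na_{A}u||\na_{A}v|$ and the Sobolev embedding $L^{2}_{1}(X)\hookrightarrow L^{4}(X)$ on the four-manifold $X$ give $\|B(s,s)\|_{L^{2}(X)}\leq C_{2}\|\na_{A}s\|_{L^{4}(X)}^{2}\leq C_{2}'\|s\|_{L^{2}_{2}(X)}^{2}$ and the Lipschitz bound $\|B(s_{1},s_{1})-B(s_{2},s_{2})\|_{L^{2}(X)}\leq C_{2}'(\|s_{1}\|_{L^{2}_{2}}+\|s_{2}\|_{L^{2}_{2}})\|s_{1}-s_{2}\|_{L^{2}_{2}}$. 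Hence $\|T(s)\|_{L^{2}_{2}}\leq C_{1}(\|\La_{\w}F_{A}\|_{L^{2}}+2C_{2}'\rho^{2})$ and $\|T(s_{1})-T(s_{2})\|_{L^{2}_{2}}\leq 4C_{1}C_{2}'\rho\,\|s_{1}-s_{2}\|_{L^{2}_{2}}$; choosing $\varepsilon=\varepsilon(E,g)$ small enough that $\rho\leq(8C_{1}C_{2}')^{-1}$ makes $T$ map $\mathcal{B}_{\rho}$ into itself and act as a contraction. The unique fixed point $s$ solves (\ref{E2}), so $A_{\infty}:=A+\sqrt{-1}(\pa_{A}s-\bar{\pa}_{A}s)$ obeys $\La_{\w}F_{A_{\infty}}=0$, proving (1), while $\|s\|_{L^{2}_{2}(X)}\leq\rho=2C_{1}\|\La_{\w}F_{A}\|_{L^{2}(X)}$ proves (2) with $C=2C_{1}$.

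For (3) I would compute $F_{A_{\infty}}^{0,2}$ directly. Writing $a=\sqrt{-1}(\pa_{A}s-\bar{\pa}_{A}s)$, so that $a^{0,1}=-\sqrt{-1}\,\bar{\pa}_{A}s$, the $(0,2)$-part of $F_{A_{\infty}}=F_{A}+d_{A}a+a\wedge a$ splits as $F_{A_{\infty}}^{0,2}=F_{A}^{0,2}+\bar{\pa}_{A}(a^{0,1})+(a\wedge a)^{0,2}$. The crucial point is that $A\in\mathcal{A}_{E}^{1,1}$ forces $F_{A}^{0,2}=0$, and therefore the term linear in $s$ also vanishes: $\bar{\pa}_{A}(a^{0,1})=-\sqrt{-1}\,\bar{\pa}_{A}^{2}s=-\sqrt{-1}[F_{A}^{0,2},s]=0$. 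Thus $F_{A_{\infty}}^{0,2}=(a\wedge a)^{0,2}$ is purely quadratic in $\bar{\pa}_{A}s$, so the pointwise bound $|F_{A_{\infty}}^{0,2}|\leq C|\na_{A}s|^{2}$, the embedding $L^{2}_{1}\hookrightarrow L^{4}$, and estimate (2) give $\|F_{A_{\infty}}^{0,2}\|_{L^{2}(X)}\leq C\|\na_{A}s\|_{L^{4}(X)}^{2}\leq C\|s\|_{L^{2}_{2}(X)}^{2}\leq C\|\La_{\w}F_{A}\|_{L^{2}(X)}^{2}$, which is (3).

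The main obstacle is making the constants uniform, i.e.\ arranging that both the elliptic bound $\|L^{-1}h\|_{L^{2}_{2}}\leq C_{1}\|h\|_{L^{2}}$ and the Sobolev multiplication constants depend only on $(\la,g)$ rather than on the particular connection $A$. The spectral gap controls $\|L^{-1}\|_{L^{2}\to L^{2}}$ cleanly, but the second-order elliptic estimate for $d_{A}^{\ast}d_{A}$ a priori involves $F_{A}$ through the Weitzenb\"ock/commutator terms; I would absorb this either by restricting to the family of approximate ASD connections with uniformly controlled curvature (as in Proposition \ref{P2}) or by bootstrapping the $L^{2}$-gap bound through the Bochner identity, and it is precisely here that the smallness threshold $\varepsilon(E,g)$ must be calibrated.
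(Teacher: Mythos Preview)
Your proposal is correct and follows essentially the same strategy as the paper: invert $d_{A}^{\ast}d_{A}$ using the spectral gap $\la(A)\geq\la$, control the quadratic term $B(s,s)$ via $L^{2}_{1}\hookrightarrow L^{4}$, and then observe that $F_{A_{\infty}}^{0,2}$ is purely quadratic in $\bar{\pa}_{A}s$ because $A\in\mathcal{A}_{E}^{1,1}$ kills the linear piece $\bar{\pa}_{A}^{2}s$. The only cosmetic difference is that the paper, following Taubes \cite{T4}, changes variables to $f=d_{A}^{\ast}d_{A}s$ and runs an explicit iteration $f_{k}=S(f_{k-1},f_{k-1})+\La_{\w}F_{A}$ (Lemma \ref{L2}, Lemma \ref{L3}, Proposition \ref{P3}) rather than invoking the Banach fixed-point theorem directly on $s$; these are equivalent packagings of the same contraction, and your version is arguably cleaner. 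Your closing remark about the uniformity of the elliptic constant is well taken: the paper handles this in Lemma \ref{L2} by citing the standard estimate $\|v\|_{L^{2}_{2}}\leq C_{0}(\|\na_{A}^{\ast}\na_{A}v\|_{L^{2}}+\|v\|_{L^{2}})$ from \cite{DK} and absorbing the zeroth-order term via the spectral gap, without dwelling on the $A$-dependence of $C_{0}$.
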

Now, we begin to prove Theorem \ref{T6}. The method of proof Theorem \ref{T6} is base on Taubes' ideas \cite{T4}.  At first, suppose that $s$ and $f$ are sections of $adE$ with
\begin{equation}\label{E8}
d^{\ast}_{A}d_{A}s=f,\  i.e.,\ \na_{A}^{\ast}\na_{A}s=f. 
\end{equation}
The first observation is
\begin{lem}\label{L2}
If $\la(A)\geq\la>0$, then there exists a unique $C^{\infty}$ solution to equation (\ref{E8}). Furthermore, we have
\begin{equation*}
\begin{split}
&\|s\|_{L^{2}_{2}(X)}\leq c\|f\|_{L^{2}(X)},\\
&\|B(s,s)\|_{L^{2}(X)}\leq c\|f\|^{2}_{L^{2}(X)},\\
\end{split}
\end{equation*}
where $c=c(\la,g)$ is a positive constant.
\end{lem}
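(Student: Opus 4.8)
The plan is to regard $d^{\ast}_{A}d_{A}=\na_{A}^{\ast}\na_{A}$ (these coincide on $\Om^{0}(X,adE)$, as recorded in (\ref{E8}), since $d_{A}$ restricted to $0$-forms is $\na_{A}$) as a non-negative, formally self-adjoint, second-order elliptic operator, and to exploit the spectral gap hypothesis $\la(A)\geq\la>0$ throughout. First I would settle existence and uniqueness. The hypothesis says precisely that $\inf_{v\neq0}\|d_{A}v\|^{2}/\|v\|^{2}\geq\la$, so $\ker(d^{\ast}_{A}d_{A})=\{0\}$; by formal self-adjointness the cokernel is isomorphic to this kernel and is therefore trivial as well. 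Elliptic Fredholm theory then shows that $d^{\ast}_{A}d_{A}\colon L^{2}_{k+2,A}(X,adE)\to L^{2}_{k,A}(X,adE)$ is an isomorphism for every $k\geq0$, so the smooth datum $f$ admits a unique solution $s=(d^{\ast}_{A}d_{A})^{-1}f$, and elliptic regularity upgrades $s$ to a $C^{\infty}$ section.

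Next I would prove the norm bounds. Pairing (\ref{E8}) with $s$ and integrating by parts gives
$$\la\|s\|^{2}_{L^{2}(X)}\leq\|d_{A}s\|^{2}_{L^{2}(X)}=\langle d^{\ast}_{A}d_{A}s,s\rangle_{L^{2}(X)}=\langle f,s\rangle_{L^{2}(X)}\leq\|f\|_{L^{2}(X)}\|s\|_{L^{2}(X)},$$
hence $\|s\|_{L^{2}(X)}\leq\la^{-1}\|f\|_{L^{2}(X)}$ and $\|\na_{A}s\|_{L^{2}(X)}\leq\la^{-1/2}\|f\|_{L^{2}(X)}$. To reach the full $L^{2}_{2}$ estimate I would invoke the second-order elliptic estimate for $\na_{A}^{\ast}\na_{A}$,
$$\|s\|_{L^{2}_{2,A}(X)}\leq c\big(\|\na_{A}^{\ast}\na_{A}s\|_{L^{2}(X)}+\|s\|_{L^{2}(X)}\big)=c\big(\|f\|_{L^{2}(X)}+\|s\|_{L^{2}(X)}\big),$$
and combine it with the preceding $L^{2}$ bound to conclude $\|s\|_{L^{2}_{2,A}(X)}\leq c\|f\|_{L^{2}(X)}$.

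For the quadratic term I would start from the pointwise estimate $|B(s,s)|\leq C|\na_{A}s|^{2}$ recorded just before Theorem \ref{T6}, so that $\|B(s,s)\|_{L^{2}(X)}\leq C\|\na_{A}s\|^{2}_{L^{4}(X)}$. The Kato inequality $|\na|\na_{A}s||\leq|\na_{A}^{2}s|$ together with the Sobolev embedding $L^{2}_{1}\hookrightarrow L^{4}$ on the four-manifold $X$ gives $\|\na_{A}s\|_{L^{4}(X)}\leq c\|s\|_{L^{2}_{2,A}(X)}$, and feeding in the $L^{2}_{2}$ bound yields $\|B(s,s)\|_{L^{2}(X)}\leq c\|f\|^{2}_{L^{2}(X)}$, as required.

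The main obstacle is to make the constant in the second-order elliptic estimate depend only on $\la$ and $g$, as the statement claims, rather than on the connection $A$. The difficulty is that passing from $\|\na_{A}^{\ast}\na_{A}s\|_{L^{2}}$ to $\|\na_{A}^{2}s\|_{L^{2}}$ requires commuting covariant derivatives, which introduces curvature terms of the schematic form $F_{A}\ast\na_{A}s$ and $(\na_{A}F_{A})\ast s$; controlling these uniformly forces a bound on $F_{A}$ in a suitable norm. Under the standing hypotheses of the ambient Theorem \ref{T6}, where $A\in\mathcal{A}^{1,1}_{E}$ obeys $\|\La_{\w}F_{A}\|_{L^{2}(X)}\leq\varepsilon$, the self-dual curvature is controlled by $\La_{\w}F_{A}$, and, using $F^{0,2}_{A}=0$ together with the Chern--Weil identity for $\int_{X}\mathrm{tr}(F_{A}\wedge F_{A})$, the full norm $\|F_{A}\|_{L^{2}(X)}$ is bounded in terms of $\varepsilon$ and $c_{2}(E)$. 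This is precisely what permits the curvature contributions to be absorbed into a constant $c(\la,g)$, and I would present the elliptic inequality with this uniform curvature control folded in.
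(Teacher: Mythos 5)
Your proposal is correct and follows essentially the same route as the paper: the spectral gap gives $\|s\|_{L^{2}(X)}\leq\la^{-1}\|f\|_{L^{2}(X)}$, which is fed into the second-order elliptic estimate for $\na_{A}^{\ast}\na_{A}$ to get $\|s\|_{L^{2}_{2}(X)}\leq c\|f\|_{L^{2}(X)}$, and the quadratic bound then follows from $|B(s,s)|\leq C|\na_{A}s|^{2}$ and the Sobolev embedding $L^{2}_{1}\hookrightarrow L^{4}$. You additionally supply the Fredholm existence/uniqueness argument and flag the dependence of the elliptic constant on the curvature of $A$, both of which the paper passes over in silence; these are welcome refinements rather than deviations.
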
 
\begin{proof}
Since $\na_{A}^{\ast}\na_{A}$ is an elliptic operator of order $2$, then for each $k\geq 0$, there is a positive constant $C_{k}$ so that for all section $v$ of $adE$, see \cite{DK} (A8),
\begin{equation*}
\begin{split}
\|v\|_{L^{2}_{k+2}(X)}&\leq C_{k}(\|\na^{\ast}_{A}\na_{A}v\|_{L^{2}_{k}(X)}+\|v\|_{L^{2}(X)})\\ 
&\leq C_{k}(\|\na^{\ast}_{A}\na_{A}v\|_{L^{2}_{k}(X)}+\la^{-1}\|\na_{A}^{\ast}\na_{A}v\|_{L^{2}(X)}).\\
\end{split}
\end{equation*}
We take $v=s$ and $k=0$, then
$$\|s\|_{L^{2}_{2}(X)}\leq c\|\na_{A}^{\ast}\na_{A}s\|_{L^{2}(X)}\leq c\|f\|_{L^{2}(X)},$$
where $c=c(\la,g)$ is a positive constant. By the Sobolev inequality in four dimension,
$$\|B(s,s)\|_{L^{2}(X)}\leq C\|\na_{A}s\|^{2}_{L^{4}(X)}\leq C\|\na_{A}s\|^{2}_{L^{2}_{1}(X)}\leq c\|s\|_{L^{2}_{2}(X)},$$
where $C$ is a positive constant. Hence we complete the proof of this lemma.
\end{proof}
\begin{lem}\label{L3}
If $d_{A}^{\ast}d_{A}s_{1}=f_{1}$, $d_{A}^{\ast}d_{A}s_{2}=f_{2}$, then
\begin{equation*}
\|B(s_{1},s_{2})\|_{L^{2}(X)}\leq c\|f_{1}\|_{L^{2}(X)}\|f_{2}\|_{L^{2}(X)}.
\end{equation*}
\end{lem}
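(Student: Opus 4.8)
The plan is to exploit the bilinearity of $B$ together with the elliptic estimate already established in Lemma \ref{L2}, treating this statement as the polarized version of the diagonal bound $\|B(s,s)\|_{L^{2}(X)}\leq c\|f\|^{2}_{L^{2}(X)}$ proved there (here $\la(A)\geq\la>0$ is understood to be in force from the ambient hypotheses, so that the elliptic estimate of Lemma \ref{L2} applies). First I would invoke the pointwise estimate $|B(s_{1},s_{2})|\leq C|\na_{A}s_{1}||\na_{A}s_{2}|$, recorded immediately before Theorem \ref{T6}, to reduce the claim to controlling the $L^{2}$-norm of the product $|\na_{A}s_{1}||\na_{A}s_{2}|$.

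Next I would apply H\"{o}lder's inequality with the conjugate exponents $(4,4)$ to split this product, obtaining $\||\na_{A}s_{1}||\na_{A}s_{2}|\|_{L^{2}(X)}\leq\|\na_{A}s_{1}\|_{L^{4}(X)}\|\na_{A}s_{2}\|_{L^{4}(X)}$. Since $X$ is four-dimensional, the Sobolev embedding $L^{2}_{1}\hookrightarrow L^{4}$ then gives $\|\na_{A}s_{i}\|_{L^{4}(X)}\leq C\|\na_{A}s_{i}\|_{L^{2}_{1}(X)}\leq C\|s_{i}\|_{L^{2}_{2}(X)}$ for $i=1,2$, exactly the chain of inequalities used to bound $\|B(s,s)\|_{L^{2}(X)}$ in Lemma \ref{L2}.

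Finally I would feed in the hypotheses $d_{A}^{\ast}d_{A}s_{i}=f_{i}$. Because $\na_{A}^{\ast}\na_{A}$ is elliptic of order two and $\la(A)\geq\la>0$, Lemma \ref{L2} furnishes $\|s_{i}\|_{L^{2}_{2}(X)}\leq c\|f_{i}\|_{L^{2}(X)}$ for each $i$. Concatenating the three displayed inequalities yields $\|B(s_{1},s_{2})\|_{L^{2}(X)}\leq c\|f_{1}\|_{L^{2}(X)}\|f_{2}\|_{L^{2}(X)}$, which is the assertion.

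I do not expect a genuine obstacle here: the argument is a routine H\"{o}lder-plus-Sobolev estimate and is essentially identical to the diagonal case $s_{1}=s_{2}=s$ already handled in Lemma \ref{L2}. The only point requiring mild care is that the constant depend only on $(\la,g)$ and not on the individual sections, which is automatic since the Sobolev embedding constant and the elliptic constant in Lemma \ref{L2} are uniform once $\la(A)\geq\la$ is assumed.
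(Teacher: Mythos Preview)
Your proposal is correct and matches the paper's intent: the paper states Lemma~\ref{L3} without proof immediately after Lemma~\ref{L2}, evidently regarding it as the straightforward bilinear version of the diagonal estimate just established. Your H\"{o}lder-plus-Sobolev argument, followed by the elliptic bound $\|s_{i}\|_{L^{2}_{2}(X)}\leq c\|f_{i}\|_{L^{2}(X)}$ from Lemma~\ref{L2}, is precisely that polarization.
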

We can prove the existence of a solution of (\ref{E2}) by the contraction mapping principle. We write $s=(d_{A}^{\ast}d_{A})^{-1}f$ and (\ref{E2}) becomes an equation for $f$ of the from 
\begin{equation}\label{E10}
f-S(f,f)=\La_{\w}F_{A},
\end{equation}
where  $S(f,g):=B((d_{A}^{\ast}d_{A})^{-1}f,(d_{A}^{\ast}d_{A})^{-1}g)$. By Lemma \ref{L3},
\begin{equation*}
\begin{split}
\|S(f_{1},f_{1})-S(f_{2},f_{2})\|_{L^{2}(X)}&=\|S(f_{1}+f_{2},f_{1}-f_{2})\|_{L^{2}(X)}\\
&\leq c\|f_{1}+f_{2}\|_{L^{2}(X)}\|f_{1}-f_{2}\|_{L^{2}(X)}.\\
\end{split}
\end{equation*}
We denote $g_{k}=f_{k}-f_{k-1}$ and $g_{1}=f_{1}$, then
$$g_{1}=\La_{\w}F_{A},\  g_{2}=S(g_{1},g_{1})$$
and $$g_{k}=S(\sum_{i=1}^{k-1}g_{i},\sum_{i=1}^{k-1}g_{i})-S(\sum_{i=1}^{k-2}g_{i},\sum_{i=1}^{k-2}g_{i}),\ \forall\ k\geq 3.$$
It is easy to show that, under the assumption of $\La_{\w}F_{A}$, the sequence $f_{k}$ defined by
$$f_{k}=S(f_{k-1},f_{k-1})+\La_{\w}F_{A},$$
starting with $f_{1}=\La_{\w}F_{A}$, is Cauchy with respect to $L^{2}$, and so converges to a limit $f$ in the completion of $\Ga(adE)$ under $L^{2}$.
\begin{prop}\label{P3}
There are positive constant $\varepsilon\in(0,1)$ and $C\in(1,\infty)$ with following significance. If the connection $A$ satisfies $$\|\La_{\w}F_{A}\|_{L^{2}(X)}\leq\varepsilon,$$
then each $g_{k}$ exists and is $C^{\infty}$. Further for each $k\geq1$, we have
\begin{equation}\label{E9}
\|g_{k}\|_{L^{2}(X)}\leq C^{k-1}\|\La_{\w}F_{A}\|^{k}_{L^{2}(X)}.
\end{equation}
\end{prop}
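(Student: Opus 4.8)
The plan is to argue by induction on $k$, using only the symmetric bilinear estimate $\|S(f_{1},f_{2})\|_{L^{2}(X)}\leq c\|f_{1}\|_{L^{2}(X)}\|f_{2}\|_{L^{2}(X)}$, which is just Lemma \ref{L3} rewritten (recall $S(f,g)=B((d_{A}^{\ast}d_{A})^{-1}f,(d_{A}^{\ast}d_{A})^{-1}g)$, and $B$ is symmetric in its two slots because for $\mathfrak{g}$-valued $1$-forms one has $[d_{A}u\wedge d_{A}v]=[d_{A}v\wedge d_{A}u]$). Write $b:=\|\La_{\w}F_{A}\|_{L^{2}(X)}$ and $\Sigma_{m}:=\sum_{i=1}^{m}g_{i}$. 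I would fix the constant $C:=4c$ at the outset and then choose $\varepsilon\in(0,1)$ small enough that $C\varepsilon\leq\tfrac12$; the sole purpose of this smallness is to keep the partial sums $\Sigma_{m}$ uniformly bounded throughout the induction.

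First I would settle the smoothness claim. Since $A$ is $C^{\infty}$, the term $g_{1}=\La_{\w}F_{A}$ is smooth, and $d_{A}^{\ast}d_{A}$ is an elliptic operator with smooth coefficients whose inverse, which exists because $\la(A)\geq\la>0$ (Lemma \ref{L2}), carries $C^{\infty}$ sections to $C^{\infty}$ sections. As each $g_{k}$ is built from $g_{1},\dots,g_{k-1}$ by finitely many applications of this inverse followed by the smooth pointwise bilinear operation $B$, elliptic regularity gives $g_{k}\in C^{\infty}$ by induction.

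For the quantitative bound (\ref{E9}) I would run a strong induction. The base cases are immediate: $k=1$ is the equality $\|g_{1}\|_{L^{2}(X)}=b$, while $k=2$ gives $\|g_{2}\|_{L^{2}(X)}=\|S(g_{1},g_{1})\|_{L^{2}(X)}\leq c b^{2}\leq C b^{2}$. For $k\geq3$ the key algebraic step is the \emph{polarization identity} $g_{k}=S(\Sigma_{k-1},\Sigma_{k-1})-S(\Sigma_{k-2},\Sigma_{k-2})=S(\Sigma_{k-1}+\Sigma_{k-2},\,g_{k-1})$, valid because $S$ is symmetric bilinear and $\Sigma_{k-1}-\Sigma_{k-2}=g_{k-1}$. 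Feeding this into the bilinear bound and then invoking the induction hypothesis $\|g_{i}\|_{L^{2}(X)}\leq C^{i-1}b^{i}$ for $i\leq k-1$, the partial sums are controlled by a geometric series, $\|\Sigma_{m}\|_{L^{2}(X)}\leq\sum_{i=1}^{m}C^{i-1}b^{i}\leq b\sum_{i\geq0}(Cb)^{i}\leq 2b$, where the last step uses $Cb\leq C\varepsilon\leq\tfrac12$. Consequently $\|g_{k}\|_{L^{2}(X)}\leq c\big{(}\|\Sigma_{k-1}\|_{L^{2}(X)}+\|\Sigma_{k-2}\|_{L^{2}(X)}\big{)}\|g_{k-1}\|_{L^{2}(X)}\leq 4cb\,\|g_{k-1}\|_{L^{2}(X)}\leq 4cb\cdot C^{k-2}b^{k-1}=C^{k-1}b^{k}$, where the final equality uses $4c=C$. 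This closes the induction.

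The only real subtlety, and the step I expect to need the most care, is the bookkeeping of the constant so that a single choice works for the base case and for every inductive step while simultaneously guaranteeing $Cb\leq\tfrac12$; this forces the coupled choices $C=4c$ and $\varepsilon\leq 1/(2C)$. It is precisely the uniform geometric-series bound $\|\Sigma_{m}\|_{L^{2}(X)}\leq 2b$ that turns the recursion $\|g_{k}\|_{L^{2}(X)}\leq 4cb\,\|g_{k-1}\|_{L^{2}(X)}$ into a self-improving estimate with the exact geometric rate $C^{k-1}b^{k}$, rather than a merely summable one, and everything else is routine once the constants are pinned down.
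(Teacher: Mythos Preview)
Your proof is correct and follows essentially the same route as the paper: induction on $k$, the polarization identity $g_{k}=S(\Sigma_{k-1}+\Sigma_{k-2},g_{k-1})$, and a geometric-series bound on the partial sums to close the recursion. Your version is in fact a bit more explicit than the paper's --- you pin down $C=4c$ and $\varepsilon\leq 1/(2C)$ rather than leaving the coupled choice implicit, and you spell out the smoothness inductively via elliptic regularity and the symmetry of $S$, both of which the paper passes over in silence.
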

\begin{proof}
The proof is by induction on the integer $k$. The induction begins with $k=1$, one can see $g_{1}=\La_{\w}F_{A}$. The induction proof if completed by demonstrating that if (\ref{E9}) is satisfied for $j<k$, then it also satisfied for $j=k$. Indeed, since 
\begin{equation}\nonumber
\begin{split}
&\quad\|S(\sum_{i=1}^{k-1}g_{i},\sum_{i=1}^{k-1}g_{i})-S(\sum_{i=1}^{k-2}g_{i},\sum_{i=1}^{k-2}g_{i})\|_{L^{2}(X)}\\
&\leq c\|\sum_{i=1}^{k-1}g_{i}+\sum_{i=1}^{k-2}g_{i}\|_{L^{2}(X)}\|g_{k-1}\|_{L^{2}(X)},\\
&\leq 2c\sum\|g_{i}\|_{L^{2}(X)}\|g_{k-1}\|_{L^{2}(X)},\\
&\leq \frac{2c}{1-C\|\La_{\w}F_{A}\|_{L^{2}(X)}}C^{k-2}\|\La_{w}F_{A}\|^{k}_{L^{2}(X)}.\\
\end{split}
\end{equation}
Now, we provide the constants $\varepsilon$ sufficiently small and $C$ sufficiently large to ensures that
$$\|\La_{\w}F_{A}\|_{L^{2}(X)}\leq C^{-2}(C-2c),$$
i.e., $\frac{2c}{1-C\|\La_{w}F_{A}\|_{L^{2}(X)}}\leq C$,  hence  we complete the proof of this proposition.
\end{proof}
\begin{proof}[\textbf{Proof of Theorem \ref{T6}}.] The sequence $g_{k}$ is Cauchy in $L^{2}$, the limit $$f:=\lim_{i\rightarrow\infty}f_{i}$$ is a solution to (\ref{E10}). Using Lemma \ref{L2} and Proposition \ref{P3}, we have
$$\|s\|_{L^{2}_{2}(X)}\leq C\|f\|_{L^{2}(X)}\leq\frac{\|\La_{\w}F_{A}\|_{L^{2}(X)}}{1-C\|\La_{\w}F_{A}\|_{L^{2}(X)}},$$
we provide $\varepsilon$ and $C$ to ensure $C\varepsilon\leq\frac{1}{2}$, hence 
$$\|s\|_{L^{2}_{2}(X)}\leq 2\|\La_{\w}F_{A}\|_{L^{2}(X)}.$$
We denote $A_{\infty}:=A+\sqrt{-1}(\pa_{A}s-\bar{\pa}_{A}s)$, then 
\begin{equation*}
\begin{split}
\|F^{0,2}_{A_{\infty}}\|_{L^{2}(X)}&=\|-\sqrt{-1}\bar{\pa}_{A}\bar{\pa}_{A}s-\bar{\pa}_{A}s\wedge\bar{\pa}_{A}s\|_{L^{2}(X)}\\
&=\|\bar{\pa}_{A}s\wedge\bar{\pa}_{A}s\|_{L^{2}(X)}\leq 2\|\bar{\pa}_{A}s\|^{2}_{L^{4}(X)}\\
&\leq c\|\na_{A}s\|^{2}_{L^{4}(X)}\leq c\|\na_{A}s\|^{2}_{L^{2}_{1}(X)}\\
&\leq c\|\La_{\w}F_{A}\|^{2}_{L^{2}(X)}.\\
\end{split}
\end{equation*}
where $c$ is a positive constant. We complete the proof of Theorem \ref{T6}.
\end{proof}
\subsection{A topological property for the moduli space of stable Higgs bundles}
In this section, we use the inequality in Theorem \ref{T6} to prove that the Higgs filed has a positive lower bounded if the K\"{a}hler metric $g$ is $c$-$generic$. Following Theorem \ref{T2.6}, we have a vanishing result  of Higgs fields as follows.
\begin{cor}
Let $X$ be a closed K\"{a}hler surface with a smooth K\"{a}hler metric $g$,  $E$ be a principal $SU(2)$-bundle with $c_{2}(E)=c$ positive. Suppose that $g$ is a $c-generic$ metric in the sense of Definition \ref{D1}. If the Higgs pair $(A,\theta)$ is a smooth solution of the equations: $\La_{\w}F_{A}=0,\ [\theta\wedge\theta^{\ast}]=0$, then the Higgs field $\theta$ vanishes.
\end{cor}
\begin{proof}[\textbf{Proof Theorem \ref{T1}}.] Let  $(A,\theta)\in\mathcal{A}^{1,1}_{E}\times\Om^{1,0}(X,\mathfrak{g}_{E})$ be a smooth solution of Hitchin$\en$Simpson equations. We denote $\phi=\sqrt{-1}(\theta-\theta^{\ast})\in\Om^{1}(X,adE)$. Suppose that the constant $C$ does not exist. We may provide a positive constant $\varepsilon_{0}$ such that 
$$0<\|\La_{\w}F_{A}\|_{L^{2}(X)}\leq c\|\theta\|^{2}_{L^{2}(X)}\leq c\varepsilon_{0}<\varepsilon$$
where $c=c(g)$ is a positive constant and $\varepsilon$ is a constant as Proposition \ref{P2}. Thus $\la(A)>0$. Following Theorem \ref{T6}, there exist a connection $A_{\infty}$ which satisfies $\La_{w}F_{A_{\infty}}=0$ and 
$$\|A-A_{\infty}\|_{L^{2}(X)}\leq c\|\La_{\w}F_{A}\|_{L^{2}(X)}$$
for some positive constant $c=c(g,P)$. We have two integrable inequalities:
$$\|\na_{A}\phi\|^{2}_{L^{2}(X)}+\langle Ric\circ\phi,\phi\rangle_{L^{2}(X)}+\|\La_{\w}F_{A}\|^{2}_{L^{2}(X)}=0,$$
$$\|\na_{A_{\infty}}\phi\|^{2}_{L^{2}(X)}+\langle Ric\circ\phi,\phi\rangle_{L^{2}(X)}+\langle F^{0,2}_{A_{\infty}}+F^{2,0}_{A_{\infty}},[\phi,\phi]\rangle_{L^{2}(X)}\geq0.$$
To obtain the second identity, we use the Weitzenb\"{o}ck formula  \cite{FU} Equation (6.25):
$$2d_{A_{\infty}}^{-,\ast}d_{A_{\infty}}^{-}+d_{A_{\infty}}d_{A_{\infty}}^{\ast}=\na^{\ast}_{A_{\infty}}\na_{A_{\infty}}+Ric(\cdot)+\ast[F_{A_{\infty}}^{+},\cdot]=0\ on\ \Om^{1}(X,\mathfrak{g}_{P}).$$
We also observe that $\langle F^{0,2}_{A_{\infty}}+F^{2,0}_{A_{\infty}},[\phi,\phi]\rangle_{L^{2}(X)}=0$ since $[\phi,\phi]\in\Om^{1,1}(X,adE)$.

Combining the preceding inequalities gives
\begin{equation}\nonumber
\begin{split}
0&\leq\|\na_{A_{\infty}}\phi\|^{2}_{L^{2}(X)}+\langle Ric\circ\phi,\phi\rangle_{L^{2}(X)}\\
&\leq\|\na_{A}\phi\|^{2}_{L^{2}(X)}+\langle Ric\circ\phi,\phi\rangle_{L^{2}(X)}+\|\na_{A}\phi-\na_{A_{\infty}}\phi\|_{L^{2}(X)}^{2}\\
&\leq(c\|\phi\|^{2}_{L^{2}(X)}-1)\|\La_{\w}F_{A}\|^{2}_{L^{2}(X)}.\\
\end{split}
\end{equation}
for a positive constant $c=c(g)$. If we provide $C=1/(2c)$ such that $\|\phi\|^{2}_{L^{2}(X)}\leq C$, then $\La_{\w}F_{A}\equiv0$. Its contradiction to our initial assumption regarding the $\La_{\w}F_{A}$. The preceding argument shows that the desired constant $C$ exists.

We also suppose $X$ is simply-connected, if the constant $\tilde{C}$ does not exist. We may choose a sequence $\{(A_{i},\theta_{i})\}_{i\in\N}$ of Hitchin$\en$Simpson equations such that
$\|\La_{\w}F_{A_{i}}\|_{L^{2}(X)}\rightarrow0$. We set $\phi_{i}:=\sqrt{-1}(\theta_{i}-\theta^{\ast}_{i})$.Following the idea of Theorem \ref{T1.2}, the Proposition \ref{P3.11} implies that there exist a subsequence $\Xi\in\N$ and a sequence transformation $\{g_{i}\}_{i\in\Xi}$ such that 
$(g^{\ast}(A_{i}),g^{\ast}(\phi_{i}))\rightarrow (A_{\infty},0)$ in $C^{\infty}$ over $X-\Sigma$, $\Sigma:=\{x_{1},\cdots,x_{L}\}$. We  also  have $\lim_{i\rightarrow\infty}\int_{X}|\phi_{i}|^{2}=0$. This contradict $\|\phi_{i}\|_{L^{2}(X)}$ has a uniform positive lower bounded. The preceding argument shows that the desired constant $\tilde{C}$ exists.
\end{proof}
\begin{cor}\label{C4.12}
Let $X$ be a compact, simply-connected, K\"{a}hler surface with a smooth K\"{a}hler metric $g$, $E$ be a principal $SU(2)$-bundle over $X$. Suppose that $(E,\theta)$ is a stable Higgs bundle with $c_{2}(E)=c$, then there is a positive integer $c$ with following significance. If $g$ is a $c$-$generic$ metric in the sense of Definition (\ref{D1}), then the moduli space of stable Higgs bundle is non-connected. 	
\end{cor}
\begin{proof}
Taubes \cite{Taubes1984} has shown that, if $X$ is an arbitrary closed $4$-manifold and $g$ is any Riemannian metric on $X$, $P$ be a $SU(2)$-bundle with $c_{2}(E)=c$ sufficiently large, then there exist an irreducible ASD connection on $P$. The moduli space  $M_{ASD}(E,g)$ is non-empty under our conditions. Since the map $(A,\theta)\mapsto\|\theta\|_{L^{2}}$ is continuous, then the moduli space $M_{HS}$ is not connected.
\end{proof}
Following Proposition \ref{P4.7} and Corollary \ref{C4.12}, we have
\begin{cor}
Let $S$ be a compact, simply-connected, algebraic surface with a Hodge metric $g$, $(E,\theta)$ be a Higgs bundle over $X$. There is an open dense subset $\mathcal{D}$ of the cone of ample divisor on $S$ and a positive integer $c$ with following significance. If the K\"{a}hler form of $g$ lies in $\mathcal{D}$ and $(E,\theta)$ is stable with $c_{2}(E)=c$, then the moduli space of stable Higgs bundle is non-connected.
\end{cor}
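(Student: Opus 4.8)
The plan is to combine Proposition \ref{P4.7} with Corollary \ref{C4.12}, taking care that the integer $c$ and the open dense set $\mathcal{D}$ are produced in the correct order so that no circularity arises. Since a Hodge metric is in particular a Kähler metric and $S$ is compact and simply-connected, the hypotheses of Corollary \ref{C4.12} will be met as soon as we arrange for $g$ to be $c$-generic for a suitable $c$.

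First I would fix the integer $c$. As in the proof of Corollary \ref{C4.12}, Taubes' existence result \cite{Taubes1984} guarantees that for $c$ sufficiently large there is an irreducible ASD connection on the $SU(2)$-bundle $E$ with $c_{2}(E)=c$, so that the ASD moduli space $M(E,g)$ is non-empty. I would select and fix such a large $c$; this is the positive integer asserted in the statement. The choice of $c$ depends only on $S$ and the topological type of $E$, not on the particular metric, which is what allows the subsequent step to go through.

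With $c$ now fixed, I would apply Proposition \ref{P4.7} to this value of $c$. This produces an open dense subset $\mathcal{D}$ of the cone of ample divisors on $S$ with the property that every Hodge metric whose Kähler form lies in $\mathcal{D}$ is $c$-generic in the sense of Definition \ref{D1}; in particular there are then no reducible ASD connections on any $SU(2)$-bundle with $c_{2}\le c$. Choosing $g$ to be such a metric places us squarely within the hypotheses of Corollary \ref{C4.12}: $S$ is a compact, simply-connected Kähler surface, $g$ is a $c$-generic Kähler metric, and $(E,\theta)$ is a stable Higgs bundle with $c_{2}(E)=c$.

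Corollary \ref{C4.12} then yields the conclusion directly: the moduli space $M_{HS}$ is non-connected. The mechanism, which I would recall only briefly, is that the locus $\{\theta=0\}$ is non-empty by the choice of $c$, while Theorem \ref{T1} forces $\|\theta\|_{L^{2}(X)}\ge C>0$ on every solution with $\La_{\w}F_{A}\neq 0$, so the continuous map $(A,\theta)\mapsto\|\theta\|_{L^{2}(X)}$ separates $M_{HS}$ into the ASD part and its complement with a gap. The only point requiring genuine care is the order of quantifiers just described: $c$ must be chosen first, large enough for Taubes' theorem to apply, and only afterwards is $\mathcal{D}$ extracted from Proposition \ref{P4.7} as a function of that $c$; reversing the order would be circular. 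This is the sole obstacle, and it is bookkeeping rather than analytic in nature.
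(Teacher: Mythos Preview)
Your proposal is correct and follows the same route as the paper, which simply records the corollary as an immediate consequence of Proposition~\ref{P4.7} and Corollary~\ref{C4.12}. Your version is more explicit about the order of quantifiers; one small remark is that the metric-independence of the threshold $c$ is more cleanly justified by the Friedman--Morgan bound $c\ge 2p_{g}(S)+2$ quoted in the paper just before Proposition~\ref{P4.7}, rather than by Taubes' theorem, whose constant may a priori depend on $g$.
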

\subsection{Vafa-Witten equaions}
We next consider analogue of the Hitchin$\en$Simpson equations, called Vafa$\en$Witten equations \cite{VW}. Let $X$ be a closed, oriented, smooth, four-manifold, $E$ be a principal $G$-bundle with $G$ being a compact Lie group over $X$. We denote by $A$ a connection on $E$, $B$ a section of the associated bundle $\Om^{2,+}\otimes adE$ and $C$ a section of $adE$. A triple $(A,B,C)$ is a solution of Vafa$\en$Witten equations if $(A,B,C)$ satisfies 
\begin{equation}\label{E6}
F_{A}^{+}+[B.B]+[B,\Ga]=0,\ d_{A}^{+,\ast}B+d_{A}\Ga=0,
\end{equation}  
where $[B.B]\in\Ga(X,\Om^{2,+}\otimes adE)$ is defined through the Lie brackets of $adE$ and $\Om^{2,+}$, see \cite[Appendix A]{BM} . We take $X$ to be a compact K\"{a}hler surface with K\"{a}hler form $\w$. In a local orthonomal coordinates, we can write $B=\b-\b^{\ast}+B_{0}\w$, where $\b\in\Om^{2,0}(X,adE)$, $B_{0}\in\Om^{0}(X,adE)$ and $\gamma=C-\sqrt{-1}B_{0}$, one can see \cite[Chapter 7]{BM}  for details. We have
\begin{thm}(\cite[Theorem 7.1.2]{BM} )
On a compact K\"{a}hler surface, the Vafa$\en$Witten equations (\ref{E6}) have the following form that asks $(A,\b,\gamma)\in\mathcal{A}_{E}^{1,1}\times\Om^{2,0}(X,adE)\times\Om^{0}(X,adE)$ to satisfies
\begin{equation}\label{E7}
\begin{split}
&\sqrt{-1}\w\wedge F_{A}+\frac{1}{2}[\b\wedge\b^{\ast}]=0,\\
&\bar{\pa}_{A}\b=d_{A}\gamma=0,\\
&[\gamma,\gamma^{\ast}]=[\gamma,\b+\b^{\ast}]=0.\\
\end{split}
\end{equation}
\end{thm}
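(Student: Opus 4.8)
The plan is to substitute the K\"ahler-adapted form of the fields into the two equations of (\ref{E6}) and then separate everything by Hodge type, using at the end a covariant-constancy argument to decouple the curvature from the algebraic commutators. The geometric input is the decomposition of the complexified self-dual two-forms on a K\"ahler surface,
\begin{equation*}
\Om^{2,+}\otimes\C=\Om^{2,0}\oplus\C\w\oplus\Om^{0,2},
\end{equation*}
together with the fact that the primitive $(1,1)$-forms make up exactly $\Om^{2,-}$. This forces an $adE$-valued self-dual two-form to take the shape $B=\b-\b^{\ast}+B_{0}\w$ with $\b\in\Om^{2,0}(X,adE)$ and $B_{0}\in\Om^{0}(X,adE)$, and it lets me regard the scalar field $\Ga=C$ and $B_{0}$ as the real and imaginary parts of $\gamma=C-\sqrt{-1}B_{0}$.

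First I would dispose of the linear equation $d^{+,\ast}_{A}B+d_{A}\Ga=0$, which lives in $\Om^{1}(X,adE)$. Writing $d^{+,\ast}_{A}B=d^{\ast}_{A}B$ for the self-dual form $B$ and invoking the K\"ahler identities to express $d^{\ast}_{A}$ through $\bar{\pa}_{A}$, $\bar{\pa}^{\ast}_{A}$ and $\La_{\w}$, I would project onto the $(0,1)$- and $(1,0)$-parts. The $(0,1)$-projection should collapse to $\bar{\pa}_{A}\b=0$, and the contribution of $d^{\ast}_{A}(B_{0}\w)$, after being rotated by the complex structure, combines with $d_{A}C$ precisely into $d_{A}\gamma=0$; this is the step that explains the bundling $\gamma=C-\sqrt{-1}B_{0}$. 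This already yields the middle line of (\ref{E7}) and shows that both $\gamma$ and $\gamma^{\ast}$ are $d_{A}$-parallel.

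Next I would expand the quadratic equation $F^{+}_{A}+[B.B]+[B,\Ga]=0$ in $\Om^{2,+}\otimes adE$ and read off its three type components from $F^{+}_{A}=F^{2,0}_{A}+F^{0,2}_{A}+\tfrac{1}{2}(\La_{\w}F_{A})\w$. Fixing a local K\"ahler frame in which $\w$ and the real and imaginary parts of a normalized $(2,0)$-form are an oriented orthonormal triple of $\Om^{2,+}$, the cross-product structure gives the $\w$-component of $[B.B]$ as the term bilinear in $\b$ and $\b^{\ast}$ and its $(2,0)$-component as a $[B_{0},\b]$ term, while $[B,\Ga]$ contributes $[B_{0},C]\w$ and $[\b,C]$ respectively. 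Collecting the $\w$-component produces $\sqrt{-1}\w\wedge F_{A}+\tfrac{1}{2}[\b\wedge\b^{\ast}]+c\,[\gamma,\gamma^{\ast}]=0$, and the $(2,0)$-component produces $F^{2,0}_{A}+c'[\gamma,\b]=0$ for universal constants $c,c'$.

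The hard part will be the final decoupling: showing that the commutator terms split off so that the curvature equations stand alone. Having established $d_{A}\gamma=0$ (hence $\gamma,\gamma^{\ast}$ covariantly constant) and $\bar{\pa}_{A}\b=0$, I would seek a Weitzenb\"ock-type identity that, after integration over the compact $X$, expresses a sum of the squared $L^{2}$-norms of $[\gamma,\gamma^{\ast}]$ and $[\gamma,\b+\b^{\ast}]$ through the equations already derived, forcing both commutators to vanish; these give the last line of (\ref{E7}). Feeding them back in, the $\w$-component collapses to $\sqrt{-1}\w\wedge F_{A}+\tfrac{1}{2}[\b\wedge\b^{\ast}]=0$ and the $(2,0)$-component to $F^{2,0}_{A}=0$, i.e. $A\in\mathcal{A}^{1,1}_{E}$. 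Apart from this vanishing argument, the only delicate bookkeeping is the cross-product algebra of $[B.B]$ and its normalizing constants; the type separation and the K\"ahler identities are routine.
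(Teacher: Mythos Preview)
The paper does not supply its own proof of this statement: it is quoted verbatim as a result of Mares, cited as \cite{BM} Theorem~7.1.2, and no argument is given beyond the preparatory remark that one writes $B=\b-\b^{\ast}+B_{0}\w$ and $\gamma=C-\sqrt{-1}B_{0}$. So there is nothing in the paper to compare your proposal against.

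That said, your outline is the standard route and is essentially the one taken in \cite{BM}: decompose $\Om^{2,+}$ by Hodge type, use the K\"ahler identities on $d^{+,\ast}_{A}B+d_{A}\Ga=0$ to extract $\bar{\pa}_{A}\b=0$ and $d_{A}\gamma=0$, then read off the type components of the curvature equation. Your identification of the ``hard part'' is accurate: the decoupling that forces $[\gamma,\gamma^{\ast}]=[\gamma,\b+\b^{\ast}]=0$ and $F^{0,2}_{A}=0$ does not fall out of pure linear algebra but requires an integration-by-parts argument on the compact manifold exploiting $d_{A}\gamma=0$. If you want to fill this in, the cleanest device is to pair the $(2,0)$-component equation with $[\gamma^{\ast},\b^{\ast}]$ and the $\w$-component equation with $[\gamma,\gamma^{\ast}]$, and then use covariant constancy of $\gamma$ together with the Bianchi identity and the Jacobi identity to reduce everything to a non-negative sum of $L^{2}$-norms equal to zero. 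The cross-product bookkeeping for $[B.B]$ is indeed tedious but routine once a local self-dual frame adapted to $\w$ is fixed.
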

Mares also discussed a relation between the existence of a solution to the equations and a stability of vector bundles, see \cite{BM,Tanaka2015}.
\begin{defi}
For any $\b\in\Om^{2,0}(X,End E)\cong\Om^{0}(Hom(E,E\otimes K))$, we say that a subbundle $E'\subset E$ is $\b$-invariant, if
$$\b(E')\subset E'\otimes K.$$
A holomorpic bundle $E$ is $\b$-stable (semi-stable) if all $\b$-invariant holomorphic subbundle $E'\subset E$ satisfies
$$\mu(E')=\frac{deg(E')}{rankE'}<(\leq)\mu(E)=\frac{deg(E)}{rankE}.$$
\end{defi}
We also recall a bound on $\|\b\|_{L^{\infty}}$ in terms of $\|\b\|_{L^{2}}$. One can see \cite{BM} for detail.
\begin{thm}
Let $X$ be a compact, K\"{a}hler surface with a smooth K\"{a}hler metric $g$,\ $E$ a  principal $G$-bundle over $X$ with $G$ being a compact Lie group. If the pair $(A,\b,\gamma)$ is a smooth solution of Equations (\ref{E7}), there is a positive constant $C=C(g)$ such that
\begin{equation}\nonumber
\|\b\|_{L^{\infty}(X)}\leq C\|\b\|_{L^{2}(X)}.
\end{equation}
\end{thm}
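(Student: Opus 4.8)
The plan is to follow exactly the strategy behind the Kapustin--Witten bound in Theorem \ref{T2.1} (cf. \cite{BM}): produce a pointwise Weitzenb\"ock differential inequality of the form $\Delta|\beta|^{2}\leq c\,|\beta|^{2}$, with $\Delta=d^{\ast}d$ the nonnegative Laplacian and $c=c(g)$ a constant depending only on the geometry of $(X,g)$, and then run a Moser iteration to upgrade the resulting $L^{1}$-control of $|\beta|^{2}$ to an $L^{\infty}$ bound. The heart of the matter is that the nonlinear self-interaction of $\beta$ enters with a favorable sign and may therefore be discarded, leaving a coefficient $c$ that is independent of the particular solution.

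First I would exploit that the equations (\ref{E7}) realize $\beta$ as a $\bar{\pa}_{A}$-closed section of $\text{End}(E)\otimes K$ on the K\"ahler surface. Applying the Bochner--Kodaira identity for $\bar{\pa}_{A}$-closed $(2,0)$-forms yields a Weitzenb\"ock formula of the shape
\begin{equation}\nonumber
\tfrac{1}{2}\Delta|\beta|^{2}+|\na_{A}\beta|^{2}=-\big\langle\sqrt{-1}\La_{\w}F_{W}\cdot\beta,\beta\big\rangle,
\end{equation}
where $\sqrt{-1}\La_{\w}F_{W}$ is the Hermitian endomorphism assembled from the curvature of $\text{End}(E)$, acting through the adjoint (commutator) action, and from the curvature of the canonical bundle $K$, which is the Ricci form of $g$.

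Next I would substitute the first equation of (\ref{E7}), which expresses $\sqrt{-1}\La_{\w}F_{A}$ as a negative multiple of the moment-map term $[\beta,\beta^{\ast}]$, into the $\text{End}(E)$-part of the curvature term. Note that $\gamma$ does not appear in that equation and so drops out entirely, while the remaining commutators involving $\gamma$ are controlled by $[\gamma,\beta+\beta^{\ast}]=0$. The substitution converts the $\text{End}(E)$-curvature contribution into a quartic self-interaction term of the form $\tfrac{1}{2}|[\beta,\beta^{\ast}]|^{2}\geq0$, which by positivity sits on the good side of the identity and can be thrown away, whereas the canonical-bundle contribution is bounded pointwise by $c(g)|\beta|^{2}$. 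Together with $|\na_{A}\beta|^{2}\geq0$ this gives the desired subsolution inequality
\begin{equation}\nonumber
\Delta|\beta|^{2}\leq c\,|\beta|^{2},\qquad c=c(g).
\end{equation}

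Finally, since $u:=|\beta|^{2}\geq0$ is a nonnegative weak subsolution of $\Delta u\leq c\,u$ on the compact manifold $X$, the De Giorgi--Nash--Moser local boundedness estimate (Moser iteration), with constant depending only on $(X,g)$, yields $\sup_{X}u\leq C(g)\,\|u\|_{L^{1}(X)}$, that is $\|\beta\|_{L^{\infty}(X)}^{2}\leq C\,\|\beta\|_{L^{2}(X)}^{2}$, which is the claim. I expect the main obstacle to be the algebraic bookkeeping of the second and third steps: verifying that the quartic moment-map term genuinely carries the positive sign that allows it to be discarded (so that $c$ is a fixed geometric constant rather than a solution-dependent one), and correctly isolating the Ricci contribution from $F_{W}$ while confirming that the $\gamma$-dependent commutators vanish.
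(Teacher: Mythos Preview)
Your proposal is correct and is exactly the argument the paper defers to \cite{BM} for: a Bochner--Kodaira/Weitzenb\"ock identity for the $\bar\pa_{A}$-closed $(2,0)$-form $\beta$, substitution of the moment-map equation $\sqrt{-1}\,\w\wedge F_{A}+\tfrac12[\beta\wedge\beta^{\ast}]=0$ so that the quartic self-interaction acquires the favorable sign and can be dropped, leaving $\Delta|\beta|^{2}\le c(g)|\beta|^{2}$, and then Moser iteration. The paper itself gives no independent proof of this statement, so there is nothing further to compare.
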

We extent the idea of Hitchin$\en$Simpson equations to the case of the Vafa$\en$Witten equations on the compact K\"{a}hler surface with a $c$-$generic$ K\"{a}hler metric.
\begin{thm}
Let $X$ be a compact, K\"{a}hler surface with a smooth K\"{a}hler metric $g$, $E$ be a principal $SU(2)$-bundle with $c_{2}(E)=c$ positive. Suppose that $g$ is a $c$-$generic$ metric in the sense of Definition (\ref{D1}), then there is a positive constant $C=C(g,P)$ with following significance. If the triple $(A,\b,\gamma)\in\mathcal{A}_{E}^{1,1}\times\Om^{2,0}(X,adE)\times\Om^{0}(X,adE)$ is satisfies  equations (\ref{E7}), then one of following must hold:\\
(1) $\La_{\w}F_{A}=0$, or\\
(2) the extra field $\b$ satisfies 
$$\|\b\|_{L^{2}(X)}\geq C.$$
\end{thm}
\begin{proof}
For $(A,\b,\gamma)$ is a solution of equations (\ref{E7}), we denote $\b_{0}:=\b-\b^{\ast}\in\Om^{2,+}(X,adE)$, hence $d_{A}^{+,\ast}\b_{0}=\ast(\bar{\pa}_{A}\b-\pa_{A}\bar{\b})=0$.
Suppose that the constant $C$ does not exist. We may provide a positive constant $\varepsilon_{0}$ such that 
$$0<\|\La_{\w}F_{A}\|_{L^{2}(X)}\leq c\|\b\|^{2}_{L^{2}(X)}\leq c\varepsilon_{0}<\varepsilon$$
where $c=c(g)$ is a positive constant and $\varepsilon$ is a constant as Proposition \ref{P2}. Thus $\la(A)>0$. Following Theorem \ref{T6}, there exist a connection $A_{\infty}$ such that
$$\|A-A_{\infty}\|_{L^{2}(X)}\leq c\|\La_{\w}F_{A}\|_{L^{2}(X)}$$
for a positive constant $c=c(g,P)$. By the Weizenb\"{o}ck formula (6.26) in \cite{FU}, we have two integrable identities:
$$\|\na_{A}\b_{0}\|^{2}_{L^{2}(X)}+\langle (\frac{1}{3}s-2\w^{+})\b_{0},\b_{0}\rangle_{L^{2}(X)}+\|\La_{\w}F_{A}\|^{2}_{L^{2}(X)}=0,
$$
$$\|\na_{A_{\infty}}\b_{0}\|^{2}_{L^{2}(X)}+\langle (\frac{1}{3}s-2\w^{+})\b_{0},\b_{0}\rangle_{L^{2}(X)}+
\langle F_{A_{\infty}}^{2,0}+F_{A_{\infty}}^{0,2},[\b_{0}.\b_{0}]\rangle\geq0.$$
We also observe that 
\begin{equation*}
\begin{split}
\langle F^{0,2}_{A}+F_{A}^{2,0},[\b_{0}.\b_{0}]\rangle_{L^{2}(X)}&\leq c\|F_{A}^{0,2}\|_{L^{2}(X)}\|\b_{0}\|^{2}_{L^{4}(X)}\\
&\leq c\|\La_{\w}F_{A}\|^{2}_{L^{2}(X)}\|\b_{0}\|^{2}_{L^{2}(X)},\\
\end{split}
\end{equation*}
for a positive constant $c=c(g)$. Combing the preceding inequalities gives
\begin{equation}\nonumber
\begin{split}
&0\leq\|\na_{A_{\infty}}\b_{0}\|^{2}_{L^{2}(X)}+\langle (\frac{1}{3}s-2\w^{+})\b_{0},\b_{0}\rangle_{L^{2}(X)}+
\langle F_{A_{\infty}}^{2,0}+F_{A_{\infty}}^{0,2},[\b_{0}.\b_{0}]\rangle\\
&\leq\|\na_{A}\b_{0}\|^{2}_{L^{2}(X)}+\langle (\frac{1}{3}s-2\w^{+})\b_{0},\b_{0}\rangle_{L^{2}(X)}\\
&+\|\na_{A}\b_{0}-\na_{A_{\infty}}\b_{0}\|_{L^{2}(X)}^{2}+c\|\La_{\w}F_{A}\|^{2}_{L^{2}(X)}\|\b_{0}\|^{2}_{L^{2}(X)}\\
&\leq\|[A-A_{\infty},\b_{0}]\|^{2}_{L^{2}(X)}+c\|F^{0,2}_{A}\|_{L^{2}(X)}\|\b_{0}\|^{2}_{L^{\infty}(X)}-\|\La_{\w}F_{A}\|^{2}_{L^{2}(X)}\\
&\leq(c\|\b_{0}\|^{2}_{L^{2}(X)}-1)\|\La_{\w}F_{A}\|^{2}_{L^{2}(X)}.\\
\end{split}
\end{equation}
for a positive constant $c=c(g)$. We provide  $\|\b_{0}\|^{2}_{L^{2}(X)}\leq 1/(2c)$, thus $\La_{\w}F_{A}=0$. This contradicting our initial assumption regarding the $\La_{\w}F_{A}$. The preceding argument shows that the desired constant $C$ exists.
\end{proof}
We denote by  
$$M_{VW}:=\{(A,\b,\gamma)\in\mathcal{A}_{E}^{1,1}\times\Om^{2,0}(adE)\times\Om^{0}(adE)\mid(A,\b,\gamma)\ satisfies\ (\ref{E7})\}/\mathcal{G}_{E}$$
the moduli space of solutions of Vafa$\en$Witten equations. We apply a vanishing theorem of extra fields of Vafa$\en$Witten equations due to Mare \cite[Theorem 4.2.]{BM} 1 to prove that
\begin{thm}
Let $X$ be a simply-connected, K\"{a}hler surface with a smooth K\"{a}hler metric $g$, $E$ be a principal $SU(2)$ or $SO(3)$-bundle. Let the triple $(A,\b,\gamma)$ be a solution of the $decoupled$ Vafa$\en$Witten equations:
\begin{equation*}
\begin{split}
&\La_{\w}F_{A}=0,\\
&\bar{\pa}_{A}\b=[\b\wedge\b^{\ast}]=0,\\
&d_{A}\gamma=[\gamma,\gamma^{\ast}]=[\gamma,\b+\b^{\ast}]=0.\\
\end{split}
\end{equation*}
If $A$ is an irreducible connection, then $\b=\gamma=0$.
\end{thm}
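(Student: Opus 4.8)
The plan is to treat the two extra fields separately, following the template of Theorem \ref{T2.6}. First I would record that $A$ is in fact anti-self-dual: since $A\in\mathcal{A}_E^{1,1}$ forces $F_A^{2,0}=F_A^{0,2}=0$ and the equation $\La_{\w}F_A=0$ kills the remaining component of $F_A^{+}$, we have $F_A^{+}=0$. Hence $A$ is an irreducible ASD connection on the simply-connected $X$, and Lemma \ref{L1} applies: the restriction of $A$ to every non-empty open subset of $X$ is again irreducible. The field $\gamma$ is then disposed of at once, for the equation $d_A\gamma=0$ exhibits $\gamma$ as a covariantly constant section of $adE$, and irreducibility ($\ker d_A|_{\Om^{0}(X,adE)}=0$) forces $\gamma=0$.

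For $\b$ the idea is to pass to the real self-dual form $\b_0:=\b-\b^{\ast}\in\Om^{2,+}(X,adE)$ and to mimic the rank-one argument of Theorem \ref{T2.6}. I would first check that $\b_0$ is $d_A$-harmonic: $\bar{\pa}_{A}\b=0$ is one of the equations, while $\pa_A\b\in\Om^{3,0}(X,adE)=0$ for dimension reasons, so $d_A\b=0$; conjugating gives $d_A\b^{\ast}=0$, whence $d_A\b_0=0$, and self-duality then yields $d_A^{\ast}\b_0=-\ast d_A\ast\b_0=-\ast d_A\b_0=0$. Thus $(d_A+d_A^{\ast})\b_0=0$, and the Aronszajn unique continuation principle shows that the zero locus $Z$ of $\b_0$ (which coincides with that of $\b$, the two living in complementary bidegrees) is either all of $X$---in which case $\b=0$ and we are done---or has open dense complement $Z^{c}$.

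On $Z^{c}$ I would extract a rank-one structure. Because $X$ is a surface, $\Om^{2,0}$ is a line bundle, so locally $\b=\xi_{\b}\otimes\mu$ with $\mu$ a nowhere-zero $(2,0)$-form and $\xi_{\b}$ a section of $adE\otimes\C$. The constraint $[\b\wedge\b^{\ast}]=0$ becomes $[\xi_{\b},\xi_{\b}^{\ast}]=0$, i.e. $\xi_{\b}$ is a normal, trace-free endomorphism; such an element is unitarily diagonalizable as $\xi_{\b}=\lambda H$ with $H$ a traceless Hermitian involution and $\lambda$ a nonvanishing function. Consequently $\b_0=H\otimes(\lambda\mu-\bar\lambda\bar\mu)=\xi\otimes\Om$, where $\xi:=\sqrt{-1}\,H\in adE$ is skew-Hermitian and $\Om$ is a real self-dual $2$-form, both nowhere zero on $Z^{c}$; after rescaling I may assume $|\xi|=1$. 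Writing $0=d_A\b_0=d_A\xi\wedge\Om+\xi\otimes d\Om$ and pairing with $\xi$ (using $\langle\xi,d_A\xi\rangle=\tfrac12 d|\xi|^{2}=0$) gives $d\Om=0$ and hence $d_A\xi\wedge\Om=0$. Since any nonzero self-dual $2$-form satisfies $\Om\wedge\Om=|\Om|^{2}\,dvol_{g}>0$, it is non-degenerate, so wedging with $\Om$ is injective on $1$-forms; therefore $d_A\xi=0$ on $Z^{c}$. This makes $A$ reducible on the open set $Z^{c}$, contradicting Lemma \ref{L1} unless $Z^{c}=\varnothing$. Hence $\b\equiv0$.

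The main obstacle is this middle step: producing the genuine rank-one decomposition $\b_0=\xi\otimes\Om$ with a single real Lie-algebra direction $\xi$ and a nowhere-vanishing self-dual form $\Om$. The naive transcription of Theorem \ref{T2.6} fails because on a complex surface a $(1,0)$-form wedged with a $(2,0)$-form vanishes automatically, so the wedge trick cannot constrain $\pa_A\xi$ directly; it is precisely the bracket equation $[\b\wedge\b^{\ast}]=0$, forcing normality of $\xi_{\b}$, together with the non-degeneracy of self-dual $2$-forms in dimension four, that rescues the argument. I would take some care to verify that the decomposition and the normalization $|\xi|=1$ are valid on all of $Z^{c}$ and that the pointwise linear algebra (normal $\Rightarrow\xi_{\b}=\lambda H$) holds uniformly for both $SU(2)$ and $SO(3)$.
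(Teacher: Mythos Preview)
Your argument is correct. The paper itself does not supply a proof of this theorem; it simply invokes Mares' vanishing theorem \cite{BM} Theorem 4.2.1. What you have written is essentially a reconstruction of that argument, following exactly the template of Theorem~\ref{T2.6}, so your approach and the paper's (cited) approach coincide.

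The two adaptations you single out are precisely the ones that matter. First, the rank-one structure: for a $1$-form $\phi$ it came from $\phi\wedge\phi=0$, whereas here $\Om^{2,0}$ is already a line bundle, and the bracket condition $[\b\wedge\b^{\ast}]=0$ forces the endomorphism part $\xi_{\b}$ to be normal, hence (since the Cartan of $\mathfrak{su}(2)\cong\mathfrak{so}(3)$ is one-dimensional) a complex scalar times a single real Lie-algebra direction. Writing $\xi_{\b}=A+iB$ with $A,B\in adE$ and $[A,B]=0$, the real line $\mathrm{span}_{\mathbb R}\{A,B\}$ is one-dimensional and varies smoothly on $Z^{c}$, so a smooth unit section $\xi$ exists locally; this handles the regularity concern you flag at the end. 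Second, the injectivity step: in Theorem~\ref{T2.6} one used both $d_{A}\phi=0$ and $d_{A}^{\ast}\phi=0$ to kill $d_{A}\xi$, while here a single equation $d_{A}\xi\wedge\Om=0$ suffices because any nonzero self-dual $2$-form in dimension four satisfies $\Om\wedge\Om=|\Om|^{2}\,dvol_{g}\neq 0$ and hence is symplectic, making $\a\mapsto\a\wedge\Om$ injective on $1$-forms. Both points are handled correctly in your write-up.
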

Following the idea of Hitchin$\en$Simpson equations, we also have
\begin{cor}
Let $X$ be a compact, simply-connected, K\"{a}hler surface with a smooth K\"{a}hler metric $g$, $E$ be a principal $SU(2)$-bundle.  Suppose that $(E,\b)$ is $\b$-stable bundle with $c_{2}(E)=c$, then there is a positive integer $c$ with following significance. If $g$ is a  $c$-$generic$ metric in the sense of Definition (\ref{D1}), then the moduli space of $\b$-stable bundle is non-connected. 	
\end{cor}
\section*{Acknowledgment}
I would like to thank the anonymous referee for  careful reading of my manuscript and helpful comments. I would like to thank Professor Clifford Taubes for helpful comments regarding his article \cite{T1}. This work is supported by Nature Science Foundation of China No. 11801539 and Postdoctoral Science Foundation of China No. 2017M621998, No. 2018T110616.

\bigskip
\footnotesize

\end{document}